\newcommand{\RN}[1]{%
  \textup{\uppercase\expandafter{\romannumeral#1}}%
}
\newcommand\COMP{\hbox{C\kern -.58em {\raise .54ex \hbox{$\scriptscriptstyle |$}}
\kern-.55em {\raise .53ex \hbox{$\scriptscriptstyle |$}} }}
\newcommand\NN{\hbox{I\kern-.2em\hbox{N}}}
\newcommand\RR{\hbox{I\kern-.2em\hbox{R}}}
\newcommand\sRR{{\it \hbox{I\kern-.2em\hbox{R}}}}
\newcommand\QQ{\hbox{I\kern-.53em\hbox{Q}}}
\newcommand\PP{\hbox{I\kern-.53em\hbox{P}}}
\newcommand\EE{\hbox{I\kern-.53em\hbox{E}}}
\newcommand\ZZ{{{\rm Z}\kern-.28em{\rm Z}}}
\newcommand\be{\begin{equation}}
\newcommand\ee{\end{equation}}
\newcommand\reallywidehat[1]{%
\savestack{\tmpbox}{\stretchto{%
  \scaleto{%
    \scalerel*[\widthof{\ensuremath{#1}}]{\kern.1pt\mathchar"0362\kern.1pt}%
    {\rule{0ex}{\textheight}}
  }{\textheight}%
}{2.4ex}}%
\stackon[-6.9pt]{#1}{\tmpbox}%
}
\newtheorem{theorem}{Theorem}[section]
\newtheorem{lemma}[theorem]{Lemma}
\newtheorem{definition}[theorem]{Definition}
\newcommand*\bigcdot{\mathpalette\bigcdot@{.5}}
\newcommand*\bigcdot@[2]{\mathbin{\vcenter{\hbox{\scalebox{#2}{$\m@th#1\bullet$}}}}}
\newcommand{\is}{\bigcdot }
\newcommand{\E}{\mathbb E}
\def \id{1\!\!1}
\def \Lbrack {[\![}
\def \Rbrack {]\!]}
\numberwithin{equation}{section}
\DeclareMathOperator*{\loc}{loc}
\begin{document}


\title{Log-optimal portfolio after a random time: Existence, description and sensitivity analysis}

\author{ Ferdoos Alharbi\\
Mathematical and Statistical Sciences Dept.\\
University of Alberta, Edmonton, Canada\\
and\\
Saudi Electronic University, Saudi Arabia\\
\and
  Tahir Choulli\\ 
  Mathematical and Statistical Sciences Dept.\\
University of Alberta, Edmonton, Canada}

\maketitle\unmarkedfntext{
This research is supported by NSERC (through grant NSERC RGPIN04987)\\
Address correspondence to Tahir Choulli, Department of Mathematical and Statistical Sciences, University of Alberta, 632 Central Academic Building,
Edmonton, Canada; e-mail: tchoulli@ualberta.ca}

\begin{abstract} In this paper, we consider an {\it informational} market model with two flows of informations. The smallest flow $\mathbb{F}$, which is available to all agents, is the filtration of the initial market model $(S, \mathbb{F},P)$, where $S$ is the assets' prices and $P$ is a probability measure. The largest flow $\mathbb{G}$ contains additional information about the occurrence of a random time $\tau$. This setting covers credit risk theory where $\tau$ models the default time of a firm, and life insurance where $\tau$ represents the death time of an insured. For the model $(S-S^{\tau},\mathbb{G},P)$, we address the log-optimal portfolio problem in many aspects. In particular, we answer the following questions and beyond: 1) What are the necessary and sufficient conditions for the existence of log-optimal portfolio of the model under consideration? 2) what are the various type of risks induced by $\tau$ that affect this portfolio and how? 3) What are the factors that completely describe the sensitivity of the log-portfolio to the parameters of $\tau$? The answers to these questions and other related discussions definitely complement the work of Choulli and Yansori \cite{ChoulliYansori2} which deals with the stopped model $(S^{\tau},\mathbb{G})$. 
\end{abstract}

\noindent{\bf Keywords:} Honest/random time,  Num\'eraire portfolio, Log-optimal portfolio, Logarithm utility, Entropy/Hellinger, Progressively enlarged filtration, Informational risk,  Correlation risk, Deflators.

\section{Introduction} In this paper, we consider an initial market model  represented by the triplet $(S,\mathbb F,P)$, where $S$ represents the discounted prices for $d$-stocks, $\mathbb F$ is the ``public" information that is available to most agents, and $P$ is a probability measure. To this initial model, we add a random time $\tau$ that might not be seen through $\mathbb F$ when it occurs (mathematically speaking $\tau$ might not be an $\mathbb F$-stopping time). In this context, we adopt the progressive enlargement of filtration to model the larger information that includes both $\mathbb F$ and $\tau$. For the obtained new informational system that lives after $\tau$, denoted by $(S-S^{\tau},\mathbb G,P)$, our ultimate goal lies in measuring the impact of $\tau$ on log-optimal portfolio,  no matter what is the pair $(S,\mathbb F)$ and no matter how it is related to $\tau$, which is an honest time with some mild assumption. In order to be more precise in our objective, we  recall the definition of log-optimal and num\'eraire portfolios, the two portfolios that are intimately related to the logarithm utility.  To this end, we denote by $W^{\theta}$ the wealth process of the portfolio $\theta$.
\begin{definition}\label{NP/LogOP}  Let $(X, \mathbb H, Q)$ be a market model, where $X$ is the assets' price process, $\mathbb H$ is a filtration, and $Q$ is a probability measure. Consider a fixed investment horizon $T\in(0,+\infty)$, and a portfolio $\theta^*$.\\
{\rm{(a)}} $\theta^*$ is a {\it num\'eraire portfolio} for $(X, \mathbb H, Q)$ if  $W^{\theta^*}>0$ and
\begin{eqnarray}\label{NP}
{{W^{\theta}}\over{W^{\theta^*}}}\ \mbox{is an $(\mathbb H, Q)$-supermartingale, for any portfolio $\theta$ with $W^{\theta}\geq 0$}.\hskip 0.5cm
\end{eqnarray}
{\rm{(b)}} $\theta^*$ is called a {\it log-optimal portfolio} for $(X, \mathbb H, Q)$ if $\theta^*\in \Theta(X,\mathbb H, Q)$ and 
\begin{eqnarray}
 u_T(X,\mathbb H, Q):=\sup_{\theta\in\Theta}E_Q\left[\ln(W^{\theta}_T)\right]= E_Q\left[\ln(W^{\theta^*}_T)\right],\label{LogInfinite}\end{eqnarray}
where $E_Q[.]$ is the expectation under $Q$, and $\Theta:=\Theta(X,\mathbb H, Q)$ is given by 
\begin{eqnarray}\label{AdmissibleSet0}
\hskip -0.6cm \Theta(X,\mathbb H, Q):=\Bigl\{\mbox{portfolio}\ \theta\ :\ W^{\theta}> 0\quad \mbox{and}\quad E_Q\left[\vert \ln(W^{\theta}_T)\vert \right]<+\infty\Bigr\}.\end{eqnarray}
\end{definition}

The problem of maximization of expected log utility from terminal wealth, defined in (\ref{LogInfinite})-(\ref{AdmissibleSet0}), received a lot of attention in the literature, even though it is a particular case of the utility maximization theory problem. This latter problem is addressed at various levels of generality, see  Cvitanic et al. \cite{CSW}, Karatzas and Wang \cite{Karatzas},  Karatzas  and Zitkovic \cite{KZ}, Kramkov and Schachermayer \cite{KW99}, Merton \cite{merton71,merton73}, and the references therein to cite a few. The num\'eraire portfolio was introduced --up to our knowledge-- in Long \cite{Long}, where $W^{\theta}/W^{\theta^*}$ is required to be a martingale, while  Definition \ref{NP/LogOP}-(a) goes back to Becherer \cite[Definition 4.1]{Becherer}. Then these works were extended and investigated extensively in different directions in Choulli et al. \cite{ChoulliDengMa}, Christensen and Larsen \cite{ChristensenLarsen2007}, G\"oll and Kallsen \cite{GollKallsen},  Hulley and Schweizer \cite{HulleySchweizer}, Kardaras and Karatzas \cite{KardarasKaratzas}, and the references therein. For more precise relationship between the num\'eraire and log-optimal portfolios, we refer the reader to \cite{ChoulliYansori1} and the references therein.\\

 Thus, our setting falls into the topic of the {\it portfolio problem under asymmetries of information}. The literature about this topic can be divided into two major cases. The first main case, which is the most studied in the literature, is known in the finance and mathematical finance literature as {\it the insider trading} setting. For this insider framework, log-optimal portfolios are extensively studied and we refer the reader to Amendinger et al. \cite{amendingerimkellerschweizer98}, Ankirchner et al. \cite{ADImkeller}, Ankirchner and Imkeller \cite{AImkeller}, Corcuera et al. \cite{JImkellerKN}, Grorud and Pontier \cite{GrorudPontier}, Pikovsky and Karatzas \cite{pikovskykaratzas96}, Kohatsu-Higa and Sulem \cite{kohatsusulem06}, and the references therein to cite a few. Most of this literature focuses on two intimately related questions on the log-optimal portfolio for $(S,{\mathbb G}^*,P)$, where ${\mathbb G}^*$ is the initial enlargement of $\mathbb F$ with a random variable $L$ that represents the extra knowledge. In fact, under some assumption on the pair $(L, \mathbb F)$, frequently called Jacod's assumption, the existence of the log-optimal portfolio and the evaluation of  the {\bf  increment of expected logarithm-utility from terminal wealth} (denoted hereafter by IEU$(S,{\mathbb G}^*, \mathbb F)$) for $(S,{\mathbb G}^*,P)$ and $(S,\mathbb F,P)$ represent the core contribution of these papers, where it is proven that  
\begin{equation}\label{InsiderFormula}
\mbox{IEU}(S,{\mathbb G}^*, \mathbb F):=u_T(S, {\mathbb G}^*,P)-u_T(S, \mathbb F,P)=\mbox{relative entropy}(P\big| Q^*).
\end{equation}
Hence, in this insider setting, the log-optimal portfolio for $(S,{\mathbb G}^*,P)$  exists if and only if $P$ has a finite entropy with respect to $Q^*$, which is an explicitly described probability measure associated to $L$. In particular, the quantity  $\mbox{IEU}(S,{\mathbb G}^*, \mathbb F)$ is always a true gain due to the advantage of knowing fully $L$ by the investor endowed with the flow $\mathbb G^*$. The formula (\ref{InsiderFormula}) was initially derived by Pikovsky and Karatzas \cite{pikovskykaratzas96} for the Brownian filtration, and Amendinger et al. \cite{amendingerimkellerschweizer98} extended it to models driven by general continuous local martingales, where the authors connect this formula with Shannon entropy of $L$ for some models. The Shannon concept was further studied by Ankirchner et al. \cite{ADImkeller} afterwards, where the authors show its important role in measuring the impact of inside-information on log-optimal portfolios. Other related and interesting works, we cite Corcuera et al. \cite{JImkellerKN} and Kahatsu-Higa and Yamazato \cite{Kohatsu2011}, and Ankirchner et al. \cite{ADImkeller}. In this latter paper, the authors consider arbitrary filtrations  $\mathbb F$ and $\mathbb G^*$ such that $\mathbb F\subset \mathbb G^*$, while assuming the continuity of $S$  and the {\it existence of a drift information condition} on the pair $(\mathbb F,\mathbb G^*)$. \\

The second major case, in contrast to the insider setting which uses the initial enlargement of $\mathbb{F}$, suggests to add the extra information over time as it occurs, and this leads to say that $\mathbb{G}$ is the progressive enlargement of filtration with $\tau$. This is our current framework, in which we complement \cite{ChoulliYansori1} that deals with the sub-model $(S^{\tau},\mathbb{G},P)$. On the one hand,  a log-optimal portfolio is a num\'eraire portfolio, see Choulli and Yansori \cite{ChoulliYansori1,ChoulliYansori2}. On the other hand, thanks to Choulli et al. \cite{ChoulliDengMa} and Kardaras and Karatzas \cite{KardarasKaratzas} that connect the existence of a num\'eraire portfolio to the concept of No-Unbounded-Profit-with-bounded-risk (NUPBR hereafter), and the recent works of Aksamit et al. \cite{aksamitetal18,ACDJ3} and Choulli and Deng \cite{CD1} on NUPBR for the model $(S-S^{\tau},\mathbb G,P)$, the problem of the existence of the num\'eraire portfolio for $(S-S^{\tau},\mathbb G,P)$ is completely understood. For the log-optimal portfolio of $(S-S^{\tau}, \mathbb{G},P)$, the situation is more challenging, and the problem of its existence is the first major obstacle. To address this, we appeal to the explicit description of the set of deflators of $(S-S^{\tau}, \mathbb{G},P)$, recently developed by Choulli and Yansori \cite{ChoulliAlharbi}, and answer the following question. 
\begin{equation}\label{Q2}
\mbox{For which $(S, \tau)$, does the log-optimal portfolio of $(S^{\tau}, \mathbb G,P)$ exist?}\end{equation}
It is worth mentioning that this existence question is much deeper and general than the corresponding one addressed in the insider setting. Indeed, in our framework, there is no hope for (\ref{InsiderFormula}) to hold in its current form, and only a practical answer to (\ref{Q2}) will allow us to answer the question below.
\begin{equation}\label{Q3}
\begin{cases}
\mbox{What are the {\it informational conditions} on $\tau$ }\\
\mbox{for the existence of the log-optimal portfolio of $(S-S^{\tau}, \mathbb G,P)$ }\\ 
 \mbox{when it already exists for $(S,\mathbb F,P)$?}\end{cases}
\end{equation}

For our  setting, the {\it increment of expected logarithmic-utility} between $(S-S^{\tau}, \mathbb G)$ and $(S,\mathbb F)$, denoted by IE$\mbox{U}_{\mbox{after}}(S, \tau, \mathbb F)$, is defined by 
\begin{equation}\label{Delta(S, Tau)}
\mbox{IEU}_{\mbox{after}}(S,\tau, \mathbb F):=\Delta_T(S, \tau, \mathbb F):=u_T(S-S^{\tau}, \mathbb G,P)-u_T(S, \mathbb F,P),\end{equation}
and is affected by many factors. Hence, we will address the question of 
\begin{equation}\label{Q5}
\mbox{what are the factors that explain the sensitivity of IEU$_{\mbox{after}}(S,\tau,\mathbb F)$ to $\tau$}?\end{equation}
Our definition for the utility increment, given in (\ref{Delta(S, Tau)}), stems from our main goal of measuring the impact of $\tau$ on the base model $(S,\mathbb F,P)$ via log utility. To answer (\ref{Q5}), we address the explicit computation of the log-optimal portfolio using $\mathbb F$-observable processes, and answer the following question.
\begin{equation}\label{Q4}
\mbox{How can the log-optimal portfolio of $(S-S^{\tau},\mathbb G,P)$ be described via $\mathbb F$?}\end{equation}
This paper contains four sections including the current one. Section  \ref{Section2} presents the mathematical and the financial model besides the corresponding required notation and some preliminaries that are important herein. Section \ref{Section3} focuses on the existence of the log-optimal portfolio and the duality. Section \ref{Section4} describes explicitly both the log-optimal portfolio using the $\mathbb F$-predictable characteristics of the model, and discusses its financial applications and consequences. The paper contains an appendix where some proofs are relegated and  some technical (new and existing) results are detailed. 
\section{The mathematical framework and preliminaries}\label{Section2}
Throughout the paper, we suppose given a complete probability space $(\Omega, {\cal{F}},P)$. Then any filtration $\mathbb{H}$ on this space will be supposed to satisfy the usual conditions (i.e. $\mathbb{H}$ is  right-continuous and complete).
\subsection{General notation}For any any filtration  $\mathbb{H}$ on this space, we denote ${\cal A}(\mathbb H)$ (respectively ${\cal M}(\mathbb H)$) the set
of $\mathbb H$-adapted processes with $\mathbb H$-integrable variation (respectively that are $\mathbb H$-uniformly integrable martingale).
For any process $X$, we denote by $^{o,\mathbb H}X$  (respectively $^{p,\mathbb H}X$)  the
$\mathbb H$-optional (respectively $\mathbb H$-predictable) projection of $X$. For an increasing process $V$, we denote $V^{o,\mathbb H}$ (respectively $V^{p,\mathbb H}$) its dual $\mathbb H$-optional (respectively $\mathbb H$-predictable) projection. For a filtration $\mathbb H$, ${\cal O}(\mathbb H)$, ${\cal P}(\mathbb H)$ and  $\mbox{Prog}(\mathbb H)$ represent the $\mathbb H$-optional, the $\mathbb H$-predictable and the $\mathbb H$-progressive $\sigma$-fields  respectively on $\Omega\times[0,+\infty[$. For an $\mathbb H$-semimartingale $X$, we denote by $L(X,\mathbb H)$ the set of all $X$-integrable processes in the Ito's sense, and for $H\in L(X,\mathbb H)$, the resulting integral is one dimensional $\mathbb H$-semimartingale denoted by $H\is X:=\int_0^{\cdot} H_udX_u$. If ${\cal C}(\mathbb H)$ 
is a set of processes that are adapted to $\mathbb H$,
then ${\cal C}_{\loc}(\mathbb H)$ --except when it is stated otherwise-- is the set of processes, $X$,
for which there exists a sequence of $\mathbb H$-stopping times,
$(T_n)_{n\geq 1}$, that increases to infinity and $X^{T_n}$ belongs to ${\cal C}(\mathbb H)$, for each $n\geq 1$. For any $\mathbb H$-semimartingale, $L$, the Doleans-Dade stochastic exponential denoted by ${\cal E}(L)$, is the unique solution to the SDE: $dX = X_{-} dL,$ $ X_0= 1,$ given by 
\begin{equation}\label{S-exponential}
{\cal E}_t (L) = \exp \big ( L_t- L_0 - {1 \over 2} {\langle L^c \rangle}_{t} \big ) \prod_{0 <  s \leq t} \big( 1 + \Delta L_s \big) e^{-\Delta L_s}.
\end{equation}
\subsection{The mathematical model}Our mathematical and financial model has two principal components. The first component, called throughout the rest of the paper as the initial market model is specified by the pair $(S,\mathbb{F})$. Here $\mathbb{F}$ is filtration representing the ``public" flow of information, which is available to all agents, and $S$ is a d-dimensional $\mathbb{F}$-semimartingale which models the price processes of $d$ risky assets.  The second main component is the random time $\tau$, which might represent the default time of firm in credit risk theory, or the death time of an insured in life insurance, or the occurrence time of an event that might impact the initial market model. The occurrence of $\tau$, in general, might not be observable through $\mathbb{F}$, or mathematically this translates to $\tau$ might not be an $\mathbb{F}$-stopping time, see the literature about credit risk or that of life insurance.  Hence, in order to model the flow of information for the whole model $(S,\mathbb{F}, \tau)$, we consider the following
\begin{equation}\label{GGtildem}
G_t := {^{o,\mathbb F}(I_{\Lbrack0,\tau\Lbrack})_t}=P(\tau > t | {\cal F}_t),\quad \widetilde{G}_t := {^{o,\mathbb F}(I_{\Lbrack0,\tau\Rbrack})}=P(\tau \ge t | {\cal F}_t),
\quad \mbox{ and } \quad \ m := G + D^{o,\mathbb F}.
\end{equation}
 The processes $G$ and $\widetilde G$ are known as Az\'ema supermartingale, while $m$ is a BMO $\mathbb F$-martingale. For more details about these, we refer the reader to  \cite[paragraph 74, Chapitre XX]{dellacheriemeyer92}. The pair $(\widetilde{G}, G)$ or equivalently the pair $(G, D^{o,\mathbb{F}})$ are the parametrization  of $\tau$ through the flow $\mathbb{F}$. For detailed discussion about  why we consider this progressive enlargement of $\mathbb{F}$ with $\tau$ instead, we refer the reader to \cite{ChoulliYansori2} and the references therein. \\
 
 Our resulting informational model, that we will investigate throughout the paper,  is $(S, \mathbb{G}, P)$. For this model, our
  main goal resides in studying the log-optimal portfolio deeply, and mainly answer the questions singled out in the introduction. As Choulli and Yansori \cite{ChoulliYansori2} already addressed fully the sub-model $(S^{\tau}, \mathbb{G}, P)$, and due to the myopic feature of the log utility, the main contribution of this paper lies in focusing on the sub-model $(S-S^{\tau}, \mathbb{G}, P)$. It is known in the probabilistic literature, that for general $\tau$, the process $S-S^{\tau}$ might not even be a $\mathbb{G}$-semimartingale, and hence in this case there is no chance for the log-optimal portfolio to exist. To avoid this technical problem and difficulty, which is not in the scope of this paper, we suppose throughout the paper that $\tau$ is an honest random time. This honest time concept is mathematically defined by the following.
\begin{definition}\label{honesttime}
 A random time $\tau$ is called an ${\mathbb F}$-honest time if, for any $t$, there exists an ${\mathbb{F}}_t$-measurable random variable $\tau_t$ such that $\tau \id_{\{\tau<t\}} = \tau_t \id_{\{\tau<t\}}.$
\end{definition}
\subsection{Some useful preliminaries}
This subsection recalls some useful results and/or some definitions from the literature. 
 \begin{theorem} \label{OptionalDecompoTheorem}  Suppose $\tau$ is an honest time. Then the following assertions hold.\\
 {\rm{(a)}} For any $\mathbb F$-local martingale $M$, the process
 \begin{eqnarray}\label{honestMhat}
{\cal T}^{(a)}(M):=I_{\Rbrack\tau,+\infty\Lbrack}\is{M}+{{I_{\Rbrack\tau,+\infty\Lbrack}}\over{1- G}}\is [m,M]+{{I_{\Rbrack\tau,+\infty\Lbrack}}\over{1-G_{-}}}\is \left(\sum \Delta M(1-G_{-}) I_{\{\widetilde G=1>G_{-}\}}\right)^{p,\mathbb F}
 \end{eqnarray}
 is a $\mathbb G$-local martingale.\\
 {\rm{(b)}} For any $M \in {\mathcal M}_{loc} (\mathbb F)$, the process 
\begin{equation}\label{Glocalmartingaleaftertau}
  \widetilde{M}^{(a)}:= I_{\Rbrack \tau, +\infty \Lbrack} \is M + {1 \over 1-G_{-}} I_{\Rbrack \tau, +\infty \Lbrack} \is \langle m, M\rangle^{\mathbb{F}}\quad\mbox{is a $\mathbb G$-local martingale.}
\end{equation}
 \end{theorem}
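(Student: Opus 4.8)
The plan is to recognise (a) and (b) as the ``after-$\tau$'' analogues of the Jeulin--Yor decomposition of an $\mathbb F$-local martingale in a progressively enlarged filtration, specialised to a honest time: (b) is the form written with the predictable bracket $\langle m,M\rangle^{\mathbb F}$ and (a) its optional-bracket refinement (convenient because the optional bracket $[m,M]$ is well defined for every $\mathbb F$-local martingale $M$). I would prove (b) first and then deduce (a) from it by computing the jump terms explicitly.

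For (b) I would proceed in three steps. \emph{Reduction:} choosing $\mathbb F$-stopping times $T_n\uparrow+\infty$ with $M^{T_n}\in{\mathcal M}(\mathbb F)$, each $T_n$ is a $\mathbb G$-stopping time, the process $I_{\Rbrack\tau,+\infty\Lbrack}$ (which equals $I_{\{\tau<t\}}$ at time $t$) is left-continuous and $\mathbb G$-adapted hence $\mathbb G$-predictable, and all three building blocks of $\widetilde M^{(a)}$ commute with stopping at $T_n$ (using that $m$ is a true $\mathbb F$-martingale, being BMO); thus $(\widetilde M^{(a)})^{T_n}=(\widetilde{M^{T_n}})^{(a)}$ and it suffices to treat a bounded, uniformly integrable $\mathbb F$-martingale $M$. \emph{It is enough to find the $\mathbb G$-drift of $M$:} for a honest $\tau$ every $\mathbb F$-semimartingale is a $\mathbb G$-semimartingale, so $M=M^{\mathbb G}+B^{\mathbb G}$ with $M^{\mathbb G}$ a $\mathbb G$-local martingale and $B^{\mathbb G}$ $\mathbb G$-predictable of finite variation; since $I_{\Rbrack\tau,+\infty\Lbrack}$ is bounded and $\mathbb G$-predictable, $\widetilde M^{(a)}$ will be a $\mathbb G$-local martingale as soon as $I_{\Rbrack\tau,+\infty\Lbrack}\is B^{\mathbb G}=-\tfrac{1}{1-G_{-}}I_{\Rbrack\tau,+\infty\Lbrack}\is\langle m,M\rangle^{\mathbb F}$. \emph{Identification of the drift} (the analytic heart of (b)): here one invokes the honest-time structure of $\mathbb G$ --- equivalently the fact that $\tau$ is the end of an $\mathbb F$-optional set --- which reduces $\mathbb G$-optional projections and $\mathbb G$-conditional expectations of $\mathbb F$-objects, on $\Rbrack\tau,+\infty\Lbrack$, to $\mathbb F$-computations in which the roles of $G$ and $G_{-}$ on $\Lbrack0,\tau\Lbrack$ are taken over by $1-G$ and $1-G_{-}$ on $\Rbrack\tau,+\infty\Lbrack$; testing $M-M^{\mathbb G}$ against bounded $\mathbb G$-predictable integrands carried by $\Rbrack\tau,+\infty\Lbrack$, and using $m=G+D^{o,\mathbb F}$ with an integration by parts, pins the $\mathbb G$-drift down to $-\tfrac{1}{1-G_{-}}I_{\Rbrack\tau,+\infty\Lbrack}\is\langle m,M\rangle^{\mathbb F}$, which is (b).

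For (a) I would start from $\widetilde M^{(a)}$, a $\mathbb G$-local martingale by (b), and show $\mathcal T^{(a)}(M)-\widetilde M^{(a)}$ is again one. Writing $[m,M]=\langle m,M\rangle^{\mathbb F}+L$ with $L$ a purely discontinuous $\mathbb F$-local martingale and splitting $\tfrac{1}{1-G}=\tfrac{1}{1-G_{-}}+\tfrac{\Delta G}{(1-G)(1-G_{-})}$ with $\Delta G=\Delta m-\Delta D^{o,\mathbb F}$, the continuous parts of the two bracket terms cancel (there $\langle m,M\rangle^{\mathbb F}$ versus $[m,M]$ and $1-G$ versus $1-G_{-}$ are interchangeable), and what is left is a stochastic integral of $\mathbb G$-optional integrands carried by $\Rbrack\tau,+\infty\Lbrack$ against $\mathbb F$-local martingales --- a $\mathbb G$-local martingale by the technique of (b) --- plus a $\mathbb G$-predictable finite-variation remainder gathering the jumps of $M$ at the $\mathbb F$-predictable times charged by $D^{o,\mathbb F}$. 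A last use of the honest-time projection identities identifies this remainder as $-\tfrac{I_{\Rbrack\tau,+\infty\Lbrack}}{1-G_{-}}\is\big(\sum\Delta M(1-G_{-})I_{\{\widetilde G=1>G_{-}\}}\big)^{p,\mathbb F}$, so that adding the third term of $\mathcal T^{(a)}(M)$ annihilates it; the set $\{\widetilde G=1>G_{-}\}$ appears because it is exactly the collection of $\mathbb F$-predictable ``accessible endpoints'' of the honest time $\tau$ (where $\widetilde G$ reaches $1$ while $G_{-}<1$), and it is the jumps of $M$ at those times that survive past $\tau$ and must be compensated.

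The hard part will be this last step: not the ``diffusive'' content of the decomposition --- on which $\langle m,M\rangle^{\mathbb F}$ versus $[m,M]$ and $1-G$ versus $1-G_{-}$ are immaterial --- but the careful accounting of the jumps of $M$ at the predictable times charged by $D^{o,\mathbb F}$, and above all at the accessible endpoints $\{\widetilde G=1>G_{-}\}$ of $\tau$; this is precisely the mechanism separating honest times from arbitrary random times, and the reason the optional bracket $[m,M]$ (rather than the predictable one) surfaces in (a). One must also keep the integrability of the finite-variation correction under control --- localising along $\mathbb F$-stopping times that simultaneously reduce $m$, $M$ and $\langle m,M\rangle^{\mathbb F}$ --- so that the qualifier ``$\mathbb G$-local martingale'' is legitimate throughout.
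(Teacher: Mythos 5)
The paper itself does not prove Theorem~\ref{OptionalDecompoTheorem}: it appears in Subsection~2.3, which the authors explicitly present as material ``recalled from the literature'' (Barlow, Jeulin, and in this precise formulation Aksamit--Choulli--Deng--Jeanblanc \cite{aksamitetal18}). There is therefore no proof in the paper against which to compare your argument, so what follows reviews your sketch on its own merits.

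Your reduction step and the overall three-step plan for (b) --- localise, use that an honest time preserves the semimartingale property, then pin down the $\mathbb G$-drift on $\Rbrack\tau,+\infty\Lbrack$ by testing against $\mathbb G$-predictable integrands and using $m=G+D^{o,\mathbb F}$ --- are sound and amount to Jeulin's classical computation. The part that does not hold up is your derivation of (a) from (b). After cancelling the continuous contributions and writing $[m,M]=\langle m,M\rangle^{\mathbb F}+L$, what remains in $\mathcal T^{(a)}(M)-\widetilde M^{(a)}$ contains the term $\tfrac{I_{\Rbrack\tau,+\infty\Lbrack}}{1-G}\is L$, and you dismiss it as ``a stochastic integral of $\mathbb G$-optional integrands against $\mathbb F$-local martingales --- a $\mathbb G$-local martingale by the technique of (b).'' That step is wrong as written: $\tfrac{1}{1-G}$ is $\mathbb G$-optional but \emph{not} $\mathbb G$-predictable, so the Stieltjes integral $\tfrac{I_{\Rbrack\tau,+\infty\Lbrack}}{1-G}\is L$ is merely a finite-variation $\mathbb G$-adapted process, and nothing in the ``technique of (b)'' (which concerns $\mathbb G$-predictable integrands like $\tfrac{I_{\Rbrack\tau,+\infty\Lbrack}}{1-G_-}$) makes it a local martingale. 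Indeed its $\mathbb G$-compensator is generically non-zero; the whole content of the passage from (b) to (a) is that this compensator, together with the one coming from the $\Delta G/((1-G)(1-G_-))$ term and the third correction term of $\mathcal T^{(a)}(M)$, sum to zero.

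The correct route for your last step is therefore not an appeal to (b) but a direct dual-predictable-projection computation on the finite-variation process $\mathcal T^{(a)}(M)-\widetilde M^{(a)}$, using the honest-time compensation identity recalled as Lemma~\ref{G2Fcompensator} (namely $\bigl(I_{\Rbrack\tau,+\infty\Lbrack}\is V\bigr)^{p,\mathbb G}=\tfrac{I_{\Rbrack\tau,+\infty\Lbrack}}{1-G_-}\is\bigl((1-\widetilde G)\is V\bigr)^{p,\mathbb F}$ for $\mathbb F$-adapted $V\in\mathcal A_{\mathrm{loc}}(\mathbb F)$) together with $1-\widetilde G=(1-G)-\Delta D^{o,\mathbb F}$ and $\Delta m=\widetilde G-G_-$. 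Carrying that out is exactly where the set $\{\widetilde G=1>G_-\}$ surfaces (there $\Delta D^{o,\mathbb F}=1-G$ and $\Delta m=1-G_-$), and it is also where one must check that the pieces supported on $\{\widetilde G<1\}$ compensate away. Your intuition that the jump accounting is the hard part is right, but the mechanism is a $\mathbb G$-compensator cancellation of finite-variation processes, not an invocation of (b) applied to a local martingale with an optional integrand. Finally, note that the more economical order in the literature is the reverse of yours: establish (a) first (the optional-bracket version, which is what the optional-projection machinery produces directly) and then obtain (b) by compensating the difference, which avoids having to prove (b) by an independent drift-identification argument.
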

 
We recall the mathematical definition of deflators, as these are the ``dual" processes to the wealth processes $W^{\theta}$, and hence play central role in solving the dual problem. 
 \begin{definition}\label{DeflatorDefinition} Consider the model $(X, \mathbb H, Q)$, where $\mathbb H$ is a filtration, $Q$ is a probability, and $X$ is a $(Q,\mathbb H)$-semimartingale. Let $Z$ be a process.\\
  We call $Z$ a local martingale deflator for $(X,Q,\mathbb H)$ if $Z>0$ and there exists a real-valued and $\mathbb H$-predictable process $\varphi$ such that $0<\varphi\leq 1$ and both $Z$  and $Z(\varphi\is X)$ are $\mathbb H$-local martingales under $Q$.  Throughout the paper, the set of these local martingale deflators will be denoted by ${\cal Z}_{loc}(X,Q,\mathbb H)$. The set of all deflators will be denoted by ${\cal D}(X,Q,\mathbb H)$. When $Q=P$, for the sake of simplicity, we simply omit the probability in notations and terminology.
\end{definition}
Thanks to \cite[Theorem 2.1]{ChoulliYansori1}, the log-optimal portfolio for a model $(X,\mathbb{H},P)$ is intimately related to the subset of ${\cal D}(X,\mathbb H)$ given by 
 \begin{equation}\label{logdeflatorset}
     {\cal D}_{log} (X, \mathbb H) := \left \{ Z \in {\cal D}( X, \mathbb H ) \quad \vert \quad  \sup_{t \geq 0 } E [-\ln (Z_t) ] < +\infty \right \}.
 \end{equation}
We end this subsection by borrowing an important result, from \cite{ChoulliAlharbi}, on explicit description of all deflators for the model $(S-S^{\tau},\mathbb G)$ in terms of deflators for the initial model $(S, \mathbb F)$. To this end, throughout the paper, we assume the following assumptions
\begin{equation}\label{Assumptions4Tau}
\tau\quad\mbox{is a finite honest time such that}\quad G_{\tau}<+\infty\quad P\mbox{-a.s. and}\quad \left\{\widetilde{G}=1>G_{-}\right\}=\emptyset.
\end{equation}
and we consider the following processes
\begin{equation}\label{m(a)}
m^{(1)} := -(1-G_{-})^{-1} I_{\{G_{-}<1\}}\is m\quad\mbox{and}\quad {S}^{(1)}:=I_{\{G_{-}<1\}}\is S.
\end{equation}

 \begin{theorem}\label{GeneralDefaltorDescription4afterTau}
 Suppose that assumptions (\ref{Assumptions4Tau}) hold, and let  $Z^{\mathbb G}$ be a process such that $(Z^{\mathbb G})^{\tau}\equiv 1$. Then the following assertions are equivalent.\\
{\rm{(a)}} $Z^{\mathbb G}$ is a deflator for $(S-S^{\tau}, \mathbb G)$  (i.e., $Z^{\mathbb G}\in {\cal D}(S-S^{\tau}, \mathbb G)$).\\
{\rm{(b)}} There exists a unique pair $\left(K^{\mathbb F}, V^{\mathbb F}\right)$ such that $K^{\mathbb F}\in {\cal M}_{loc}(\mathbb F)$,  $V^{\mathbb F}$ is an $\mathbb F$-predictable RCLL and nondecreasing process such that
\begin{eqnarray*}V^{\mathbb F}_0=K^{\mathbb F}_0=0,\quad {\cal E}(K^{\mathbb F}){\cal E}(-V^{\mathbb F})\in {\cal D}({S}^{(1)}, \mathbb F),\end{eqnarray*}
\begin{equation}\label{repKG1a}
Z^{\mathbb G}={\cal E}(K^{\mathbb G}){\cal E}(-I_{\Rbrack\tau,+\infty\Lbrack}\is{V}^{\mathbb F}),\ K^{\mathbb G}=I_{\Rbrack\tau,+\infty\Lbrack}\is {\cal T}^{(a)}(K^{\mathbb F})+(1-G_{-})^{-1}I_{\Rbrack\tau,+\infty\Lbrack}\is {\cal T}^{(a)}(m).\end{equation}
{\rm{(c)}} There exists a unique $Z^{\mathbb F}\in{\cal D}({S}^{(1)}, \mathbb F)$ such that  \begin{equation}\label{repKGMultiGEneral}
Z^{\mathbb G}={{Z^{\mathbb F}/(Z^{\mathbb F})^{\tau}}\over{{\cal E}(-I_{\Rbrack\tau,+\infty\Lbrack}(1-G_{-})^{-1}\is m)}}.\end{equation}
 \end{theorem}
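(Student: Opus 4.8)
The plan is to prove the chain of equivalences (a) $\Leftrightarrow$ (b) $\Leftrightarrow$ (c) by treating (b) as the structural heart of the statement and reading (c) off from (b) by a change-of-variables. First I would recall from \cite{ChoulliAlharbi} (or re-derive) the key principle underlying the description: since $(Z^{\mathbb G})^{\tau}\equiv 1$, the process $Z^{\mathbb G}$ is constant up to $\tau$ and genuinely ``lives after $\tau$'', so it must be expressible as a stochastic exponential of a $\mathbb G$-local martingale supported on $\Rbrack\tau,+\infty\Lbrack$ times a predictable finite-variation correction, and the only such $\mathbb G$-local martingales are, by Theorem \ref{OptionalDecompoTheorem}, the images under ${\cal T}^{(a)}(\cdot)$ (equivalently $\widetilde{\cdot}^{(a)}$, which coincide under the standing assumption $\{\widetilde G=1>G_-\}=\emptyset$) of $\mathbb F$-local martingales. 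This is the place to invoke the second assumption in (\ref{Assumptions4Tau}), which kills the third term in (\ref{honestMhat}) and makes ${\cal T}^{(a)}(M)=\widetilde M^{(a)}$; I would also use $G_\tau<+\infty$ and finiteness of $\tau$ to control the integrability of $(1-G_-)^{-1}I_{\Rbrack\tau,+\infty\Lbrack}$ on stochastic intervals.

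For (a) $\Rightarrow$ (b): starting from a deflator $Z^{\mathbb G}$, I would write $Z^{\mathbb G}={\cal E}(N^{\mathbb G})$ for a $\mathbb G$-local martingale $N^{\mathbb G}$ (using $Z^{\mathbb G}>0$, so no jump equals $-1$), with $(N^{\mathbb G})^\tau\equiv 0$. Decompose $N^{\mathbb G}$ as a $\mathbb G$-local martingale part plus the predictable compensator absorbed into $V^{\mathbb F}$; then use the representation of $\mathbb G$-local martingales after $\tau$ to pull back the martingale part to an $\mathbb F$-local martingale $K^{\mathbb F}$, picking up the compulsory $m$-correction term $(1-G_-)^{-1}I_{\Rbrack\tau,+\infty\Lbrack}\is{\cal T}^{(a)}(m)$ which is forced by the Girsanov-type drift that $S-S^{\tau}$ acquires in $\mathbb G$. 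The requirement that $Z^{\mathbb G}(\varphi^{\mathbb G}\is(S-S^{\tau}))$ be a $\mathbb G$-local martingale then translates, via the explicit $\mathbb G$-semimartingale decomposition of $S-S^{\tau}$ (whose drift is $(1-G_-)^{-1}I_{\Rbrack\tau,+\infty\Lbrack}\is\langle m,S\rangle^{\mathbb F}$), into the statement that ${\cal E}(K^{\mathbb F}){\cal E}(-V^{\mathbb F})$ is a deflator for ${S}^{(1)}=I_{\{G_-<1\}}\is S$ in $\mathbb F$ — the indicator $I_{\{G_-<1\}}$ appearing precisely because $\Rbrack\tau,+\infty\Lbrack\subset\{G_-<1\}$ up to an evanescent set for an honest time. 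The converse (b) $\Rightarrow$ (a) is the direct computation: given the pair, form $Z^{\mathbb G}$ by (\ref{repKG1a}) and verify both local-martingale properties using Theorem \ref{OptionalDecompoTheorem} and Yor's formula ${\cal E}(X){\cal E}(Y)={\cal E}(X+Y+[X,Y])$. Uniqueness of the pair follows from uniqueness of the multiplicative/additive decomposition of a special semimartingale together with injectivity of ${\cal T}^{(a)}$ on the relevant class.

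For (b) $\Leftrightarrow$ (c): this is essentially a repackaging. Given the pair $(K^{\mathbb F},V^{\mathbb F})$, set $Z^{\mathbb F}:={\cal E}(K^{\mathbb F}){\cal E}(-V^{\mathbb F})\in{\cal D}({S}^{(1)},\mathbb F)$; conversely any such $Z^{\mathbb F}$, being a positive $\mathbb F$-supermartingale-type process, admits a unique multiplicative decomposition into its $\mathbb F$-local-martingale exponential part ${\cal E}(K^{\mathbb F})$ and a decreasing part ${\cal E}(-V^{\mathbb F})$ with $K^{\mathbb F}\in{\cal M}_{loc}(\mathbb F)$ and $V^{\mathbb F}$ predictable nondecreasing. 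Then I would show that the right-hand side of (\ref{repKGMultiGEneral}) equals the right-hand side of (\ref{repKG1a}): by the cocycle property of stochastic exponentials, $Z^{\mathbb F}/(Z^{\mathbb F})^{\tau}$ on $\Rbrack\tau,+\infty\Lbrack$ equals ${\cal E}\big(I_{\Rbrack\tau,+\infty\Lbrack}\is K^{\mathbb F}\big){\cal E}\big(-I_{\Rbrack\tau,+\infty\Lbrack}\is V^{\mathbb F}\big)$ up to the drift absorbed in passing from $\mathbb F$ to $\mathbb G$, and dividing by ${\cal E}(-I_{\Rbrack\tau,+\infty\Lbrack}(1-G_-)^{-1}\is m)$ exactly supplies the term ${\cal T}^{(a)}(K^{\mathbb F})$ from ${\cal T}^{(a)}$ applied to $K^{\mathbb F}$ (comparing (\ref{honestMhat})/(\ref{Glocalmartingaleaftertau}) with Yor's formula) plus the $m$-term. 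This is bookkeeping with the exponential formula, so I would keep it brief.

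The main obstacle I expect is the direction (a) $\Rightarrow$ (b), specifically establishing that \emph{every} positive $\mathbb G$-deflator living after $\tau$ arises from a pullback $\mathbb F$-local martingale together with the canonical $m$-correction, and doing so with the correct integrability so that ${\cal T}^{(a)}(K^{\mathbb F})$ and $(1-G_-)^{-1}I_{\Rbrack\tau,+\infty\Lbrack}\is{\cal T}^{(a)}(m)$ are well-defined $\mathbb G$-local martingales. Here one must be careful that the compensator pulled out of $N^{\mathbb G}$ is genuinely $\mathbb F$-predictable (not merely $\mathbb G$-predictable) — this uses that $\mathbb G$-predictable processes on $\Rbrack\tau,+\infty\Lbrack$ agree with $\mathbb F$-predictable ones there, a standard fact for honest times — and that the deflator condition for $S-S^\tau$ under $\mathbb G$ is genuinely equivalent (not just sufficient) to the deflator condition for $S^{(1)}$ under $\mathbb F$, which is where assumptions (\ref{Assumptions4Tau}) do the real work by ruling out pathologies on $\{\widetilde G=1>G_-\}$ and at $\{G=0\}$. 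Everything else — Yor's formula, uniqueness of multiplicative decompositions, the converse implication — is routine once this correspondence is pinned down.
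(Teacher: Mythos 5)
You should first note that the paper does not prove this theorem: it is explicitly borrowed from Choulli and Alharbi \cite{ChoulliAlharbi} (the preceding paragraph announces ``We end this subsection by borrowing an important result, from \cite{ChoulliAlharbi}\dots'', and later uses of the result are cited as \cite[Theorem 4.4]{ChoulliAlharbi}). There is therefore no proof in the present source to compare against, and your sketch must be judged on its own.

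On its own terms your sketch has the right architecture -- multiplicative decomposition of $Z^{\mathbb G}$, pullback of the $\mathbb G$-local-martingale part to $\mathbb F$ via ${\cal T}^{(a)}$, Yor/cocycle bookkeeping between (b) and (c) -- but it contains two real gaps. First, a deflator is a supermartingale, not a local martingale, so you cannot write $Z^{\mathbb G}={\cal E}(N^{\mathbb G})$ ``for a $\mathbb G$-local martingale $N^{\mathbb G}$'' and then in the next sentence decompose $N^{\mathbb G}$ into a martingale part plus a compensator; the consistent starting point is the multiplicative decomposition $Z^{\mathbb G}={\cal E}(K^{\mathbb G})\,{\cal E}(-B^{\mathbb G})$ with $K^{\mathbb G}\in{\cal M}_{loc}(\mathbb G)$ and $B^{\mathbb G}$ predictable nondecreasing. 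Second, and more seriously, you attribute to Theorem \ref{OptionalDecompoTheorem} the claim that \emph{the only} $\mathbb G$-local martingales supported on $\Rbrack\tau,+\infty\Lbrack$ are (modulo the $m$-correction) images of $\mathbb F$-local martingales under ${\cal T}^{(a)}$. Theorem \ref{OptionalDecompoTheorem} gives only one direction, namely that ${\cal T}^{(a)}(M)$ is a $\mathbb G$-local martingale; the \emph{surjectivity} -- that every $K^{\mathbb G}\in{\cal M}_{loc}(\mathbb G)$ null on $\Lbrack 0,\tau\Rbrack$ has the form $I_{\Rbrack\tau,+\infty\Lbrack}\is{\cal T}^{(a)}(K^{\mathbb F})+(1-G_{-})^{-1}I_{\Rbrack\tau,+\infty\Lbrack}\is{\cal T}^{(a)}(m)$ -- is precisely the martingale representation theorem that is the substance of \cite{ChoulliAlharbi} and is nowhere stated or proved in this paper. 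You correctly flag (a)\,$\Rightarrow$\,(b) as the hard direction, but your plan asserts the needed representation rather than establishing it. By contrast, your (b)\,$\Leftrightarrow$\,(c) outline is sound: the cocycle identity $Z^{\mathbb F}/(Z^{\mathbb F})^{\tau}={\cal E}(I_{\Rbrack\tau,+\infty\Lbrack}\is K^{\mathbb F})\,{\cal E}(-I_{\Rbrack\tau,+\infty\Lbrack}\is V^{\mathbb F})$ is exact (not ``up to a drift''), and dividing by ${\cal E}(-I_{\Rbrack\tau,+\infty\Lbrack}(1-G_{-})^{-1}\is m)$ then reduces to the ratio-of-exponentials identity invoked as \cite[Lemma 4.3-(a)]{ChoulliAlharbi} inside the proof of Lemma \ref{F-supermartingale2G-supermartingale}; that identity, too, you would need to prove or cite explicitly rather than wave at.
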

 We end this section with the following definition of portfolio rate. Recall that a portfolio for the model $(X,\mathbb{H},Q)$ is any $\mathbb{H}$-predictable $\theta$ ($\mathbb{H}\in\{\mathbb{F},\mathbb{G}\}$), such that $\theta$ is $X$-integrable in the semimartingale sense, and the corresponding wealth process with initial value one is given by $W^{\theta}=1+\theta\is{X}$. 
 \begin{definition} Let $\theta$ be a portfolio for $(X,\mathbb{H},Q)$ such that both $W^{\theta}$ and $W^{\theta}_{-}$ are positive (i.e. $W^{\theta}>0$ and $W^{\theta}_{-}>0$ ). Then we associate to the portfolio $\theta$, the portfolio rate $\varphi$ given by
 $$\varphi:=\theta/W^{\theta}_{-}.$$
  It exists, it is $X$-integrable and $\varphi\in {\cal{L}}(X,\mathbb{H},Q)$, where
 \begin{equation}\label{L(X,H)}
  {\cal{L}}(X,\mathbb{H},Q):=\Bigl\{\varphi\ \mathbb{H}\mbox{-predictable}:\ \varphi\Delta{X}>-1\Bigr\}.
 \end{equation}
 When $Q=P$, we simply omit the probability in notation and write $ {\cal{L}}(X,\mathbb{H})$. 
 \end{definition}

 
 \section{Log-optimal portfolio for $(S-S^{\tau},\mathbb{G})$: Existence and duality}\label{Section3}
 In this section, we address the existence of the log-optimal portfolio after random time. Or equivalently, in virtue of \cite[Theorem 2.1]{ChoulliYansori1}, the existence of the solution to the dual minimization problem 
 \begin{equation}\label{dualaftertau}
 \min_{Z^{\mathbb G} \in {\cal D}_{log}(S-S^{\tau}, \mathbb G)} \E \left [ -\ln (Z^{\mathbb G}_T)\right].
 \end{equation}
The solution to this dual problem, when it exists, it will {\it naturally} involve the information theoretical concept(s) such as Hellinger or entropy that we recall below.
  \begin{definition}\label{hellinger}
 Consider a filtration $\mathbb H$, and let $N$ be an $\mathbb H$-local martingale such that $1+ \Delta N > 0 $, and denoted by $N^c$ its continuous local martingale part. \\
 1) We call a Hellinger process of order zero for $N$, denoted by $h^{(0)}(N, \mathbb H)$, the process $h^{(0)}(N,\mathbb H):= \left(H^{(0)}(N, \mathbb H) \right)^{p, \mathbb H}$ when this projection exists, where 
 \begin{align}
 H^{(0)} (N, \mathbb H) := {1 \over 2} \langle N^{c}\rangle^{\mathbb H} + \sum (\Delta N - \ln(1+\Delta N)). 
 \end{align}
2) We call an entropy-Hellinger process for $N$, denoted by $h^{(E)}(N, \mathbb{H})$, the process $h^{(E)}(N, \mathbb H):= (H^{(E)}(N,\mathbb H))^{p, \mathbb H}$ when this projection exists, where
 \begin{align}
     H^{(E)} (N, \mathbb H) := {1 \over 2} \langle N^{c}\rangle^{\mathbb H} + \sum ((1+\Delta N) \ln(1+\Delta N) - \Delta N). 
 \end{align}
 3) Let $Q^1$ and $Q^2$ be two probabilities such that $Q^1\ll Q^2$. If $Q^i_T:=Q^i\big|_{{\cal H}_T}$ denotes the restriction of $Q^i$ to ${\cal H}_T$  ($i=1,2$), then  
\begin{eqnarray}\label{entropy}
{\cal H}_{\mathbb H}(Q^1_T\big| Q^2_T):=E_{Q^2}\left[{{dQ^1_T}\over{dQ^2_T}}\ln\left({{dQ^1_T}\over{dQ^2_T}}\right)\right].\end{eqnarray}
 \end{definition}
The following theorem, which is our main contribution of this section,  characterizes completely and in various manners the existence of log-optimal portfolio for $(S-S^{\tau},\mathbb{G})$. 
 \begin{theorem}\label{existencetheorem}
 Suppose (\ref{Assumptions4Tau})  holds, and consider $m^{(1)}$ and ${S}^{(1)}$ defined in (\ref{m(a)}). Then the following assertions are equivalent.\\ {\rm{(a)}} Log-optimal portfolio for $(S-S^{\tau}, \mathbb G)$ exists, and $\widetilde{\varphi}^{\mathbb G}$ denotes its portfolio rate. \\
 {\rm{(b)}} There exists  $K^{\mathbb F} \in {\mathcal M}_{loc} (\mathbb F)$ and a RCLL, nondecreasing, and $\mathbb F$-predictable process $V$ such 
 \begin{equation}\label{Condition4Existence0}
 K^{\mathbb F}_0 = V_0 = 0,\quad Z^{\mathbb F} := {\cal E}(K^{\mathbb F})\exp (-V) \in {\cal D}({S}^{(1)}, \mathbb F)\end{equation}
 and
 \begin{equation}\label{existencecond1}
    E \left [ (1-\widetilde{G}) \is \left(H^{(0)} (K^{\mathbb F},\mathbb F)+V+h^{(E)}( m^{(1)},\mathbb F)+\langle K^{\mathbb F}, m^{(1)} \rangle^{\mathbb F}\right)_T\right]<+\infty.
 \end{equation}
 {\rm{(c)}} There exits a unique solution to (\ref{dualaftertau}). I.e. there exists unique $\widetilde{Z}^{\mathbb G} \in {\cal D}(S-S^{\tau}, \mathbb G)$ such that
 \begin{equation}
    \inf_{Z^{\mathbb G} \in {\cal D}(S-S^{\tau}, \mathbb G) } E \left[-\ln (Z^{\mathbb G}_T)\right ] =E \left [ -\ln (\widetilde{Z}_{T}^{\mathbb G})\right] < +\infty.
 \end{equation}
 {\rm(d)} There exists a unique $\widetilde{Z}^{\mathbb F} \in {\cal D}({S}^{(1)}, \mathbb F)$ such that 
 \begin{equation}
       \inf_{Z^{\mathbb G} \in {\cal D}(S-S^{\tau}, \mathbb G) } E \left[-\ln (Z^{\mathbb G}_T)\right ] = E \left [-\ln \dfrac{\widetilde{Z}_T^{\mathbb F}/\widetilde{Z}_{T \wedge \tau}^{\mathbb F}}{{\cal E}(I_{\Rbrack \tau, \infty \Lbrack} \is{m}^{(1)})}\right ].
 \end{equation}
 Furthermore, when the triplet $(\widetilde{\varphi}^{\mathbb G},\widetilde{Z}^{\mathbb G}, \widetilde{Z}^{\mathbb F})$ exists, then it satisfies the following, 
 \begin{equation}
     {\cal E}(\widetilde{\varphi}^{\mathbb G} \is (S-S^{\tau})) = \dfrac{1}{\widetilde{Z}^{\mathbb G}} =  \dfrac{(\widetilde{Z}^{\mathbb F})^{\tau} {\cal E}(I_{\Rbrack \tau , \infty \Lbrack} \is{m}^{(1)})} {\widetilde{Z}^{\mathbb F}}.
 \end{equation}
 \end{theorem}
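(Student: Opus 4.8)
The plan is to establish the chain of equivalences (a) $\Leftrightarrow$ (c) $\Leftrightarrow$ (d) $\Leftrightarrow$ (b) by routing everything through the dual problem (\ref{dualaftertau}) and the deflator description of Theorem \ref{GeneralDefaltorDescription4afterTau}. The equivalence (a) $\Leftrightarrow$ (c) is essentially a citation: by \cite[Theorem 2.1]{ChoulliYansori1} the log-optimal portfolio for $(S-S^{\tau},\mathbb G)$ exists iff the dual minimization over ${\cal D}_{log}(S-S^{\tau},\mathbb G)$ is attained, and the attained minimizer $\widetilde Z^{\mathbb G}$ is then related to the optimal wealth by ${\cal E}(\widetilde\varphi^{\mathbb G}\is(S-S^{\tau}))=1/\widetilde Z^{\mathbb G}$. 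So the bulk of the work is to make the dual problem tractable by parametrizing ${\cal D}(S-S^{\tau},\mathbb G)$ via $\mathbb F$-objects.

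Next I would carry out the key reduction: for any $Z^{\mathbb G}\in{\cal D}(S-S^{\tau},\mathbb G)$ one automatically has $(Z^{\mathbb G})^{\tau}\equiv 1$ (only the increments after $\tau$ matter, since $S-S^{\tau}$ is constant on $\Lbrack 0,\tau\Rbrack$), so Theorem \ref{GeneralDefaltorDescription4afterTau} applies and gives a bijection between such $Z^{\mathbb G}$ and pairs $(K^{\mathbb F},V^{\mathbb F})$ with ${\cal E}(K^{\mathbb F}){\cal E}(-V^{\mathbb F})\in{\cal D}(S^{(1)},\mathbb F)$, via (\ref{repKG1a}), equivalently with $Z^{\mathbb F}\in{\cal D}(S^{(1)},\mathbb F)$ via (\ref{repKGMultiGEneral}). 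The heart of the proof is then the identity
\begin{equation}\label{keyidentity}
E\left[-\ln(Z^{\mathbb G}_T)\right]=E\left[(1-\widetilde G)\is\left(H^{(0)}(K^{\mathbb F},\mathbb F)+V+h^{(E)}(m^{(1)},\mathbb F)+\langle K^{\mathbb F},m^{(1)}\rangle^{\mathbb F}\right)_T\right],
\end{equation}
valid up to the usual caveats about well-definedness. To get this, I would take the logarithm of the representation (\ref{repKGMultiGEneral}) — or equivalently of ${\cal E}(K^{\mathbb G}){\cal E}(-I_{\Rbrack\tau,\infty\Lbrack}\is V^{\mathbb F})$ — use the stochastic-exponential formula (\ref{S-exponential}) to write $-\ln Z^{\mathbb G}$ as a sum of a $\mathbb G$-local martingale part and a $\mathbb G$-predictable finite-variation part built from $H^{(0)}(K^{\mathbb G},\mathbb G)$ and $I_{\Rbrack\tau,\infty\Lbrack}\is V^{\mathbb F}$, then take $\mathbb G$-expectations so the martingale part drops, and finally push the $\mathbb G$-predictable compensator back to $\mathbb F$. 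The projection step is where Theorem \ref{OptionalDecompoTheorem} and the structure of ${\cal T}^{(a)}$ enter: one computes the $\mathbb F$-dual predictable projection of $I_{\Rbrack\tau,\infty\Lbrack}\is(\cdots)$, which produces the factor $(1-\widetilde G)$ (since $(I_{\Rbrack\tau,\infty\Lbrack})^{p,\mathbb F}=1-\widetilde G_{-}$ up to the negligible set ruled out in (\ref{Assumptions4Tau})), and the cross term $\langle K^{\mathbb F},m^{(1)}\rangle^{\mathbb F}$ together with the conversion of $h^{(0)}(m^{(1)})$-type terms into the entropy-Hellinger $h^{(E)}(m^{(1)},\mathbb F)$ comes from the extra $(1-G_{-})^{-1}\is{\cal T}^{(a)}(m)$ drift correction appearing in $K^{\mathbb G}$ in (\ref{repKG1a}). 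Once (\ref{keyidentity}) is in hand, finiteness of the left side over some admissible $Z^{\mathbb G}$ is exactly (\ref{existencecond1}), giving (b) $\Leftrightarrow$ (c); and (d) is just (c) rewritten through (\ref{repKGMultiGEneral}).

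Finally, for the concluding display I would argue existence and uniqueness of the minimizer: the map $Z^{\mathbb G}\mapsto E[-\ln Z^{\mathbb G}_T]$ is, after the parametrization above, a strictly convex functional of $Z^{\mathbb F}$ on the convex set ${\cal D}(S^{(1)},\mathbb F)$ (strict convexity of $-\ln$), so a minimizer, if it exists, is unique — and it exists precisely under (b) because the infimum is then finite and the standard closedness/lower-semicontinuity properties of ${\cal D}_{log}$ from \cite{ChoulliYansori1} apply. The identity ${\cal E}(\widetilde\varphi^{\mathbb G}\is(S-S^{\tau}))=1/\widetilde Z^{\mathbb G}$ is the optimality relation from \cite[Theorem 2.1]{ChoulliYansori1} (the log-optimal wealth is the reciprocal of the dual optimizer, and the log-optimal portfolio is a numéraire portfolio), and the last equality is just (\ref{repKGMultiGEneral}) with $\widetilde Z^{\mathbb F}$ in place of $Z^{\mathbb F}$, noting $(\widetilde Z^{\mathbb F})^{\tau}=\widetilde Z^{\mathbb F}_{\cdot\wedge\tau}$ and ${\cal E}(-I_{\Rbrack\tau,\infty\Lbrack}(1-G_{-})^{-1}\is m)^{-1}={\cal E}(I_{\Rbrack\tau,\infty\Lbrack}\is m^{(1)})$ up to the sign bookkeeping in (\ref{m(a)}). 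I expect the main obstacle to be the projection/compensator computation yielding (\ref{keyidentity}): keeping careful track of which terms live after $\tau$, correctly handling the jump terms of the stochastic exponentials (so that $H^{(0)}$ of the $\mathbb G$-martingale $K^{\mathbb G}$ decomposes into $H^{(0)}(K^{\mathbb F})$, the entropy-Hellinger of $m^{(1)}$, and the cross bracket), and verifying that all the pieces are well-defined (e.g. the predictable projections exist and the integrals converge) so that the manipulations — and in particular the cancellation of the $\mathbb G$-martingale part under expectation — are legitimate rather than merely formal.
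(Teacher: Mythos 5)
Your overall route is essentially the same as the paper's: cite \cite[Theorem 2.1]{ChoulliYansori1} for the primal--dual equivalence, push the dual objective through the deflator parametrization of Theorem~\ref{GeneralDefaltorDescription4afterTau}, compute the $\mathbb F$-compensator after $\tau$ to land on (\ref{existencecond1}), and close the loop with strict concavity for uniqueness. The paper packages precisely this computation in Lemma~\ref{deflator4hellinger} and your sketch of the $H^{(0)}$/entropy-Hellinger/cross-bracket decomposition matches Lemma~\ref{deflator4hellinger}-(a),(d) closely.

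There is, however, one genuine gap. You assert that for any $Z^{\mathbb G}\in{\cal D}(S-S^{\tau},\mathbb G)$ one automatically has $(Z^{\mathbb G})^{\tau}\equiv 1$, on the grounds that $S-S^{\tau}$ is constant on $\Lbrack 0,\tau\Rbrack$. This is false: a $\mathbb G$-deflator is only required to be a positive local martingale (and make $Z^{\mathbb G}(\varphi\is(S-S^{\tau}))$ a local martingale); nothing forces it to be constant before $\tau$, and indeed generic elements of ${\cal D}(S-S^{\tau},\mathbb G)$ move on $\Lbrack 0,\tau\Rbrack$. Consequently Theorem~\ref{GeneralDefaltorDescription4afterTau} does \emph{not} apply directly to an arbitrary $Z^{\mathbb G}$, so your ``bijection'' between ${\cal D}(S-S^{\tau},\mathbb G)$ and ${\cal D}(S^{(1)},\mathbb F)$ is not literally a bijection, and your ``key identity'' is an equality only for deflators of the normalized form, while for a general $Z^{\mathbb G}$ one only gets an inequality. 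The paper's Lemma~\ref{deflator4hellinger}-(b) is exactly the missing step: given $Z^{\mathbb G}\in{\cal D}_{log}$, one replaces it by $Z^{\mathbb G}/(Z^{\mathbb G})^{\tau}\in{\cal D}(S-S^{\tau},\mathbb G)$, uses that $(Z^{\mathbb G})^{\tau}$ is a positive $\mathbb G$-supermartingale with $Z^{\mathbb G}_0=1$ (so $E[-\ln Z^{\mathbb G}_{T\wedge\tau}]\geq 0$), and thereby shows the infimum is not increased by restricting to deflators with $(Z^{\mathbb G})^{\tau}\equiv 1$; only then does Theorem~\ref{GeneralDefaltorDescription4afterTau} kick in to give the $\mathbb F$-parametrization. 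Without this supermartingale reduction, your reduction to $\mathbb F$-quantities is incomplete. (A small typo along the way: the $\mathbb F$-predictable projection of $I_{\Rbrack\tau,\infty\Lbrack}$ is $1-G_{-}$, not $1-\widetilde G_{-}$; the optional one is $1-\widetilde G$.)
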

 The proof of Theorem \ref{existencetheorem} relies on the following lemma, which is interesting in itself. 
 
 \begin{lemma}\label{deflator4hellinger}
Suppose that (\ref{Assumptions4Tau}) holds, and let $m^{(1)}$ and ${S}^{(1)}$ given in (\ref{m(a)}).
 Then the following hold \\
{\rm{(a)}} Both processes $(1-G_{-})^{-2}I_{\Rbrack \tau,\infty \Lbrack} \is \langle m \rangle^{\mathbb F}$ and $H^{(0)} (m^{(1)},\mathbb{F})$ are ${\mathbb F}$-locally integrable, and 
\begin{eqnarray}
 \left ( {I_{\Rbrack \tau, \infty \Lbrack} \over (1-G_{-})^2} \is \langle m \rangle^{\mathbb F} -I_{\Rbrack \tau, +\infty \Lbrack} \is H^{(0)} (m^{(1)}, {\mathbb F}) \right )^{p, \mathbb F}  =(1-G_{-}) \is h^{(E)}(m^{(1)},\mathbb F).\label{m(1)2Hellinger}
\end{eqnarray}
{\rm{(b)}} For any $Z^{\mathbb{G}}\in {\cal{D}}_{log}(S-S^{\tau},\mathbb{G})$, there exists $Z^{\mathbb{F}}\in {\cal{D}}(S^{(1)},\mathbb{F})$ such that 
\begin{equation*}
E\left[-\ln\left(Z^{\mathbb{G}}_T\right)\right]\geq E\left[{{Z^{\mathbb{F}}_T/Z^{\mathbb{F}}_{T\wedge\tau}}\over{{\cal E}_{T} \left (- I_{\Rbrack \tau, \infty \Lbrack}(1-G_{-})^{-1}  \is m \right)}}\right].
\end{equation*}
{\rm{(c)}}The following equality holds
\begin{equation}\label{optimalZF}
    \inf_{Z^{\mathbb G} \in {\cal D}(S-S^{\tau}, \mathbb G)} E \left [ - \ln (Z^{\mathbb G}_T) \right] = \inf_{Z^{\mathbb F} \in {\cal D}({S}^{(1)}, \mathbb F)} E \left[ -\ln \left (\frac{Z_T^{\mathbb F}/ Z_{T \wedge \tau}^{\mathbb F}}{{\cal E}_{T} \left (- I_{\Rbrack \tau, \infty \Lbrack}(1-G_{-})^{-1}  \is m \right)} \right ) \right] 
\end{equation}
{\rm{(d)}}
If  $Z^{\mathbb F} \in {\cal D}({S}^{(1)}, \mathbb F)$ such that $Z^{\mathbb F}/\left((Z^{\mathbb F} )^{\tau}{\cal E}(-I_{\Rbrack \tau, \infty \Lbrack} \is{m}^{(1)})\right)\in {\cal D}_{log}(S-S^{\tau}, \mathbb G)$, then there exists a nondecreasing and $\mathbb F$-predictable process $V$ such 
$$K^{\mathbb F}_0 = V_0 = 0,\quad Z^{\mathbb F} := {\cal E}(K^{\mathbb F})\exp (-V),$$
 and 
\begin{eqnarray}
&& E \left[ -\ln \left (\frac{Z_T^{\mathbb F}/Z_{T\wedge \tau}^{\mathbb F}}{{\cal E}_{T} \left (I_{\Rbrack \tau, \infty \Lbrack} \is {m}^{(1)} \right)}\right )\right] \nonumber \\
&&= E \left [(1-\widetilde{G})\is \left(V+h^{(E)}(m^{(1)},\mathbb F)+ H^{(0)} (K^{\mathbb F},\mathbb F)+\langle K^{\mathbb F},  m^{(1)} \rangle^{\mathbb F}  \right)_T\right].\label{fromZF2KV}
\end{eqnarray}
\end{lemma}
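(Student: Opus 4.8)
The plan is to obtain the soft reductions, parts~(b) and~(c), from the deflator description of Theorem~\ref{GeneralDefaltorDescription4afterTau}, and to treat parts~(a) and~(d) as the analytic core, with part~(d) relying on part~(a) together with Theorem~\ref{OptionalDecompoTheorem}(b).

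\emph{Parts (b) and (c).} Given $Z^{\mathbb G}\in\mathcal{D}_{log}(S-S^{\tau},\mathbb G)$, I would factor $Z^{\mathbb G}=(Z^{\mathbb G})^{\tau}\bigl(Z^{\mathbb G}/(Z^{\mathbb G})^{\tau}\bigr)$. Since $(Z^{\mathbb G})^{\tau}$ is a nonnegative $\mathbb G$-local martingale with initial value $1$, hence a supermartingale, Jensen's inequality gives $E[-\ln Z^{\mathbb G}_{T\wedge\tau}]\geq-\ln E[Z^{\mathbb G}_{T\wedge\tau}]\geq 0$, so $E[-\ln Z^{\mathbb G}_{T}]\geq E[-\ln(Z^{\mathbb G}_{T}/Z^{\mathbb G}_{T\wedge\tau})]$. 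The process $Z^{\mathbb G}/(Z^{\mathbb G})^{\tau}$ equals $1$ on $\Lbrack 0,\tau\Rbrack$ and again lies in $\mathcal{D}(S-S^{\tau},\mathbb G)$ (the $\tau$-stopped factor drops out because $S-S^{\tau}$ is null on $\Lbrack 0,\tau\Rbrack$), so Theorem~\ref{GeneralDefaltorDescription4afterTau}(c) applies and returns the $Z^{\mathbb F}\in\mathcal{D}(S^{(1)},\mathbb F)$ of part~(b). For part~(c): the inequality ``$\geq$'' is this same argument run for an arbitrary $Z^{\mathbb G}\in\mathcal{D}(S-S^{\tau},\mathbb G)$ with $E[-\ln Z^{\mathbb G}_{T}]<+\infty$, while ``$\leq$'' follows by using Theorem~\ref{GeneralDefaltorDescription4afterTau}(c) in the forward direction, which turns every $Z^{\mathbb F}\in\mathcal{D}(S^{(1)},\mathbb F)$ into a $Z^{\mathbb G}\in\mathcal{D}(S-S^{\tau},\mathbb G)$ with the matching value of the objective.

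\emph{Part (a).} I would substitute $m^{(1)}=-(1-G_{-})^{-1}I_{\{G_{-}<1\}}\is m$ everywhere, so that $\langle(m^{(1)})^{c}\rangle^{\mathbb F}=(1-G_{-})^{-2}I_{\{G_{-}<1\}}\is\langle m^{c}\rangle^{\mathbb F}$, $\Delta m^{(1)}=-(1-G_{-})^{-1}I_{\{G_{-}<1\}}\Delta m$, and $(1-G_{-})^{-2}I_{\Rbrack\tau,\infty\Lbrack}\is\langle m\rangle^{\mathbb F}=I_{\Rbrack\tau,\infty\Lbrack}\is\langle m^{(1)}\rangle^{\mathbb F}$, the last point using that $\langle m\rangle^{\mathbb F}$ does not charge $\{G_{-}=1\}$ (here $m$ is BMO and, by~(\ref{Assumptions4Tau}), $\{\widetilde{G}=1>G_{-}\}=\emptyset$). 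The $\mathbb F$-local integrability is then obtained by localising with $\mathbb F$-stopping times that reduce $m$ to an $L^{2}$-martingale and bound $(1-G_{-})^{-1}$ on $\Rbrack\tau,\infty\Lbrack$. The identity itself hinges on the $\mathbb F$-dual predictable projection of $\mathbb G$-processes of the form $I_{\Rbrack\tau,\infty\Lbrack}\is A$: from $^{p,\mathbb F}\widetilde{G}=G_{-}$ (equivalently $^{p,\mathbb F}(I_{\Rbrack\tau,\infty\Lbrack})=1-G_{-}$) one gets $(I_{\Rbrack\tau,\infty\Lbrack}\is A)^{p,\mathbb F}=(1-G_{-})\is A$ for $\mathbb F$-predictable finite-variation $A$ and $(I_{\Rbrack\tau,\infty\Lbrack}\is C)^{p,\mathbb F}=(1-G_{-})\is C^{p,\mathbb F}$ for $\mathbb F$-optional finite-variation $C$. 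Splitting $\langle m^{(1)}\rangle^{\mathbb F}-H^{(0)}(m^{(1)},\mathbb F)$ into its $\mathbb F$-predictable part and the optional jump sum carried by $H^{(0)}(m^{(1)},\mathbb F)$, applying these projection rules, and simplifying the pointwise expressions in the common jumps $\Delta m^{(1)}$, one is left with the integrand of $h^{(E)}(m^{(1)},\mathbb F)$ weighted by $(1-G_{-})$, which is the right-hand side.

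\emph{Part (d).} Since $Z^{\mathbb F}$ is a positive $\mathbb F$-supermartingale, I would take its multiplicative decomposition $Z^{\mathbb F}=\mathcal{E}(K^{\mathbb F})\exp(-V)$ with $K^{\mathbb F}\in\mathcal{M}_{loc}(\mathbb F)$, $\Delta K^{\mathbb F}>-1$, $V$ an $\mathbb F$-predictable RCLL nondecreasing process, $K^{\mathbb F}_{0}=V_{0}=0$; this supplies the $K^{\mathbb F}$ and $V$ of the statement. Using $\mathcal{E}(K^{\mathbb F})/\mathcal{E}(K^{\mathbb F})^{\tau}=\mathcal{E}(I_{\Rbrack\tau,\infty\Lbrack}\is K^{\mathbb F})$, the deflator in the hypothesis rewrites as $\mathcal{E}(I_{\Rbrack\tau,\infty\Lbrack}\is K^{\mathbb F})\,\exp(-I_{\Rbrack\tau,\infty\Lbrack}\is V)/\mathcal{E}(I_{\Rbrack\tau,\infty\Lbrack}\is m^{(1)})$, and the identity $-\ln\mathcal{E}(L)=-(L-L_{0})+H^{(0)}(L,\mathbb F)$ expresses $-\ln Z^{\mathbb G}$ as $I_{\Rbrack\tau,\infty\Lbrack}\is\bigl(V+H^{(0)}(K^{\mathbb F},\mathbb F)-H^{(0)}(m^{(1)},\mathbb F)\bigr)+I_{\Rbrack\tau,\infty\Lbrack}\is(m^{(1)}-K^{\mathbb F})$. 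I would then (i) apply Theorem~\ref{OptionalDecompoTheorem}(b) to the $\mathbb F$-local martingale $m^{(1)}-K^{\mathbb F}$ stopped after $\tau$, producing the $\mathbb G$-drift corrections $(1-G_{-})^{-1}I_{\Rbrack\tau,\infty\Lbrack}\is\langle m,K^{\mathbb F}\rangle^{\mathbb F}$ and $(1-G_{-})^{-2}I_{\Rbrack\tau,\infty\Lbrack}\is\langle m\rangle^{\mathbb F}$, the second combining with $-I_{\Rbrack\tau,\infty\Lbrack}\is H^{(0)}(m^{(1)},\mathbb F)$ through part~(a) into $(1-\widetilde{G})\is h^{(E)}(m^{(1)},\mathbb F)$ and the first giving the cross term $(1-\widetilde{G})\is\langle K^{\mathbb F},m^{(1)}\rangle^{\mathbb F}$; (ii) observe that $-\ln Z^{\mathbb G}$ is a $\mathbb G$-submartingale, being $-\ln$ of the nonnegative $\mathbb G$-supermartingale $Z^{\mathbb G}$, so the hypothesis $Z^{\mathbb G}\in\mathcal{D}_{log}(S-S^{\tau},\mathbb G)$ makes its Doob--Meyer drift $A^{\mathbb G}$ integrable with $E[-\ln Z^{\mathbb G}_{T}]=E[A^{\mathbb G}_{T}]$; (iii) take $\mathbb F$-dual projections of the pieces of $A^{\mathbb G}$, replacing $I_{\Rbrack\tau,\infty\Lbrack}$ by $1-\widetilde{G}$ (equivalently $1-G_{-}$, under the expectation, by $^{p,\mathbb F}\widetilde{G}=G_{-}$), and collect the terms into the right-hand side of~(\ref{fromZF2KV}).

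\emph{Main obstacle.} I expect the hardest point to be step~(ii) of part~(d): since $-\ln Z^{\mathbb G}$ is only a special $\mathbb G$-semimartingale, one must combine its submartingale property with the $\mathcal{D}_{log}$ bound $\sup_{t}E[-\ln Z^{\mathbb G}_{t}]<+\infty$ to guarantee the $\mathbb G$-local martingale part is a genuine (class-D) martingale on $[0,T]$ before interchanging expectation with the decomposition, while keeping track of the several finite-variation terms, not all of which are individually of a definite sign. A secondary delicate point, already present in part~(a), is the careful handling of the sets $\{G_{-}=1\}$ and $\{\widetilde{G}=1\}$ and of the $\mathbb F$-(dual) projections of processes supported on $\Rbrack\tau,\infty\Lbrack$.
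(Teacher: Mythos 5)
Your proposal follows essentially the same route as the paper: parts (b) and (c) by factoring out $(Z^{\mathbb G})^{\tau}$ and invoking Theorem~\ref{GeneralDefaltorDescription4afterTau}, part (a) by substitution and $\mathbb F$-dual predictable projection of processes supported on $\Rbrack\tau,\infty\Lbrack$, and part (d) via the multiplicative decomposition of $Z^{\mathbb F}$, the identity $-\ln{\cal E}(L)=-L+H^{(0)}(L,\mathbb F)$, Theorem~\ref{OptionalDecompoTheorem}(b) together with part (a), and a Doob--Meyer step. The only difference is cosmetic: where you rederive that logarithmic identity and the class-D/martingale property needed to pass from $E[-\ln Z^{\mathbb G}_T]$ to the expected drift, the paper simply cites Proposition~B.2-(a) and (b) of \cite{ChoulliYansori2}.
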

The proof of this lemma is relegated to Appendix \ref{proof4lemmas}, while below we give the proof of Theorem  \ref{existencetheorem}.
  \begin{proof}[Proof of Theorem \ref{existencetheorem}] The proof of (a)$\iff$(b) follows immediately from \cite[ Theorem 3.2]{ChoulliYansori1} applied to the model $(X,\mathbb{H})=(S-S^{\tau},\mathbb{G})$. If assertion (c) holds, then by combining (\ref{optimalZF}) with Lemma \ref{deflator4hellinger}-(b) applied to $\widetilde{Z}^{\mathbb{G}}$, assertion (d) follows immediately. Hence, this proves (c) $\Longrightarrow$ (d). It is clear that if assertion (d) holds, then the existence of the solution  to (\ref{dualaftertau}) exists and takes the form of $\widetilde{Z}^{\mathbb{G}}=\dfrac{\widetilde{Z}^{\mathbb F}/(\widetilde{Z}^{\mathbb F})^{\tau}}{{\cal E}(I_{\Rbrack \tau, \infty \Lbrack} \is{m}^{(1)})}$, while the uniqueness of the solution is due to the strict concavity of  the logarithm function. This proves (d) $\Longrightarrow$ (c). Thus the rest of this proof focuses on proving (c) $\Longleftrightarrow$ (b). To this end, we recall that in virtue of  \cite[ Theorem 3.2]{ChoulliYansori1}, the problem (\ref{dualaftertau})  admits a solution if and only if  ${\cal D}_{log}(S-S^{\tau}, \mathbb G)\not=\emptyset$, or equivalently $\inf_{Z^{\mathbb G} \in {\cal D}(S-S^{\tau})} E \left[  -\ln (Z_T^{\mathbb G})\right ] < +\infty.$ Thus, when assertion (b) holds, the equality (\ref{fromZF2KV}) in Lemma \ref{deflator4hellinger} allows us to conclude that assertion (c) holds, and the proof of (b) $\Longrightarrow$ (c) is complete. To prove the reverse, we assume that assertion (c) holds and consider  ${Z}^{\mathbb G} \in {\cal D}_{log} (S-S^{\tau}, \mathbb G)\subset {\cal D} (S-S^{\tau}, \mathbb G)$. Thus, Theorem \ref{GeneralDefaltorDescription4afterTau} guarantees the existence of $Z^{\mathbb F} \in {\cal D} (S^{(1)}, \mathbb{F})$ such that
  $$
  {Z}^{\mathbb G}={{Z^{\mathbb F}/(Z^{\mathbb F})^{\tau}}\over{{\cal E}\left(-I_{\Rbrack \tau, \infty \Lbrack} \is{m}^{(1)}\right)}}\in  {\cal D}_{log} (S-S^{\tau}, \mathbb G).$$
  Therefore, by applying Lemma \ref{deflator4hellinger}-(d) to $Z^{\mathbb F} $, assertion (b) follows immediately. This proves  (c) $\Longrightarrow$ (b), and ends the proof of the theorem.
 \end{proof}
 In the rest of this section, we discuss a direct consequence of Theorem \ref{existencetheorem}, which is important in itself. It is in fact an application of Theorem \ref{existencetheorem}, which gives  a sufficient condition in terms of information theoretical concept, for $(S-S^{\tau}, \mathbb G)$ to admit the log-optimal portfolio when $({S}^{(1)},\mathbb{F})$ does already.
 \begin{theorem}\label{sufficientcond}
  Suppose that (\ref{Assumptions4Tau})  holds, and $({S}^{(1)}, \mathbb F)$ admits the log-optimal portfolio. Then the log-optimal portfolio for $(S-S^{\tau}, \mathbb G)$ exists if 
  \begin{equation}\label{existencecond2} 
      E \Bigl[  \int_0^T (1- G_{s-})dh_s^{(E)}(m^{(1)}, \mathbb{F})\Bigr] < +\infty. 
  \end{equation}
 \end{theorem}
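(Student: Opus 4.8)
\textbf{Proof plan for Theorem \ref{sufficientcond}.}
The plan is to deduce this theorem as a direct corollary of Theorem \ref{existencetheorem}, by exhibiting a concrete pair $(K^{\mathbb F},V)$ satisfying conditions (\ref{Condition4Existence0}) and (\ref{existencecond1}). Since $({S}^{(1)},\mathbb F)$ admits the log-optimal portfolio, Theorem \ref{existencetheorem} applied to the model $({S}^{(1)},\mathbb F)$ (equivalently, \cite[Theorem 3.2]{ChoulliYansori1}) yields a deflator $\widetilde Z^{\mathbb F}={\cal E}(\widetilde K^{\mathbb F})\exp(-\widetilde V)\in{\cal D}_{log}({S}^{(1)},\mathbb F)$ for suitable $\widetilde K^{\mathbb F}\in{\cal M}_{loc}(\mathbb F)$ and a nondecreasing $\mathbb F$-predictable $\widetilde V$ with $\widetilde K^{\mathbb F}_0=\widetilde V_0=0$; in particular $\sup_{t\ge 0}E[-\ln(\widetilde Z^{\mathbb F}_t)]<+\infty$, which (by the multiplicative decomposition of $\ln\widetilde Z^{\mathbb F}$, i.e.\ $-\ln\widetilde Z^{\mathbb F}=-\widetilde K^{\mathbb F}+\widetilde V+$ compensator terms) is equivalent to $E[(H^{(0)}(\widetilde K^{\mathbb F},\mathbb F)+\widetilde V)_\infty]<+\infty$. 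I would then take this same pair as the candidate $(K^{\mathbb F},V):=(\widetilde K^{\mathbb F},\widetilde V)$ in Theorem \ref{existencetheorem}(b).

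The core of the argument is then to verify (\ref{existencecond1}) for this choice. Since $1-\widetilde G\le 1$ and $0\le 1-\widetilde G$, and since $H^{(0)}(K^{\mathbb F},\mathbb F)$, $V$, and $h^{(E)}(m^{(1)},\mathbb F)$ are all nonnegative, the terms $E[(1-\widetilde G)\is(H^{(0)}(K^{\mathbb F},\mathbb F)+V)_T]$ are bounded by $E[(H^{(0)}(K^{\mathbb F},\mathbb F)+V)_\infty]<+\infty$ (using that $1-\widetilde G$ is a bounded predictable-times-optional integrand and switching between $H^{(0)}$ and $h^{(0)}$ via the $\mathbb F$-predictable projection, as $1-\widetilde G={}^{p,\mathbb F}(I_{\Lbrack\tau,+\infty\Lbrack})$ on the relevant set). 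The term $E[(1-\widetilde G)\is h^{(E)}(m^{(1)},\mathbb F)_T]$ is exactly hypothesis (\ref{existencecond2}) after rewriting $1-\widetilde G$ in terms of $1-G_{-}$ where needed (note $(1-\widetilde G)\is h^{(E)}=(1-G_{-})\is h^{(E)}$ up to the jump of the integrator's compensator, and under (\ref{Assumptions4Tau}) the set $\{\widetilde G=1>G_{-}\}$ is empty so these coincide). The remaining cross term $E[(1-\widetilde G)\is\langle K^{\mathbb F},m^{(1)}\rangle^{\mathbb F}_T]$ must be controlled by the Kunita--Watanabe / Cauchy--Schwarz inequality: $|\langle K^{\mathbb F},m^{(1)}\rangle^{\mathbb F}|$ is dominated in variation by $\sqrt{\langle K^{\mathbb F,c}\rangle}\sqrt{\langle m^{(1),c}\rangle}$ plus a jump part, and one bounds $E[(1-\widetilde G)\is|\langle K^{\mathbb F},m^{(1)}\rangle^{\mathbb F}|_T]$ by the geometric mean of $E[(1-\widetilde G)\is 2H^{(0)}(K^{\mathbb F},\mathbb F)_T]$ and $E[(1-\widetilde G)\is 2h^{(E)}(m^{(1)},\mathbb F)_T]$ (or an analogous Young-type splitting adapted to the entropy-Hellinger jump structure), both of which are already shown finite.

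The main obstacle I anticipate is precisely this last cross-term estimate, because the jump parts of $H^{(0)}$ and $h^{(E)}$ are not quadratic but of the form $x-\ln(1+x)$ and $(1+x)\ln(1+x)-x$, so a naive Cauchy--Schwarz on $\Delta K^{\mathbb F}\Delta m^{(1)}$ does not immediately close; one needs the elementary inequality $|xy|\le (x-\ln(1+x))+((1+y)\ln(1+y)-y)$ valid for $x,y>-1$ (the Fenchel-type conjugacy between the two Hellinger-type convex functions), applied pathwise to the jumps together with $\tfrac12|ab|\le\tfrac14 a^2+\tfrac14 b^2$ on the continuous parts, and then taking $\mathbb F$-predictable projections. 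Once that pointwise bound is in hand, integrating against the nonnegative predictable measure $dV^{o}$ built from $(1-\widetilde G)$ and taking expectations gives (\ref{existencecond1}) from (\ref{existencecond2}) plus the finiteness coming from $({S}^{(1)},\mathbb F)$ admitting a log-optimal portfolio, and Theorem \ref{existencetheorem}(b)$\Rightarrow$(a) concludes. A secondary technical point is checking that ${\cal E}(K^{\mathbb F})\exp(-V)\in{\cal D}({S}^{(1)},\mathbb F)$, but this is inherited verbatim from the deflator produced for $({S}^{(1)},\mathbb F)$, so no new work is needed there.
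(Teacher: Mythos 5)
Your plan matches the paper's on the easy parts: you start, as the paper does, from the dual characterization of log-optimality for $(S^{(1)},\mathbb F)$ to extract a pair $(K^{\mathbb F},V)$ with $E[V_T+H^{(0)}_T(K^{\mathbb F},\mathbb F)]<+\infty$; you feed the same pair into Theorem~\ref{existencetheorem}(b); you control the $V$ and $H^{(0)}(K^{\mathbb F},\mathbb F)$ terms in (\ref{existencecond1}) by the bound $1-\widetilde G\le 1$; and you correctly recognize that $E[(1-\widetilde G)\is h^{(E)}(m^{(1)},\mathbb F)_T]=E[(1-G_-)\is h^{(E)}(m^{(1)},\mathbb F)_T]$ because $h^{(E)}$ is $\mathbb F$-predictable, which is exactly hypothesis (\ref{existencecond2}). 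You also correctly flagged the cross term $E[(1-\widetilde G)\is\langle K^{\mathbb F},m^{(1)}\rangle^{\mathbb F}_T]$ as the real obstacle. The problem is that the tool you propose for it does not work.

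The inequality you invoke, $|xy|\le\bigl(x-\ln(1+x)\bigr)+\bigl((1+y)\ln(1+y)-y\bigr)$ for $x,y>-1$, is false: at $x=y=1$ the left side is $1$ while the right side is $(1-\ln 2)+(2\ln 2-1)=\ln 2\approx 0.69$. The Fenchel conjugate of $\psi(y)=(1+y)\ln(1+y)-y$ is in fact $\psi^*(x)=e^x-1-x$, not $x-\ln(1+x)$; the latter is asymptotically \emph{linear} in $x$, so no constant multiple of your right-hand side can dominate $|xy|$ (take $x=y\to\infty$: the left side grows like $y^2$, the right side only like $y\ln y$). This is not a cosmetic issue: since $H^{(0)}$ controls only $\Delta K^{\mathbb F}-\ln(1+\Delta K^{\mathbb F})$, which is linear for large jumps, the finiteness of $E[H^{(0)}_T(K^{\mathbb F},\mathbb F)]$ simply does not give quadratic control on $\sum(\Delta K^{\mathbb F})^2$, so no pointwise convex-duality bound of the type you envisage can close the cross term.

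The paper circumvents this entirely with an $H^1$--BMO argument rather than a pointwise one. From $E[H^{(0)}_T(K^{\mathbb F},\mathbb F)]<+\infty$ it deduces, via \cite[Lemma B.1]{ChoulliYansori2}, that $\sup_{t\le T}|K^{\mathbb F}_t|\in L^1$, equivalently $E\bigl[[K^{\mathbb F},K^{\mathbb F}]_T^{1/2}\bigr]<+\infty$, i.e.\ $K^{\mathbb F}\in H^1(\mathbb F)$. Since $m$ is a BMO $\mathbb F$-martingale, Fefferman's inequality then gives that $\langle K^{\mathbb F},m\rangle^{\mathbb F}$ has integrable variation; and since $(1-\widetilde G)(1-G_-)^{-1}I_{\{G_-<1\}}\le 1$, so does $(1-\widetilde G)\is\langle K^{\mathbb F},m^{(1)}\rangle^{\mathbb F}$. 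This is the step you are missing: the cross term is not bounded by the Hellinger functionals at all, but by the $H^1$ norm of $K^{\mathbb F}$ times the BMO norm of $m$. The rest of your argument is sound once this piece is replaced.
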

 
 \begin{proof}
 On the one hand, thanks to a combination of \cite[Theorem 2.1]{ChoulliYansori1} and  \cite[Proposition B.2]{ChoulliYansori2}, we deduce that $({S}^{(1)}, \mathbb F)$ admits the log-optimal portfolio if and only if there exist $K^{\mathbb F} \in {\cal M}_{loc}(\mathbb F)$ and a RCLL, nondecreasing, and $\mathbb F$-predictable process $V$ such that 
 \begin{equation*}
  K^{\mathbb F}_0= V_0 = 0,\ Z^{\mathbb F} := {\cal E}(K^{\mathbb F})e^{-V} \in {\cal D}({S}^{(1)}, \mathbb F)\ \mbox{and}\   E \left [ -\ln (Z_T^{\mathbb{F}}) \right ] = E \left [ V_T + H^{(0)}_T  (K^{\mathbb F}, \mathbb F)\right ] < + \infty .
 \end{equation*}
 On the other hand, thanks to \cite [Lemma B.1]{ChoulliYansori2} and the condition $E[H^{(0)}_T  (K^{\mathbb F}, \mathbb F)]<+\infty$, we deduce that  $\sup_{0 \leq t \leq T} \vert K_t^{\mathbb F} \vert \in L^{1}(P)$, (or equivalently $E[[K^{\mathbb F} ,K^{\mathbb F}]^{1/2}_T]<+\infty$). A combination of this condition with $m$ being a BMO $\mathbb F$-martingale  implies clearly that $\langle K^{\mathbb F}, m \rangle ^{\mathbb F}$ has integrable variation, and hence $(1-\widetilde{G})\is \langle K^{\mathbb F}, m^{(1)} \rangle ^{\mathbb F}$ does also. Therefore, in virtue of this latter fact and Theorem \ref{existencetheorem}, we can conclude the condition  \eqref{existencecond2}  is sufficient for the existence of the log-optimal portfolio for $(S-S^{\tau}, \mathbb G)$ as soon as $({S}^{(1)}, \mathbb F)$ admits log-optimal portfolio. 
This ends the proof of the theorem. 
 \end{proof}
\section{Log-optimal portfolio for $(S-S^{\tau}, \mathbb G)$: Description and sensitivity}\label{Section4}
This section describes explicitly the log-optimal portfolio for the model $(S-S^{\tau},\mathbb{G})$, when it exists, using the $\mathbb{F}$-parameters of the pair $(S, \tau)$. Thanks to  \cite[Theorem 2.1]{ChoulliYansori1} (see also \cite[Theorem 5.2]{ChoulliYansori2} for the case $(S^{\tau},\mathbb{G})$), this task is feasible no matter what is the initial model $(S,\mathbb{F},P)$, due to the statistical techniques called {\it the predictable characteristics} of semimartingales. For more details about these tools, we refer the reader to \cite[Chapters III and IV]{jacod79} and \cite[Section \RN{2}.2]{jacodshiryaev}, and for their applications we refer to \cite{ChoulliYansori1,ChoulliYansori2,ChoulliStricker2005,ChoulliStricker2006,ChoulliStricker2007} and the references therein to cite few.  Thus, the rest of this paragraph will parametrizes the pair $(S,\tau)$, via predictable characteristics which are $\mathbb{F}$-observable. Throughout the rest of the paper, on $\Omega \times [0, + \infty ) \times {\mathbb R}$, we consider the $\sigma$-algebras
\begin{equation*}
\widetilde{\cal O}(\mathbb F) := {\cal O}(\mathbb F) \otimes {\cal B}({\mathbb R}^d) \quad\mbox{and}\quad \widetilde{\cal P}(\mathbb F) := {\cal P}(\mathbb F) \otimes {\cal B}({\mathbb R}^d), 
\end{equation*}
where ${\cal B}({\mathbb R}^d)$ is the Borel $\sigma$-field on the optional and predictable $\sigma$-fields, respectively. \\
{\bf Parametrization of $S$ and $S^{(1)}$:} The random measure associated to the jumps of $S$, denoted by $\mu$, is given by
\begin{equation*}
    \mu(dt, dx):= \sum_{s>0} I_{\{\Delta S \neq 0\}} \delta_{(s, \Delta S_s)}(dt, dx).
\end{equation*}
For a product-measurable functional $W\geq0$ on $\Omega \times [0, + \infty ) \times {\mathbb R}$, we denote by $W \star \mu$ the process 
\begin{equation}
    W \star \mu := \int_0^t \int_{{\mathbb R} \backslash \{0\}} W
    (u, x) \mu(du, dx) = \sum_{0<u\leq t} W(u, \Delta S_u) I_{\{\Delta S \neq 0\}} 
\end{equation}
Thus, on $\Omega \times [0, + \infty ) \times {\mathbb R}$, we define the $\sigma$-finite measure $M^{\mathbb P}_{\mu} := {\mathbb P} \otimes \mu$ by 
\begin{equation*}
    \int W dM^{\mathbb P}_{\mu} := E\left ( W \star \mu_{\infty} \right )
\end{equation*}
The  $\mathbb{F}$-compensator of $\mu$ is  the random measure $\nu$ defined by $E \left ( W \star \mu_{\infty}\right ) =  E \left ( W \star \nu_{\infty} \right )$, for each $\widetilde{{\mathcal P}}(\mathbb F)$-measurable $W\geq0$. Then, by \cite[Theorem 2.34]{jacodshiryaev} and fixed truncation function $h(x):= x I_{\{\vert x \vert \leq 1\}}$ the so called "canonical representation"
of $S$ is given by the following decomposition:
\begin{equation}\label{CanonicalDecomposition4S}
    S = S_0 + S^c + h(x) \star (\mu - \nu) + b \is A +(x-h(x)) \star \nu.
\end{equation}
where $S^c$ is the continuous local martingale part of $S$. $h \star (\mu -\nu)$ is the unique pure jump ${\mathbb H}$-local martingale with jumps that given by $h(\Delta S)I_{\{\Delta S \neq 0\}} -  {^{p, \mathbb{F}}}(h(\Delta S) I_{\{\Delta S \neq 0\}})$. For $\nu$ and $C$ is the matrix with entries $C_{ij} := \langle S^{c,i}, S^{c,j} \rangle$ we can find a version satisfying 
\begin{equation*}
 C = c \is A, \quad \nu(dt, dx) = dA_t F_t(dx), \quad F_t\left(\{0\} \right)=0,  \quad \int (\vert x \vert^2 \wedge 1)F_t(dx) \leq 1.
\end{equation*}
where $A$ is increasing and continuous, $b$ and $c$ are predictable processes, $F_t(dx)$ is a predictable kernel, $b_t(w)$ is a vector in ${\mathbb R}^d$ and $c_t(w)$ is a symmetric $d \times d$-matrix, for all $(w,t) \in \Omega \times {\mathbb R}$. The quadruplet $(b, c, F, A)$ are the predictable characteristics of $S$. 
Throughout the rest of the paper, we define 
\begin{equation}
    \widehat{W}_t := \int W(t, x) \nu(\{t\}, dx), \quad a_t:= \widehat{1}_t= \nu(\{t\}, {\mathbb R}^d). 
\end{equation}
for any predictable functional $W$ such that the above integral exists. As $S^{(1)}=I_{\{G_{-}<1\}}\is{S}$, then the random measure for its jumps $\mu_1$ and its compensator random measure $\nu_1$ are given by 
\begin{equation}\label{mu1nu1}
\mu_1(dt,dx):=I_{\{G_{t-}<1\}}\mu(dt,dx)\quad\mbox{and}\quad \nu_1(dt,dx):=I_{\{G_{t-}<1\}}\nu(dt,dx).
\end{equation}
Hence, it is easy to check that the predictable characteristics $(b^{(1)}, c^{(1)}, F_1, A^{(1)})$ of $S^{(1)}$ are given by 
 \begin{equation}\label{Characteristics4S1}
  A^{(1)}:=I_{\{G_{-}<1\}}\is A,\quad F_1(t,dx):=I_{\{G_{t-}<1\}}F(t,dx),\quad b^{(1)}=b,\quad c^{(1)}=c.\end{equation}
{\bf Parametrization of $\tau$:} Thanks to \cite[Theorem 3.75]{jacod79} and \cite[Lemma 4.24]{jacodshiryaev}, we will consider Jacod's decomposition for the $\mathbb F$-martingale $m$ as follows 
\begin{equation}\label{Decomposition4m}
  {m} = \beta^{(m)} \is S^c + U^{(m)} \star (\mu-\nu) + g^{(m)}\star \mu + m^{\perp}, \quad U^{(m)} := f^{(m)} + \dfrac{\widehat{f^{(m)}}}{1-a}I_{\{a<1\}}. 
\end{equation}
For the sake of simplifying the formulas, throughout the rest of the paper, we consider the functionals
\begin{equation}\label{Proceses(m,1)}
\begin{cases}\beta^{(m,1)} :=\beta^{(m)}(1-G_{-})^{-1}I_{\{G_{-}<1\}},\quad f^{(m,1)} :=f^{(m)}(1-G_{-})^{-1}I_{\{G_{-}<1\}},\cr\\
 g^{(m,1)} :=g^{(m)}(1-G_{-})^{-1}I_{\{G_{-}<1\}},\quad m^{(\perp,1)} :=I_{\{G_{-}<1\}}(1-G_{-})^{-1}\is{m}^{\perp},\end{cases}
\end{equation}
 and the following function 
 \begin{equation}\label{klog}
   {\mathcal K}_{log}(y) := \dfrac{-y}{1+y}+\ln (1+y) \quad  \mbox{for any} \quad  y>-1.   
 \end{equation}
 The rest of this section is divided into three subsections. The first subsection elaborates our main results, and discusses the applications of these results and their financial interpretations as well. The second subsection illustrates these main results on the model where $(S,\mathbb{F})$ follows the jump-diffusion model.  The last subsection proves the main results of the first subsection.
\subsection{Main results and their applications and interpretations}
In this subsection, we start in the following theorem by describing completely and as explicit as possible the log-optimal portfolio of $(S-S^{\tau},\mathbb{G})$ using the parameters of the pair $(S,\tau)$  which are $\mathbb{F}$-observables. This allows us to single out, with sharpe precision, what are the various risks induced by $\tau$ that really affect the existence and the structure as well of the log-optimal portfolio.
 \begin{theorem}\label{generalpredictable}
  Suppose (\ref{Assumptions4Tau}) holds, and let $(\beta^{(m,1)},f^{(m,1)})$ and ${\mathcal K}_{log}$ be given by (\ref{Proceses(m,1)}) and  \eqref{klog} respectively. Then the following assertions are equivalent. \\
  {\rm{(a)}} The log-optimal portfolio $\widetilde{\theta}^{\mathbb G}$ for $(S-S^{\tau}, \mathbb G)$ exists (i.e. ${\cal D}_{log}(S-S^{\tau}, \mathbb G) \neq \phi)$. \\
  {\rm{(b)}} There exists $\widetilde{\varphi} \in {\cal L}(S^{(1)}, \mathbb F)$ such that
  \begin{equation}
    (\theta - \widetilde{\varphi})^{tr}\left\{b- c(\beta^{(m,1)}+\widetilde{\varphi})+ \int \left ( \dfrac{1-f^{(m,1)}(x)}{1+\widetilde{\varphi}^{tr}x}x- h(x)\right)F_1(dx)\right\} \leq 0,\ P\otimes A^{(1)}\mbox{-a.e.},\label{G1}\end{equation}
     for any $\theta \in {\cal L}(S^{(1)}, \mathbb F)$, and 
      \begin{equation} E \left[(1-G_{-}) \is\left( \widetilde{V}^{(1)}+\widetilde{\varphi}^{tr} c {\widetilde \varphi} \is A^{(1)}+{\mathcal K}_{log} ({\widetilde \varphi}^{tr} x)(1-f^{(m,1)})\star \nu_1 \right)_T \right ] < + \infty. \label{G2}\end{equation}
      Here
     \begin{equation}   \widetilde{V}^{(1)} := \widetilde{\varphi}^{tr} \left(b-c (\beta^{(m,1)}+ \widetilde{\varphi})\right)\is A^{(1)} + \left [ \dfrac{(1-f^{(m,1)} (x))\widetilde{\varphi}^{tr} x }{1+\widetilde{\varphi}^{tr} x} - \widetilde{\varphi}^{tr}h(x)\right]\star \nu_1. \label{V(1)}\end{equation}
Furthermore, when they exist, the processes $\widetilde{\theta}^{\mathbb G}$, $\widetilde{\varphi}$ and $\widetilde{Z}^{\mathbb G}$ and $\widetilde{Z}^{\mathbb F} \in {\cal D}({S}^{(1)}, \mathbb F)$ solution to the minimization of the RHS term of \eqref{optimalZF}  are related to each other via the following 
 \begin{align}
   & \widetilde{\theta}^{\mathbb G}  \left ( 1+ ( \widetilde{\theta}^{\mathbb G} \is (S-S^{\tau}))_{-}\right)^{-1} = \widetilde{\varphi}  \quad P\otimes A\mbox{-a.e.}  \quad \mbox{on} \quad  \Rbrack \tau , \infty \Lbrack, \label{Equation4.12}\\
   & \widetilde{Z}^{\mathbb G} = {\cal E}(\widetilde{K}^{\mathbb G}){\cal E}(-I_{\Rbrack \tau, \infty \Lbrack} \is \widetilde{V}^{(1)}) = \dfrac{{\cal E}(I_{\Rbrack \tau, \infty \Lbrack} \is \widetilde{K}^{\mathbb F}){\cal E}(-I_{\Rbrack \tau, \infty \Lbrack} \is \widetilde{V}^{(1)})}{{\cal E}(I_{\Rbrack \tau, \infty \Lbrack} \is{m}^{(1)} )} = \dfrac{\widetilde{Z}^{\mathbb F}/(\widetilde{Z}^{\mathbb F})^{\tau}}{{\cal E}(I_{\Rbrack \tau, \infty \Lbrack}\is{m}^{(1)} )},\label{Equation4.13}\\
   & \widetilde{K}^{\mathbb G}:= - \widetilde{\varphi} \is {\cal T}^{(a)}(S^c) - \dfrac{\widetilde{\Gamma}^{(1)}\widetilde{\varphi}^{tr}x}{1+ \widetilde{\varphi}^{tr}x} I_{\Rbrack \tau, \infty \Lbrack}  \star (\mu - (1-f^{(m,1)}) \is \nu), \label{Equation4.14}\\
   & \widetilde{K}^{\mathbb F} := - \widetilde{\varphi} I_{\{G_{-}<1\}}\is S^{c} - \widetilde{\Gamma}^{(1)}\is{m}^{(1)} - \dfrac{\widetilde{\Gamma }^{(1)} (1-f^{(m,1)}) \widetilde{\varphi}^{tr} x}{1+\widetilde{\varphi}x}  \star (\mu_1 -\nu_1) + \dfrac{\widetilde{\Gamma }^{(1)} (\widetilde{\varphi}^{tr} x)g^{(m,1)}}{1+\widetilde{\varphi}x} \star \mu_1, \label{KF}\\
  & \widetilde{\Gamma }^{(1)} := \left ( 1- a + \widehat{f^{(op)}} + \widehat{f^{(m,1)}} - \reallywidehat{f^{(op)}f^{(m,1)}} \right )^{-1}, \quad \quad f^{(op)}(t,x):= (1+ \widetilde{\varphi}_t^{tr}x)^{-1}. \label{Gamma(1)andf(op)}
 \end{align}
 \end{theorem}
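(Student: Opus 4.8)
The plan is to transfer the abstract existence/description result of Theorem \ref{existencetheorem} into the language of $\mathbb{F}$-predictable characteristics, using the canonical representation \eqref{CanonicalDecomposition4S} of $S$ and Jacod's decomposition \eqref{Decomposition4m} of $m$. The starting point is the equivalence (a)$\iff$(b) in Theorem \ref{existencetheorem}: the log-optimal portfolio for $(S-S^{\tau},\mathbb{G})$ exists iff there is a pair $(K^{\mathbb{F}},V)$ with ${\cal E}(K^{\mathbb{F}})e^{-V}\in {\cal D}(S^{(1)},\mathbb{F})$ satisfying the integrability condition \eqref{existencecond1}. So the first step is to describe the set ${\cal D}(S^{(1)},\mathbb{F})$ of deflators for $(S^{(1)},\mathbb{F})$ in terms of predictable characteristics. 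By the standard structure theory for local martingale deflators of a semimartingale (the same machinery used in \cite{ChoulliYansori1,ChoulliStricker2005,ChoulliStricker2006}), writing $Z^{\mathbb{F}}={\cal E}(K^{\mathbb{F}})e^{-V}$, one parametrizes $K^{\mathbb{F}}$ via a triplet $(\beta,\,Y(x),\,N^{\perp})$ — a continuous-martingale-part coefficient, a jump functional, and an orthogonal part — and the deflator property $Z^{\mathbb{F}}(\varphi\is S^{(1)})\in{\cal M}_{loc}(\mathbb{F})$ translates into a pointwise (in $(\omega,t)$) variational inequality. This is exactly where the candidate portfolio rate $\widetilde{\varphi}\in{\cal L}(S^{(1)},\mathbb{F})$ enters, and it is how the inequality \eqref{G1} is produced: it is the first-order/variational condition characterizing the optimizer $\widetilde\varphi$ of the dual problem over $\mathbb{F}$-deflators, with the specific coefficients $b$, $c$, $F_1$ and the twist $f^{(m,1)}$ coming from the reweighting by ${\cal E}(-I_{\Rbrack\tau,\infty\Lbrack}\is m^{(1)})$ that appears in \eqref{repKGMultiGEneral}.

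Second, I would reconcile the objective. By Lemma \ref{deflator4hellinger}-(c)-(d) the dual value equals $\inf_{Z^{\mathbb{F}}}E[-\ln(Z^{\mathbb{F}}_T/Z^{\mathbb{F}}_{T\wedge\tau})/{\cal E}_T(-I_{\Rbrack\tau,\infty\Lbrack}(1-G_-)^{-1}\is m)]$, and \eqref{fromZF2KV} rewrites this as an expectation of $(1-\widetilde G)\is(V+h^{(E)}(m^{(1)},\mathbb{F})+H^{(0)}(K^{\mathbb{F}},\mathbb{F})+\langle K^{\mathbb{F}},m^{(1)}\rangle^{\mathbb{F}})_T$. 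The task is to compute each of these four Hellinger-type terms explicitly in characteristics when $K^{\mathbb{F}}$ is the \emph{optimal} one associated with $\widetilde\varphi$ — i.e. plug the optimal jump functional $f^{(op)}(t,x)=(1+\widetilde\varphi^{tr}x)^{-1}$ and optimal continuous coefficient $-\widetilde\varphi$ (after the $m^{(1)}$-correction) into $H^{(0)}(K^{\mathbb{F}},\mathbb{F})$, and use \eqref{m(1)2Hellinger} from Lemma \ref{deflator4hellinger}-(a) for the $h^{(E)}(m^{(1)},\mathbb{F})$ contribution. Carrying out this substitution and collecting terms, the continuous part yields the $\widetilde\varphi^{tr}c\widetilde\varphi\is A^{(1)}$ piece, the jump part collapses — via the definition \eqref{klog} of ${\cal K}_{log}$ — into ${\cal K}_{log}(\widetilde\varphi^{tr}x)(1-f^{(m,1)})\star\nu_1$, and the drift bookkeeping produces exactly $\widetilde V^{(1)}$ in \eqref{V(1)}; hence \eqref{existencecond1} becomes \eqref{G2}. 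This establishes (a)$\iff$(b), the key point being that \eqref{G1} guarantees $\widetilde\varphi$ is genuinely the argmin so that the $V$ in \eqref{Condition4Existence0} can be taken to be $\widetilde V^{(1)}$ (nondecreasing, by \eqref{G1}).

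Third, for the ``furthermore'' part, I would assemble the explicit formulas. Relation \eqref{Equation4.12} is just the definition of portfolio rate, $\varphi=\theta/W^\theta_-$, specialized to the optimal portfolio on $\Rbrack\tau,\infty\Lbrack$ (where $S-S^\tau$ acts like $S^{(1)}$ after $\tau$); by \cite[Theorem 2.1]{ChoulliYansori1}, ${\cal E}(\widetilde\theta^{\mathbb{G}}\is(S-S^\tau))=1/\widetilde Z^{\mathbb{G}}$, which combined with Theorem \ref{GeneralDefaltorDescription4afterTau} gives the chain \eqref{Equation4.13}. For \eqref{Equation4.14} and \eqref{KF}: starting from the relation $1/{\cal E}(\widetilde\varphi\is S^{(1)})={\cal E}(\widetilde K^{\mathbb{F}})e^{-\widetilde V^{(1)}}$ (i.e. the optimal $\mathbb{F}$-deflator is the reciprocal of the optimal $\mathbb{F}$-growth-optimal wealth, twisted by $m^{(1)}$), one uses the stochastic-calculus identity ${\cal E}(L)^{-1}={\cal E}(-L+\langle L^c\rangle+\sum(\Delta L)^2/(1+\Delta L))$ together with \eqref{Decomposition4m} to read off the continuous part $-\widetilde\varphi I_{\{G_-<1\}}\is S^c-\widetilde\Gamma^{(1)}\is m^{(1)}$ and the jump functional; the factor $\widetilde\Gamma^{(1)}$ in \eqref{Gamma(1)andf(op)} is precisely the normalization $(\,\widehat{1+\Delta\widetilde K^{\mathbb{F}}}\,)^{-1}$ forced by requiring $\widetilde K^{\mathbb{F}}$ to be a local martingale (its $\mathbb{F}$-predictable projection of jumps vanishes), which after expanding $(1+\widetilde\varphi^{tr}x)^{-1}(1-f^{(m,1)})$ and compensating gives the stated rational expression in $a$, $\widehat{f^{(op)}}$, $\widehat{f^{(m,1)}}$, $\reallywidehat{f^{(op)}f^{(m,1)}}$. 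Finally \eqref{Equation4.14} follows from \eqref{KF} by applying ${\cal T}^{(a)}$ and $I_{\Rbrack\tau,\infty\Lbrack}(1-G_-)^{-1}\is(\cdot)$ as dictated by \eqref{repKG1a}, noting $I_{\Rbrack\tau,\infty\Lbrack}\is\langle m^{(1)}\rangle$ terms combine with the $\mu_1$-integrals.

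The main obstacle I anticipate is the bookkeeping in step two/three: correctly tracking how the $m^{(1)}$-reweighting interacts with the jump measure — concretely, showing that the optimal deflator jump functional is $(1+\widetilde\varphi^{tr}x)^{-1}(1-f^{(m,1)}(x))$ up to the compensator normalization $\widetilde\Gamma^{(1)}$, and verifying that the $h^{(E)}(m^{(1)},\mathbb{F})$, $H^{(0)}(K^{\mathbb{F}},\mathbb{F})$ and cross-term $\langle K^{\mathbb{F}},m^{(1)}\rangle^{\mathbb{F}}$ contributions telescope exactly into $\widetilde V^{(1)}+\widetilde\varphi^{tr}c\widetilde\varphi\is A^{(1)}+{\cal K}_{log}(\widetilde\varphi^{tr}x)(1-f^{(m,1)})\star\nu_1$ with no residual terms. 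Establishing that the predictable-projection identity \eqref{m(1)2Hellinger} is what makes the $\{\widetilde G=1>G_-\}=\emptyset$ hypothesis dispensable here, and that all the compensators involved are well-defined (local integrability), will require the same care as in Lemma \ref{deflator4hellinger}; I would handle integrability by a localization along the stopping times reducing $K^{\mathbb{F}}$ and using that $m$ is BMO.
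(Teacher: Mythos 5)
Your overall plan has the right ingredients (predictable characteristics of $S$ and Jacod's decomposition of $m$, Lemma \ref{deflator4hellinger}, the reduction via Theorem \ref{existencetheorem}), but the route you sketch is genuinely different from the paper's, and there is a concrete gap in the ``furthermore'' step that would lead you to the wrong $\widetilde{K}^{\mathbb F}$.

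\textbf{Where your route differs.} You propose a ``pull-back'' argument: use Lemma \ref{deflator4hellinger}-(c)-(d) to rewrite the dual value as an $\mathbb{F}$-expectation of
$(1-\widetilde G)\is\bigl(V + h^{(E)}(m^{(1)},\mathbb F) + H^{(0)}(K^{\mathbb F},\mathbb F) + \langle K^{\mathbb F}, m^{(1)}\rangle^{\mathbb F}\bigr)_T$,
and then extract the first-order condition \eqref{G1} from this $\mathbb{F}$-side functional. The paper instead ``pushes forward'': it first computes the $\mathbb{G}$-predictable characteristics of $S-S^{\tau}$ (Lemma \ref{PredictableCharateristoics4S(tau)}, yielding \eqref{predictablechar}), applies the general characterization \cite[Theorem 2.1]{ChoulliYansori3} to $(S-S^{\tau},\mathbb G)$ to obtain the optimizer and its deflator in $\mathbb{G}$-terms, and then translates back to $\mathbb{F}$ via Lemma \ref{Theta(G)2Theta(F)} and Lemma \ref{F-supermartingale2G-supermartingale}. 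The paper's order of operations avoids the central difficulty of your route: on the $\mathbb{G}$-side the objective is the standard $E[-\ln Z^{\mathbb G}_T]$ and the characterization is off the shelf, whereas the $\mathbb{F}$-side functional you propose to minimize has the signed cross-term $\langle K^{\mathbb F},m^{(1)}\rangle^{\mathbb F}$ that destroys the standard argmin structure, so the variational inequality is not a direct specialization of known results. You gesture at this (``with the specific coefficients \dots and the twist $f^{(m,1)}$'') but never derive \eqref{G1}; you only verify that a $\widetilde\varphi$ satisfying \eqref{G1} reduces \eqref{existencecond1} to \eqref{G2}. That handles (b)$\Rightarrow$(a) but leaves (a)$\Rightarrow$(b), i.e.\ producing such a $\widetilde\varphi$, unaddressed.

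\textbf{The concrete error.} You assert the duality
$1/{\cal E}(\widetilde\varphi\is S^{(1)}) = {\cal E}(\widetilde K^{\mathbb F})\,e^{-\widetilde V^{(1)}}$,
i.e.\ that the optimal $\mathbb{F}$-deflator is literally the reciprocal of ${\cal E}(\widetilde\varphi\is S^{(1)})$. This is false and it is exactly where the $m^{(1)}$-twist disappears. The correct identities, from \eqref{Equation4.13} and Theorem \ref{GeneralDefaltorDescription4afterTau}, are
$1/{\cal E}\bigl(\widetilde\varphi\is(S-S^\tau)\bigr) = \widetilde Z^{\mathbb G} = \dfrac{\widetilde Z^{\mathbb F}/(\widetilde Z^{\mathbb F})^\tau}{{\cal E}(I_{\Rbrack\tau,\infty\Lbrack}\is m^{(1)})}$,
so that $\widetilde Z^{\mathbb F}$ differs from the reciprocal of the $\mathbb{F}$-wealth precisely by the factor ${\cal E}(m^{(1)})$. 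If you expand $1/{\cal E}(\widetilde\varphi\is S^{(1)})$ via the stochastic-exponential inversion identity as you propose, you obtain an exponential $\mathbb F$-local martingale with continuous part $-\widetilde\varphi I_{\{G_-<1\}}\is S^c$ only; the term $-\widetilde\Gamma^{(1)}\is m^{(1)}$ in \eqref{KF}, and the corresponding $f^{(m,1)}$ and $g^{(m,1)}$ modifications in the jump integrals, simply do not appear. The paper gets \eqref{KF} by the opposite route: compute $\Delta\widetilde K^{\mathbb G}$ from \eqref{Equation4.14}, use $\Delta{\cal T}^{(a)}(X) = \tfrac{1-G_-}{1-\widetilde G}I_{\Rbrack\tau,\infty\Lbrack}\Delta X$ together with $^{o,\mathbb F}(I_{\Rbrack\tau,\infty\Lbrack})=1-\widetilde G$ to isolate $\Delta\widetilde K^{\mathbb F}$, and identify the result with the jump structure of \eqref{KF}. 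To repair your argument you must either adopt this $\mathbb G\to\mathbb F$ direction for \eqref{KF}, or replace your duality by $\widetilde Z^{\mathbb F} = {\cal E}(m^{(1)})/{\cal E}(\widetilde\varphi\is S^{(1)})$ and redo the exponential-inversion bookkeeping with the $m^{(1)}$ factor included; in the latter case you also need to show that the resulting $\widetilde Z^{\mathbb F}$ actually lies in ${\cal D}(S^{(1)},\mathbb F)$, which the paper does via Lemma \ref{F-supermartingale2G-supermartingale}.
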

The proof of this theorem is relegated to Subsection \ref{Subsection4Proofs} for the sake of keeping this subsection with less technicalities, while the theorem conveys two principal results that we discuss herein. The first core result, which is the equivalence between assertions (a) and (b), gives a complete and explicit characterization of the log-optimal portfolio rate. Hence, this result simultaneously gives necessary and sufficient conditions on the pair $(S,\tau)$ such that the model $(S-S^{\tau},\mathbb{G})$ admits the log-optimal portfolio. It is important to mention that this first result, as we said it in the introduction that in virtue of \cite[Theorem 2.1]{ChoulliYansori1}, also gives a complete and explicit characterization  via (\ref{G1}) for the num\'eraire portfolio rate of $(S-S^{\tau},\mathbb{G})$. This equation shows that the num\'eraire portfolio (and hence the log-optimal portfolio) is impacted by ``the correlation" between $\tau$ and $S$ only. 

The second  principal result of Theorem \ref{generalpredictable} lies in the meaning of (\ref{Equation4.12})-(\ref{Equation4.13}). On the one hand, (\ref{Equation4.12}) and the first equality of (\ref{Equation4.13}) explain the exact duality relationship between the log-optimal wealth and log-optimal deflator solution to (\ref{dualaftertau}). In fact, it is easy to check that these equalities yield $1+  \widetilde{\theta}^{\mathbb G} \is (S-S^{\tau})={\cal{E}}(\widetilde\varphi{I}_{\Rbrack \tau, \infty \Lbrack} \is{S})=1/{\widetilde{Z}^{\mathbb{G}}}$. On the other hand, the second  and third equalities in (\ref{Equation4.13}) convey the structures of the optimal deflator solution to (\ref{dualaftertau}).  In virtue of the $\mathbb{G}$-martingale decomposition in  \cite{ChoulliAlharbi} (see Theorem \ref{GeneralDefaltorDescription4afterTau}), these structures were expected, and the novelty herein resides in giving a deep description on which part of $\tau$  plays central role and how it does it.\\

One of the important application of  Theorem \ref{generalpredictable} is the quantification of impact of $\tau$ on the maximum expected logarithm utility problem. To this end, we calculate {\it the increment in maximum expected logarithm utility from terminal wealth} for the models $(S-S^{\tau},\mathbb{G})$ and $(S,\mathbb{F})$ in the following theorem. More importantly, we single out the main factors intrinsic to $\tau$, which really measure the sensitivity of the log-optimal portfolio to $\tau$, and we quantify and interpret these factors afterwards. 

\begin{theorem}\label{riskfactors}
Suppose that  (\ref{Assumptions4Tau}) is satisfied, the log-optimal portfolio rate for $(S, \mathbb F)$ exists, and \eqref{existencecond2} holds. Then there exists $\widetilde{\varphi} \in {\cal L}(S, \mathbb F)$ satisfying \eqref{G1}, and the following equalities hold
\begin{align}
   & \Delta_T (S, \tau, \mathbb F):= u_T(S-S^{\tau}, \mathbb G) - u_T(S, \mathbb F) \nonumber \\
     & =  -  \underbrace{E \left [- ((1-\widetilde{G}) \is \widetilde{\cal H}(\mathbb G))_T +((1-\widetilde{G}) \is \widetilde{\cal H}(\mathbb F))_T - \langle \widetilde{K}^{\mathbb F} -\widetilde{L}^{\mathbb F},  I_{\{G_{-}<1\}}\is{m} \rangle_T^{\mathbb F} \right ]}_{\mbox{correlation-risk-after-$\tau$}}  \label{Equation4.17} \\
    & - \underbrace{E \left [ (\widetilde{G} \is \widetilde{\cal H}(\mathbb F))_T \right]}_{\mbox{cost-of-late-investment}} +   \underbrace{E \left[ \langle \widetilde{L}^{\mathbb F},  I_{\{G_{-}<1\}}\is{m}  \rangle_T^{\mathbb F} \right ]}_{\mbox{NP($\mathbb{F}$)-correlation}} +  \underbrace{E\left[ (1-G_{-}) \is h^{(E)} \left ( m^{(1)}, \mathbb F\right )_T \right]}_{\mbox{information-premium-after-$\tau$}}, \nonumber\\
    & = -  \underbrace{E \left [ (\widetilde{G} \is \widetilde{\cal H}(\mathbb F))_T \right]}_{\mbox{cost-of-late-investment}}+  \underbrace{E \left [\int_0^T {\cal P}_t^{(N,1)} dA^{(1)}_t \right ]}_{\mbox{num\'eraire-change-premium}} + \underbrace{E \left[ \langle \widetilde{L}^{\mathbb F}, I_{\{G_{-}<1\}}\is{m} \rangle_T^{\mathbb F} \right ]}_{\mbox{NP($\mathbb{F}$)-correlation}}.
 \label{Equation4.18}   \end{align}
Here $\widetilde{K}^{\mathbb F}$ is given by \eqref{KF}, and ${\cal P}^{(N,1)}$, $\widetilde{L}^{\mathbb F}$, $\widetilde{\cal H}(\mathbb G)$, $\widetilde{\cal H}(\mathbb F)$ are given by 

\begin{align}
  {\cal P}_t^{(N,1)}& := (1-G_{t-})\left\{(\widetilde{\varphi}_t - \widetilde{\lambda}_t)^{tr} b_t - (\widetilde{\varphi}_t - \widetilde{\lambda}_t)^{tr} c_t\beta^{(m,1)}_t - \frac{1}{2} \widetilde{\varphi}_t^{tr} c_t \widetilde{\varphi}_t + \frac{1}{2} \widetilde{\lambda}_t^{tr} c_t \widetilde{\lambda}_t \right\}\nonumber \\  
  & +(1-G_{t-}) \int \left((1-f^{(m,1)}(t, x))\ln \left( \frac{1+\widetilde{\varphi}^{tr}_t x}{1+\widetilde{\lambda}^{tr}_t x}\right) -(\widetilde{\varphi}_t - \widetilde{\lambda}_t)^{tr} h(x) \right) F_1(t,dx),\label{Equation4.19} \\
  \widetilde{L}^{\mathbb F} & := -\widetilde{\lambda} \is S^{c} - \frac{\widetilde{\Xi} \widetilde{\lambda}^{tr} x}{1+\widetilde{\lambda}^{tr}x} \star(\mu -\nu), \quad \quad \widetilde{\Xi}_t^{-1} := 1 -a_t + \int \frac{\nu({\{t\}}, dx)}{1+\widetilde{\lambda}^{tr}x}, \label{Equation4.20}  \\
  \widetilde{\cal H}(\mathbb G) & := \widetilde{V}^{(1)} + \sum \left ( - \Delta \widetilde{V}^{(1)} - \ln (1- \Delta \widetilde{V}^{(1)}) \right ) + H^{(0)} (\widetilde{K}^{\mathbb F}, \mathbb F),\label{Equation4.21}  \\
  \widetilde{\cal H}(\mathbb F) & := \widetilde{V}^{\mathbb F} + \sum \left (-\Delta \widetilde{V}^{\mathbb F} - \ln (1- \Delta \widetilde{V}^{\mathbb F}) \right ) + H^{(0)} (\widetilde{L}^{\mathbb F}, \mathbb F), \label{Equation4.22} 
\end{align}
while $\widetilde{V}^{(1)}$ and $\widetilde{V}^{\mathbb F}$ are defined by \eqref{V(1)} and 
\begin{equation}\label{Equation4.23} 
   \widetilde{V}^{\mathbb F} :=  \left (\widetilde{\lambda}^{tr} b -\widetilde{\lambda}^{tr} c\widetilde{\lambda} \right ) \is A + \left (\dfrac{\widetilde{\lambda}^{tr} x}{1+ \widetilde{\lambda}^{tr} x} - \widetilde{\lambda}^{tr} h \right) \star \nu. 
\end{equation}
\end{theorem}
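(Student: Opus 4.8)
The plan is to compute each of the two quantities $u_T(S-S^\tau,\mathbb G)$ and $u_T(S,\mathbb F)$ through the duality machinery, subtract, and then rearrange the outcome using the Az\'ema-supermartingale projection identities and the predictable characteristics of $(S,\tau)$. \emph{Step 1 (the value $u_T(S-S^\tau,\mathbb G)$).} The standing assumptions together with Theorem~\ref{sufficientcond} guarantee that the log-optimal portfolio for $(S-S^\tau,\mathbb G)$ exists, so by \cite[Theorem~2.1]{ChoulliYansori1} one has $u_T(S-S^\tau,\mathbb G)=\inf_{Z^{\mathbb G}\in\mathcal D(S-S^\tau,\mathbb G)}E[-\ln Z^{\mathbb G}_T]=E[-\ln\widetilde Z^{\mathbb G}_T]$. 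Combining Theorem~\ref{existencetheorem}(d) with \eqref{optimalZF}, this infimum is attained at a unique $\widetilde Z^{\mathbb F}\in\mathcal D(S^{(1)},\mathbb F)$ and, by \eqref{Equation4.13}, $\widetilde Z^{\mathbb G}=(\widetilde Z^{\mathbb F}/(\widetilde Z^{\mathbb F})^\tau)/\mathcal E(I_{\Rbrack\tau,\infty\Lbrack}\is m^{(1)})$, so that \eqref{fromZF2KV} of Lemma~\ref{deflator4hellinger}-(d) applies to $\widetilde Z^{\mathbb F}$. Identifying the abstract pair $(K^{\mathbb F},V)$ appearing there with the explicit pair $(\widetilde K^{\mathbb F},\widetilde V^{(1)})$ of Theorem~\ref{generalpredictable}---legitimate because $\widetilde Z^{\mathbb G}$ determines its $\mathbb F$-deflator uniquely and its decomposition $Z^{\mathbb F}=\mathcal E(K^{\mathbb F})\mathcal E(-V)$ is unique by Theorem~\ref{GeneralDefaltorDescription4afterTau}(b), and observing that $\widetilde V^{(1)}$ is continuous because $A$ is---one obtains
\[
u_T(S-S^\tau,\mathbb G)=E\Big[(1-\widetilde G)\is\big(\widetilde V^{(1)}+h^{(E)}(m^{(1)},\mathbb F)+H^{(0)}(\widetilde K^{\mathbb F},\mathbb F)+\langle\widetilde K^{\mathbb F},m^{(1)}\rangle^{\mathbb F}\big)_T\Big].
\]

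\emph{Step 2 (the value $u_T(S,\mathbb F)$ and formula \eqref{Equation4.17}).} For the base model, the $\mathbb F$-counterpart of the same theory (\cite[Theorem~2.1]{ChoulliYansori1}, \cite[Proposition~B.2]{ChoulliYansori2}) together with the predictable-characteristics description of the $\mathbb F$-log-optimal deflator gives $u_T(S,\mathbb F)=E[-\ln\widehat Z^{\mathbb F}_T]$ with $\widehat Z^{\mathbb F}=\mathcal E(\widetilde L^{\mathbb F})\mathcal E(-\widetilde V^{\mathbb F})$, where $\widetilde L^{\mathbb F}$, $\widetilde V^{\mathbb F}$ are given by \eqref{Equation4.20}, \eqref{Equation4.23} and $\widetilde\lambda$ is the $\mathbb F$-log-optimal rate; taking $-\ln$, compensating the local-martingale part and using the continuity of $\widetilde V^{\mathbb F}$ yields $u_T(S,\mathbb F)=E[\widetilde{\mathcal H}(\mathbb F)_T]$ with $\widetilde{\mathcal H}(\mathbb F)$ as in \eqref{Equation4.22}. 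Subtracting, writing $\widetilde{\mathcal H}(\mathbb F)_T=(\widetilde G\is\widetilde{\mathcal H}(\mathbb F))_T+((1-\widetilde G)\is\widetilde{\mathcal H}(\mathbb F))_T$, and using that by continuity of $\widetilde V^{(1)}$ definition \eqref{Equation4.21} gives $(1-\widetilde G)\is\widetilde{\mathcal H}(\mathbb G)=(1-\widetilde G)\is(\widetilde V^{(1)}+H^{(0)}(\widetilde K^{\mathbb F},\mathbb F))$, one reduces $\Delta_T(S,\tau,\mathbb F)$ to a combination of $E$-integrals, against $1-\widetilde G$ and $\widetilde G$, of $\widetilde{\mathcal H}(\mathbb G)$, $\widetilde{\mathcal H}(\mathbb F)$, $h^{(E)}(m^{(1)},\mathbb F)$ and $\langle\widetilde K^{\mathbb F},m^{(1)}\rangle^{\mathbb F}$. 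Since $m^{(1)}=-(1-G_-)^{-1}I_{\{G_-<1\}}\is m$ and since $\langle\widetilde K^{\mathbb F},m\rangle^{\mathbb F}$ and $h^{(E)}(m^{(1)},\mathbb F)$ are $\mathbb F$-predictable, the Az\'ema-supermartingale identities ${}^{p,\mathbb F}(I_{\Rbrack\tau,\infty\Lbrack})=1-G_-$ and $\widetilde G-G_-=\Delta m$ let one replace the $(1-\widetilde G)$-weights against these integrators by $(1-G_-)$-weights; finally, adding and subtracting $E[\langle\widetilde L^{\mathbb F},I_{\{G_-<1\}}\is m\rangle^{\mathbb F}_T]$ reorganizes the remaining terms into the grouping displayed in \eqref{Equation4.17}, which is how the ``information-premium-after-$\tau$'', ``NP($\mathbb F$)-correlation'', ``cost-of-late-investment'' and ``correlation-risk-after-$\tau$'' terms arise.

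\emph{Step 3 (formula \eqref{Equation4.18} and the main obstacle).} The second equality is obtained from \eqref{Equation4.17} by a pure predictable-characteristics computation: one expands $\widetilde{\mathcal H}(\mathbb G)$, $\widetilde{\mathcal H}(\mathbb F)$, $h^{(E)}(m^{(1)},\mathbb F)$ and $\langle\widetilde K^{\mathbb F}-\widetilde L^{\mathbb F},I_{\{G_-<1\}}\is m\rangle^{\mathbb F}$ in terms of the characteristics $(b,c,F,A)$ of $S$, the functionals \eqref{Proceses(m,1)} of Jacod's decomposition \eqref{Decomposition4m} of $m$, and the rates $\widetilde\varphi$ and $\widetilde\lambda$---using \eqref{KF}, \eqref{Gamma(1)andf(op)} and the last identity in \eqref{Equation4.13} to express $\widetilde K^{\mathbb F}$ through the characteristics---and checks that everything except the ``cost-of-late-investment'' and ``NP($\mathbb F$)-correlation'' terms collapses to $E[\int_0^T\mathcal P_t^{(N,1)}\,dA_t^{(1)}]$ with $\mathcal P^{(N,1)}$ as in \eqref{Equation4.19}; the two first-order (Lagrange) conditions---\eqref{G1} for $\widetilde\varphi$ and its $(S,\mathbb F)$-counterpart for $\widetilde\lambda$---are the essential simplifying inputs that make the $\widetilde\varphi-\widetilde\lambda$ combinations in \eqref{Equation4.19} appear. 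I expect this last reconciliation at the level of predictable characteristics to be the main obstacle: keeping the three compensators $\nu$, $\nu_1=I_{\{G_-<1\}}\nu$ and $(1-f^{(m,1)})\is\nu_1$ apart, handling the indicators $I_{\{G_-<1\}}$ and $I_{\Rbrack\tau,\infty\Lbrack}$ correctly under predictable projection, and verifying that the hatted quantities $\widehat{f^{(op)}}$, $\widehat{f^{(m,1)}}$ and $\reallywidehat{f^{(op)}f^{(m,1)}}$ entering $\widetilde\Gamma^{(1)}$ in \eqref{Gamma(1)andf(op)} combine exactly as required. By contrast, Steps 1 and 2 are comparatively short, since the analytic heart---the passage from $\mathbb G$-Hellinger quantities to $\mathbb F$-ones---is already contained in Lemma~\ref{deflator4hellinger} and Theorem~\ref{existencetheorem}.
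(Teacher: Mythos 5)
Your overall strategy---compute $u_T(S-S^\tau,\mathbb G)$ and $u_T(S,\mathbb F)$ through the dual/Hellinger machinery, subtract, and rearrange via the $(1-\widetilde G)$ vs.\ $(1-G_{-})$ projection identities---is the same one the paper pursues, and your choice of tools (Theorem~\ref{existencetheorem}(d), Lemma~\ref{deflator4hellinger}(d), Theorem~\ref{generalpredictable}, and the splitting $\widetilde{\mathcal H}(\mathbb F)_T=(\widetilde G\is\widetilde{\mathcal H}(\mathbb F))_T+((1-\widetilde G)\is\widetilde{\mathcal H}(\mathbb F))_T$) is the right one. There are, however, two genuine problems.

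First, your repeated appeal to ``continuity of $\widetilde V^{(1)}$ (resp.\ $\widetilde V^{\mathbb F}$) because $A$ is continuous'' is not safe. The processes $\widetilde V^{(1)}$ and $\widetilde V^{\mathbb F}$ contain a $\star\nu_1$ (resp.\ $\star\nu$) part, and $\nu$ is not atomless in general: the paper's whole apparatus of $a_t=\nu(\{t\},\mathbb R^d)$, $\widehat W$, $\widetilde\Gamma^{(1)}$ and $\widetilde\Xi$ is nontrivial precisely because $\Delta\widetilde V^{(1)}$ and $\Delta\widetilde V^{\mathbb F}$ are not zero (Lemma~\ref{lemma6.5}(c) explicitly computes $1-\Delta\widetilde V^{(1)}=(\widetilde\Gamma^{(1)})^{-1}$). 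Dropping the jump-sum terms in $\widetilde{\mathcal H}(\mathbb G)$ and $\widetilde{\mathcal H}(\mathbb F)$ therefore changes the claim. Relatedly, the identification of the abstract pair $(K^{\mathbb F},V)$ of Lemma~\ref{deflator4hellinger}(d) (where $Z^{\mathbb F}=\mathcal E(K^{\mathbb F})\exp(-V)$) with $(\widetilde K^{\mathbb F},\widetilde V^{(1)})$ from Theorem~\ref{generalpredictable} (where $\widetilde Z^{\mathbb F}=\mathcal E(\widetilde K^{\mathbb F})\,\mathcal E(-\widetilde V^{(1)})$) requires converting $\mathcal E(-\widetilde V^{(1)})$ to $\exp$-form, i.e.\ $V=\widetilde V^{(1)}+\sum(-\Delta\widetilde V^{(1)}-\ln(1-\Delta\widetilde V^{(1)}))$. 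This is exactly the correction that Lemma~\ref{lemma6.5}(c) packages and that makes the sum term in \eqref{Equation4.21} appear; your identification $V=\widetilde V^{(1)}$ is therefore off by that correction unless $\nu$ has no atoms.

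Second, Step~3 is only a sketch, and it is not a straightforward ``pure characteristics reconciliation of \eqref{Equation4.17} into \eqref{Equation4.18}.'' The paper does not derive \eqref{Equation4.18} from \eqref{Equation4.17}; it proves \eqref{Equation4.18} first by expanding $u_T(S-S^\tau,\mathbb G)=E[\ln{\mathcal E}_T(\widetilde\varphi\is(S-S^\tau))]$ directly into predictable characteristics (via the explicit $H^{(0)}(\widetilde K^{\mathbb G},\mathbb G)$ compensation of Lemma~\ref{lemma6.5}(c) plus (4.31)), and then subtracts $u_T(S,\mathbb F)$ using Lemma~\ref{lemma6.5}(a)--(b). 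Formula \eqref{Equation4.17} is proved separately by taking logarithms of the ${\mathcal E}$-relation $\mathcal E(\widetilde K^{\mathbb G})=\mathcal E(I_{\Rbrack\tau,\infty\Lbrack}\is\widetilde K^{\mathbb F})/\mathcal E(I_{\Rbrack\tau,\infty\Lbrack}\is m^{(1)})$ to relate $H^{(0)}(\widetilde K^{\mathbb G},\mathbb G)$ to $I_{\Rbrack\tau,\infty\Lbrack}\is H^{(0)}(\widetilde K^{\mathbb F},\mathbb F)$ and $I_{\Rbrack\tau,\infty\Lbrack}\is H^{(0)}(m^{(1)},\mathbb F)$, then invoking the $\mathbb G$-local martingale decomposition \eqref{Glocalmartingaleaftertau} and the Hellinger identity \eqref{m(1)2Hellinger}. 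Either way, the compensation formulas of Lemma~\ref{lemma6.5} (which you never mention) are the essential computational content, and they are not subsumed by the first-order conditions \eqref{G1}/\eqref{logOptimalEquation4S} alone.
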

\subsection{The case when $(S, \mathbb F)$ is a jump-diffusion model}\label{Subsection4Example}
In this section, we illustrate the results of Sections 3 and 4 on the jump-diffusion market model.  
To this end, we consider the case of one-dimensional jump-diffusion framework for the market model $(S,\mathbb F, P)$. Precisely, we suppose that on $(\Omega, {\cal{F}},P)$ a one-dimensional Brownian motion and a Poisson process $N$ with rate $\lambda>0$ are defined such that $W$ and $N$ are independent. Then $\mathbb F$ is the completed and right-continuous filtration generated by $(W,N)$, and the stock's price $S$ is given by 
 \begin{equation} \label{jumpmodel}
     S_t = S_0 +  \int_0^t S_{s-} \mu_s ds + (S_{-} \sigma  \is W)_t + (S_{-} \zeta\is N^{\mathbb F})_t, \quad \quad N^{\mathbb F}:= N - \lambda t.
 \end{equation}
Here,  ${N}^{\mathbb F}$ is the martingale compensated Poisson process, and $ \mu,\ \sigma$ , an $ \zeta $ are bounded adapted processes, and there exists a constant $\delta \in (0,+\infty)$ such that 
\begin{equation}\label{coefficientscond}
 \zeta>-1\quad\mbox{and}\quad \sigma+\vert\zeta\vert\geq \delta,\ P\otimes dt\mbox{-a.e.}.
\end{equation}
As $\mathbb{F}={\mathbb{F}}^{(W,N)}$ and $m$ is an $\mathbb{F}$-martingale, then there exists a pair $(\varphi^{(m)},\psi^{(m)})$ of $\mathbb{F}$-predictable processes such that 
\begin{equation}\label{Phi(m)Psi(m)}
m=m_0+\varphi^{(m)}\is W+\psi^{(m)}\is {N}^{\mathbb F},\quad\mbox{and}\quad \int_0^T \left((\varphi^{(m)}_s)^2+(\psi^{(m)}_s)^2\right)ds<+\infty\quad P\mbox{-a.s.}.\end{equation}
Thus, in this framework, $\tau$ is parametrized by $(\varphi^{(m)},\psi^{(m)}, G_{-})$. However, when dealing with $(S-S^{\tau},\mathbb{G})$ as it is shown previously, only the triplet $(\varphi^{(m,1)},\psi^{(m,1)}, 1-G_{-})$ defined below appears naturally.
\begin{equation}\label{Psi(m,1}
\varphi^{(m,1)}:={{\varphi^{(m)}}\over{1-G_{-}}}I_{\{G_{-}<1\}},\quad \psi^{(m,1)}:={{\psi^{(m)}}\over{1-G_{-}}}I_{\{G_{-}<1\}}.\end{equation}

\begin{theorem}\label{logportfolio-jumpmodel}
Suppose (\ref{Assumptions4Tau}) holds, $S$ is given by (\ref{jumpmodel})-(\ref{coefficientscond}), and $\mathbb F = \mathbb F^{(W, N)}$. Consider  
\begin{equation}\label{PhiTilde-LamdaTilde}
    \widetilde{\varphi}:= \dfrac {\zeta \Lambda_m + \vert \zeta \vert \sqrt{\Lambda^2_m + 4 \sigma^2 \lambda (1+\psi^{(m,1)})}-2\sigma^2}{2 \sigma^2 \zeta S_{-}} {I}_{\{G_{-}<1\}}\ \&\ 
    \widetilde{\lambda}:= \dfrac {\zeta \Lambda_0 + \vert \zeta \vert \sqrt{\Lambda^2_0 + 4 \sigma^2 \lambda}}{2 \sigma^2 \zeta S_{-}} - \dfrac{1}{\zeta S_{-}}  ,   
\end{equation}
where $\Lambda_m := \mu -\lambda \zeta - \sigma\varphi^{(m,1)} + \sigma^2\zeta^{-1}$ and $\Lambda_0:= \mu -\lambda \zeta+ \sigma^2\zeta^{-1}$.\\
Then  $\widetilde{\varphi}I_{\Rbrack \tau, \infty \Lbrack}$ is the num\'eraire portfolio rate for $(S-S^{\tau}, \mathbb G)$ and $\widetilde{\lambda}$ is the log-optimal portfolio rates for $(S, \mathbb F)$. If furthermore $\tau$ satisfies 
\begin{equation}\label{sufficientcondjump}
    E \left [\int_0^T (1-G_{s-}) \left( (\varphi^{(m,1)}_s)^2+ \lambda (1+\psi^{(m,1)}_s)\ln(1+\psi^{(m,1)}_s)-\lambda \psi^{(m,1)}_s\right) ds \right] < +\infty,
\end{equation}
then the following equivalent assertions hold.\\
{\rm{(a)}} The solution of (\ref{dualaftertau}) exists, and it is given by
\begin{equation}
    \widetilde{Z}^{\mathbb G}:= {\cal E}(\widetilde{K}^{\mathbb G}), \quad \widetilde{K}^{\mathbb G}:= - \widetilde{\varphi} \sigma \is {\cal T}^{(a)}(W) + \frac{(\psi^{(m,1)} -1) \widetilde{\varphi}\zeta S_{-}}{1+\widetilde{\varphi} \zeta S_{-}} \is {\cal T}^{(a)}(N^{\mathbb F}).
\end{equation}
{\rm{(b)}}  $\widetilde{\varphi}I_{\Rbrack \tau, \infty \Lbrack}$ is the log-optimal rate for the model $(S-S^{\tau}, \mathbb G)$.
\end{theorem}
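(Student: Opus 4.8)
The plan is to derive every claim by specializing the general results of Sections~\ref{Section3} and~\ref{Section4} --- Theorem~\ref{generalpredictable} for the description, Theorem~\ref{sufficientcond} for existence, and the duality in Theorems~\ref{existencetheorem} and~\ref{generalpredictable} --- to this Poisson--Brownian market; the only genuinely new input is the identification of the $\mathbb F$-predictable characteristics of $S$ and of the Jacod decomposition \eqref{Decomposition4m} of $m$. First I would record these. Taking the continuous increasing process $A_t=t$, the model \eqref{jumpmodel} has $S^c=S_-\sigma\is W$, hence $c_t=S_{t-}^2\sigma_t^2$; the compensator of its jump measure is $\nu(dt,dx)=\lambda\,dt\,\delta_{S_{t-}\zeta_t}(dx)$, so $F_t(dx)=\lambda\,\delta_{S_{t-}\zeta_t}(dx)$, there are no fixed times of discontinuity ($a\equiv0$), and $b_t=S_{t-}\mu_t-\lambda\bigl(S_{t-}\zeta_t-h(S_{t-}\zeta_t)\bigr)$; the characteristics of $S^{(1)}$ then follow from \eqref{Characteristics4S1}. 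By the martingale representation property of $(W,N)$ and \eqref{Phi(m)Psi(m)}, the decomposition \eqref{Decomposition4m} of $m$ is the trivial one: $m^{\perp}=0$, $g^{(m)}=0$, $\beta^{(m)}=\varphi^{(m)}/(S_-\sigma)$ on $\{\sigma\neq0\}$ (its value where $\sigma=0$ is immaterial, as it only multiplies $S^c$), and $f^{(m)}(t,x)\equiv\psi^{(m)}_t$; hence, by \eqref{Proceses(m,1)}, $\beta^{(m,1)}=\varphi^{(m,1)}/(S_-\sigma)$, $f^{(m,1)}(t,x)\equiv\psi^{(m,1)}_t$, $g^{(m,1)}=0$, and $\widetilde\Gamma^{(1)}\equiv1$ because $a\equiv0$ annihilates all the ``hatted'' functionals in \eqref{Gamma(1)andf(op)}. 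Finally I would note that, $\tau$ being honest, $G=G_-=0$ on $\Rbrack\tau,\infty\Lbrack$, so on that set $\varphi^{(m,1)}=\varphi^{(m)}$, $\psi^{(m,1)}=\psi^{(m)}$ and $(1-G)^{-1}=1$.

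Next I would identify $\widetilde\varphi$ and $\widetilde\lambda$. Since $d=1$ and $F_1$ is carried by the single point $x=S_-\zeta$, the variational inequality \eqref{G1} is equivalent to the vanishing of the bracket therein (the admissibility region $1+\widetilde\varphi S_-\zeta>0$ is open, so the solution is interior). Substituting the characteristics, the two truncation terms cancel; dividing by $S_-$ and setting $u:=1+\widetilde\varphi S_-\zeta$, the equation collapses to a quadratic in $u$ with leading coefficient $\sigma^2$, and the constraint $u>0$ selects a unique root, which after unwinding $\widetilde\varphi=(u-1)/(S_-\zeta)$ reproduces the first expression in \eqref{PhiTilde-LamdaTilde} (the degenerate case $\sigma=0,\ \zeta\neq0$ allowed by $\sigma+|\zeta|\geq\delta$ is handled identically, the quadratic becoming linear). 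Repeating the computation with $m$ replaced by $0$, i.e. $\beta^{(m,1)}=0$ and $f^{(m,1)}=0$, yields the analogous characterisation of the $\mathbb F$-num\'eraire/log-optimal rate of $(S,\mathbb F)$ as the positive root of $\sigma^2u^2-\zeta\Lambda_0u-\lambda\zeta^2=0$, which unwinds to the second expression in \eqref{PhiTilde-LamdaTilde}. Since $\widetilde\varphi\in{\cal L}(S^{(1)},\mathbb F)$ and $\widetilde\lambda\in{\cal L}(S,\mathbb F)$ --- the constraints $1+\widetilde\varphi\Delta S^{(1)}=u>0$ and $1+\widetilde\lambda\Delta S>0$ hold, and $S$- and $S^{(1)}$-integrability follow from boundedness of $\mu,\sigma,\zeta$ --- Theorem~\ref{generalpredictable} (together with \cite[Theorem~2.1]{ChoulliYansori1}), which characterises the num\'eraire portfolio rate of $(S-S^\tau,\mathbb G)$ through \eqref{G1} alone, shows that $\widetilde\varphi I_{\Rbrack\tau,\infty\Lbrack}$ is that rate, while boundedness of the coefficients together with \cite[Theorem~2.1]{ChoulliYansori1} and \cite[Proposition~B.2]{ChoulliYansori2} makes $\widetilde\lambda$ the log-optimal rate of $(S,\mathbb F)$.

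For the conditional assertions I would first check that \eqref{sufficientcondjump} is exactly condition \eqref{existencecond2} of Theorem~\ref{sufficientcond} in this model: from $m^{(1)}=-(1-G_-)^{-1}I_{\{G_-<1\}}\is m$ and \eqref{Phi(m)Psi(m)} one reads off $(m^{(1)})^c$ and the jump of $m^{(1)}$ along $N$ in terms of $\varphi^{(m,1)},\psi^{(m,1)}$, the admissibility $1+\Delta m^{(1)}>0$ on $\{G_-<1\}$ being a restatement of \eqref{Assumptions4Tau} through $\widetilde G=G_-+\Delta m$; computing the $\mathbb F$-predictable projection then shows that $(1-G_-)\is h^{(E)}(m^{(1)},\mathbb F)$ has a $dt$-density equal, up to a fixed positive constant, to the integrand of \eqref{sufficientcondjump}, so \eqref{existencecond2} and \eqref{sufficientcondjump} are equivalent. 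Since $(S,\mathbb F)$ --- hence $(S^{(1)},\mathbb F)$, via the restriction $\widetilde\lambda I_{\{G_-<1\}}$ --- admits a log-optimal portfolio, Theorem~\ref{sufficientcond} gives that $(S-S^\tau,\mathbb G)$ admits one; being then also a num\'eraire portfolio by \cite[Theorem~2.1]{ChoulliYansori1}, and the num\'eraire portfolio being unique, its rate must coincide with the $\widetilde\varphi I_{\Rbrack\tau,\infty\Lbrack}$ already found, which is assertion~(b). Assertion~(a) then follows from the duality in Theorems~\ref{existencetheorem} and~\ref{generalpredictable}: the solution of \eqref{dualaftertau} is $\widetilde Z^{\mathbb G}=1/{\cal E}(\widetilde\varphi I_{\Rbrack\tau,\infty\Lbrack}\is S)={\cal E}(\widetilde K^{\mathbb G}){\cal E}(-I_{\Rbrack\tau,\infty\Lbrack}\is\widetilde V^{(1)})$, where $\widetilde V^{(1)}=0$ because the vanishing of the bracket in \eqref{G1} makes the two terms of \eqref{V(1)} cancel, so $\widetilde Z^{\mathbb G}={\cal E}(\widetilde K^{\mathbb G})$. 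It then remains to rewrite \eqref{Equation4.14} in the present coordinates: using that ${\cal T}^{(a)}$ commutes with predictable integrands and that the last term of \eqref{honestMhat} vanishes under \eqref{Assumptions4Tau}, that $F_1$ is carried at $S_-\zeta$ with $\widetilde\Gamma^{(1)}\equiv1$ and $(1-f^{(m,1)})\is\nu=(1-\psi^{(m,1)})\lambda\,dt$, and the identity $[m,N^{\mathbb F}]=\psi^{(m)}\is N$ together with $(1-G)^{-1}=1$ on $\Rbrack\tau,\infty\Lbrack$, one converts the two summands of \eqref{Equation4.14} into the multiples of ${\cal T}^{(a)}(W)$ and ${\cal T}^{(a)}(N^{\mathbb F})$ claimed for $\widetilde K^{\mathbb G}$ in~(a).

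The one genuinely delicate step, I expect, is this last reduction: turning the ${\cal T}^{(a)}(S^c)$- and $\bigl(\mu-(1-f^{(m,1)})\is\nu\bigr)$-form of $\widetilde K^{\mathbb G}$ coming from \eqref{Equation4.14} into the $\bigl({\cal T}^{(a)}(W),{\cal T}^{(a)}(N^{\mathbb F})\bigr)$-form of~(a) forces one to track precisely how the compensator correction $(1-f^{(m,1)})\is\nu$ combines with the bracket $[m,N^{\mathbb F}]$ that sits inside ${\cal T}^{(a)}(N^{\mathbb F})$, and how both collapse thanks to $a\equiv0$ and $G=G_-=0$ on $\Rbrack\tau,\infty\Lbrack$. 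A lesser but omnipresent bookkeeping point is the passage between the portfolio rate $\widetilde\varphi$ (number of shares per unit of current wealth) and the wealth fraction $\widetilde\varphi S_-$, in terms of which the quadratic equation is naturally phrased.
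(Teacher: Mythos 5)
The approach is the same as the paper's (specialize the predictable characteristics and the Jacod decomposition of $m$, then invoke Theorems~\ref{generalpredictable}, \ref{sufficientcond}, \ref{existencetheorem}), and most of the computation is carried out correctly — the quadratic in $u=1+\widetilde\varphi S_-\zeta$, the Jacod triplet, $a\equiv0\Rightarrow\widetilde\Gamma^{(1)}\equiv1$, the reduction $\widetilde V^{(1)}=0$, and the identification of $h^{(E)}(m^{(1)},\mathbb F)$ with the integrand of \eqref{sufficientcondjump} all match the paper. However, there is a genuine error in the final reduction of $\widetilde K^{\mathbb G}$.

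Your claim that ``$\tau$ being honest, $G=G_-=0$ on $\Rbrack\tau,\infty\Lbrack$'' is false. An honest time is not observed by $\mathbb F$ once it has occurred; the classical example $\tau=\sup\{t\leq1:B_t=0\}$ in the Brownian filtration has $G_t=P(\exists\,s\in(t,1]:B_s=0\mid\mathcal F_t)>0$ on a large part of $\Rbrack\tau,\infty\Lbrack$. The honesty of $\tau$ only says that $\tau$ itself is $\mathcal F_t$-measurable on $\{\tau<t\}$, not that $\{\tau<t\}\in\mathcal F_t$. Consequently neither $\varphi^{(m,1)}=\varphi^{(m)}$ nor $(1-G)^{-1}=1$ holds on $\Rbrack\tau,\infty\Lbrack$. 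Worse, even granting your claim, the reduction would come out wrong: with $G=G_-=0$ you would get $\mathcal T^{(a)}(N^{\mathbb F})=(1+\psi^{(m)})I_{\Rbrack\tau,\infty\Lbrack}\is N-\lambda I_{\Rbrack\tau,\infty\Lbrack}\is dt$, so $(1-\psi^{(m)})\,d\mathcal T^{(a)}(N^{\mathbb F})$ would carry a factor $1-(\psi^{(m)})^2$ in front of $dN$, not $1$, and would not match the $\bigl(\mu-(1-f^{(m,1)})\is\nu\bigr)$-term of \eqref{Equation4.14}. The correct route, which the paper uses, is the pointwise identity on $\{\Delta N>0\}\cap\Rbrack\tau,\infty\Lbrack$
\begin{equation*}
\bigl(1-\psi^{(m,1)}\bigr)\Bigl(1+\frac{\psi^{(m)}}{1-G}\Bigr)=1,
\end{equation*}
which does \emph{not} need $G=0$; it is a consequence of the structure relations $1-\widetilde G=(1-G_-)(1-\psi^{(m,1)})$ and $1-G=1-\widetilde G+\Delta D^{o,\mathbb F}$ on $\{\Delta N>0\}$, plus the fact that $\psi^{(m)}\,\Delta D^{o,\mathbb F}$ vanishes at Poisson jump times. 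Your bookkeeping of $G$ versus $\widetilde G$ versus $G_-$ at jump times needs to be redone along these lines; everything else in your proposal is sound.
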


\begin{proof} In virtue of (\ref{jumpmodel}), we deduce that $\Delta S = S_{-} \zeta \Delta N$,  and hence in this setting we calculate the pair  $(\mu, \nu)$ and the predictable characteristics quadruplet $(b,c,F, A)$ for the model $(S,\mathbb{F})$ as follows. 
\begin{equation}\label{munu}
    \mu(dx, dt):= \delta_{S_{t-} \zeta_{t}}(dx) dN_t,\quad    \nu(dx, dt):= \lambda \delta_{S_{t-} \zeta_t} (dx) dt,\end{equation}
 where $\delta_a$ is the Dirac mass at point $a$, and
  \begin{equation}\label{Characteristic4S}  
  b= (\mu - \lambda \zeta I_{\{\vert \zeta \vert S_{-} > 1  \}})S_{-}, \quad c = (\sigma S_{-})^2, \quad F_t(dx) = \lambda \delta_{\zeta_t S_{t-}}(dx),  \quad A_t = t.\end{equation}
Then, we derive $ {\cal L} (S, \mathbb F)$, which is an open set in $\mathbb R$ (when we fix $(\omega,t)\in \Omega\times[0,+\infty)$) and is
\begin{align}\label{phispace}
    {\cal L}(S, \mathbb F)  := \{ \varphi\ \mathbb{F}\mbox{-predictable} \ \vert \ \varphi S_{-} \zeta > -1 \ P\otimes dt\mbox{-a.e.}\} = ( -1 / (S_{-} \zeta )^{+}, 1/ (S_{-} \zeta )^{-}),  
\end{align}
with the convention $1/0^+= + \infty$. The Jacod's components  for $m$, i.e. $(\beta^{(m)}, f^{(m)}, g^{(m)}, m^{\perp})$ in this framework take the form of 
\begin{equation}\label{Charateristics4m}
(\beta^{(m)}, f^{(m)}, g^{(m)}, m^{\perp})=(\dfrac{\varphi^{(m)}}{\sigma S_{-}},\psi^{(m)}, 0, 0).\end{equation}
The rest of this proof is divided into two parts.\\
{\bf Part 1.} This part proves that $ \widetilde{\lambda}$ is the log-optimal portfolio rate of $(S,\mathbb{F})$.\\
Thanks to \cite[Theorem 2.1]{ChoulliYansori1}, the log-optimal portfolio rate is the unique solution $\widetilde{\psi}\in  {\cal L}(S, \mathbb F) $ to 
\begin{equation}\label{logOptimalEquation4S}
 (\psi - \widetilde{\psi})^{tr}\left\{b- c\widetilde{\psi}+ \int \left ( (1+\widetilde{\psi}^{tr}x)^{-1}x- h(x)\right)F(dx)\right\} \leq 0,\ P\otimes A\mbox{-a.e.},\end{equation}
     for any $\psi \in {\cal L}(S, \mathbb F)$, and 
      \begin{equation} \label{Integrability4PsiTilde}
      E \left[(1-G_{-}) \is\left( \widetilde{V}^{\mathbb{F}}+{\widetilde\psi}^{tr} c {\widetilde\psi} \is A+{\mathcal K}_{log} ({\widetilde\psi}^{tr} x)\star \nu\right)_T \right ] < + \infty,\end{equation}
     where ${\mathcal K}_{log}$ is given by  (\ref{klog}) and $\widetilde{V}^{\mathbb{F}}$ is given by 
     \begin{equation} \label{Vtilde(F)}
     \widetilde{V}^{\mathbb{F}} := \widetilde{\psi}^{tr} \left(b-c\widetilde\psi\right)\is A + \left( {{{\widetilde\psi}^{tr} x }\over{1+\widetilde{\psi}^{tr} x}} - \widetilde{\psi}^{tr}h(x)\right)\star \nu.\end{equation}
Thus, by inserting (\ref{Characteristic4S} ) in (\ref{logOptimalEquation4S}),  and using (\ref{phispace}) which claims that $ {\cal L}(S, \mathbb F) $ is an open set of $\mathbb{R}$ point-wise in $(\omega,t)$, we deduce that $\widetilde{\psi}\in  {\cal L}(S, \mathbb F)$  is the unique solution to
\begin{align*}
    0 = \mu -\lambda \zeta - \widetilde{\psi} S_{-} \sigma^2 + \frac{ \lambda \zeta }{1+ \widetilde{\psi} \zeta S_{-}}.
\end{align*}
It is clear that $\widetilde\lambda$ given in (\ref{PhiTilde-LamdaTilde}) is the unique solutions to the above equation belonging to $ {\cal L}(S, \mathbb F) $, and hence $\widetilde\lambda$ is the num\'eraire portfolio rate for $(S,\mathbb{F})$. To prove that $\widetilde\lambda$ is the log-optimal portfolio rate, we need to check that (\ref{Integrability4PsiTilde}) holds. To this end, it is easy to see that in our setting,  (\ref{Integrability4PsiTilde})  becomes 
\begin{equation}
   E \left [ \int_0^T (1-G_{t-})\left \{ (\widetilde{\lambda}_t \sigma_t S_{t-})^2 + \lambda {\cal K}_{log}(\widetilde{\lambda}_t S_{t-} \zeta_t)   \right\} dt \right ] < +\infty .
\end{equation}
This latter condition is always true due to the fact that the processes $\widetilde{\lambda}  S_{-}$, $\sigma$ and
\begin{equation*}
   (1+ \widetilde{\lambda} S_{-} \zeta)^{-1} = \frac{-\zeta \Lambda + \sqrt{(\zeta \Lambda)^2 +4 \sigma^2 \lambda \zeta^2}}{2 \lambda \zeta^2},
\end{equation*}
are bounded due to \eqref{coefficientscond}. This proves that indeed $\widetilde{\lambda}$ is the log-optimal portfolio rate of $(S,\mathbb{F})$.\\
{\bf Part 2.} Here we prove that $ \widetilde{\varphi}I_{\Rbrack \tau, \infty \Lbrack}$ is the num\'eraire portfolio rate for $(S-S^{\tau},\mathbb{G})$, or equivalently  $ \widetilde{\varphi}$  is the unique solution to (\ref{G1}). Then by plugging (\ref{Characteristic4S}) and (\ref{Charateristics4m}) in (\ref{G1}),  and using the fact that $ {\cal L}(S, \mathbb F) $ is an open set of $\mathbb{R}$ point-wise in $(\omega,t)$ due to (\ref{phispace}) and $ {\cal L}(S^{(1)}, \mathbb F)=\left\{\varphi\ \mathbb{F}\mbox{-predictable}\ :\ \varphi{I}_{\{G_{-}<1\}}\in{\cal L}(S, \mathbb F)\right\}$, we deduce that on $\widetilde{\varphi}$ is the unique solution to 
\begin{align*}
    0 = \left(\mu -\lambda \zeta - \varphi^{(m,1)} \sigma - \widetilde{\varphi} S_{-} \sigma^2 + \frac{( 1-\psi^{(m,1)}) \lambda \zeta }{1+ \widetilde{\varphi} \zeta S_{-}}\right)I_{\{G_{-}<1\}}.
\end{align*}
Thus, direct calculation shows that $\widetilde{\varphi}$ is the one given by (\ref{logportfolio-jumpmodel}), and this proves that  $ \widetilde{\varphi}I_{\Rbrack \tau, \infty \Lbrack}$ is the num\'eraire portfolio rate of $(S-S^{\tau},\mathbb{G})$, and the second par is complete.\\
{\bf Part 3.}  This part proves that the two assertion (a) and (b) are equivalent and hold under (\ref{sufficientcondjump}). To this end,  we remark that due to (\ref{Phi(m)Psi(m)}), (\ref{m(a)}) and Definition \ref{hellinger}, we calculate 
\begin{align*}
    h^{(E)}(m^{(1)}, \mathbb F) & = \frac{1}{2} \int^{\cdot}_0(\varphi^{(m,1)}_s)^2 ds + \left ( \sum_{0<s\leq\cdot} (1+\psi^{(m,1)}_s\Delta{N}_s)\ln(1+\psi^{(m,1)}_s\Delta{N}_s)- \psi^{(m,1)}_s\Delta{N}_s \right )^{p, \mathbb F}\\
    & = \frac{1}{2} \int^{\cdot}_0(\varphi^{(m,1)}_s)^2 ds + \left (\sum_{0<s\leq\cdot} (1+\psi^{(m,1)}_s)\ln(1+\psi^{(m,1)}_s) \Delta{N}_s- \psi^{(m,1)}_s\Delta{N}_s \right )^{p, \mathbb F}\\
    & = \int_0^T \left ( \frac{1}{2} (\varphi^{(m,1)}_s)^2 + \lambda (1+\psi^{(m,1)}_s)\ln(1+\psi^{(m,1)}_s) - \lambda \psi^{(m,1)}_s \right) ds.
\end{align*}
Thus, the condition (\ref{sufficientcondjump}) is the version of the condition (\ref{existencecond2}) in the current setting of jump-diffusion. Hence, by combining part 1 (assertion (a)) and Theorem \ref{sufficientcond}, we deduce that assertion (b) holds. The equivalence between assertion (a) and (b) is a direct consequence of Theorem \ref{generalpredictable} combined with part 2, and both (\ref{Characteristic4S} ) and (\ref{Charateristics4m}) which imply 
\begin{align*}
    \widetilde{K}^{\mathbb G} = & - \widetilde{\varphi} \is {\cal T}^{(a)}(S^c) - \dfrac{\widetilde{\varphi} x}{1+ \widetilde{\varphi}x} I_{\Rbrack \tau, \infty \Lbrack}  \star (\mu - (1-f^{(m,1)}) \is \nu) \\
    = & - \widetilde{\varphi} \sigma{S}_{-}\is {\cal T}^{(a)}(W) - \frac{ \widetilde{\varphi}\zeta S_{-}}{1+\widetilde{\varphi} \zeta S_{-}} \is({I}_{\Rbrack \tau, \infty \Lbrack}\is{N}-\lambda(1-\psi^{(m,1)})\is{dt}) \\
    = & - \widetilde{\varphi} \sigma{S}_{-}\is {\cal T}^{(a)}(W) - \frac{(1- \psi^{(m,1)} ) \widetilde{\varphi}\zeta S_{-}}{1+\widetilde{\varphi} \zeta S_{-}} \is {\cal T}^{(a)}(N^{\mathbb F}) 
.
\end{align*}
The last equality is a direct consequence of $(1-\psi^{(m,1)})d{\cal{T}}^{(a)}(N^{\mathbb{F}})={I}_{\Rbrack \tau, \infty \Lbrack}d{N}-\lambda(1-\psi^{(m,1)}){I}_{\Rbrack \tau, \infty \Lbrack}{dt}$. This ends the proof of the theorem. 
\end{proof}
 \subsection{Proof of Theorems \ref{generalpredictable} and  \ref{riskfactors}}\label{Subsection4Proofs}
 This subsection details the proof of Theorems  \ref{generalpredictable} and  \ref{riskfactors} and elaborates the intermediate technical lemmas that are vital for the proof of these theorems. The proof of Theorems \ref{generalpredictable}  is based essentially on three intermediate lemmas that are interesting in themselves. The first lemma determines the triplet $(\mu^{\mathbb{G}},\nu^{\mathbb{G}},S^{c,\mathbb{G}})$, which is constituted by the random measure of the jumps of $S-S^{\tau}$, its $\mathbb{G}$-compensator random measure, and the continuous $\mathbb{G}$-local martingale part of $S-S^{\tau}$.  
\begin{lemma}\label{PredictableCharateristoics4S(tau)} The following assertions hold.\\
{\rm{(a)}}The random measure of the jumps of $(S-S^{\tau},\mathbb{G})$ is given by 
\begin{equation}\label{Grandommeasure}
    \mu^{\mathbb G}(dt, dx):= I_{\{t> \tau\}} \mu(dt, dx),
\end{equation}
and hence its $\mathbb{G}$-compensator random measure, denoted by $\nu^{\mathbb{G}}$, is given by 
\begin{equation}\label{Gmeasurecompensator}
    \nu^{\mathbb G}(dt, dx):= I_{\{t>\tau\}} (1-f^{(m,1)}(x,t))\nu(dt, dx).
\end{equation}
{\rm{(b)}} The continuous $\mathbb{G}$-local martingale part of $(S-S^{\tau},\mathbb{G})$ is given by 
\begin{equation}\label{Scontinuousaftertau}
 S^{c, \mathbb G}={\cal{T}}^{(a)}(S^c)=   I_{\Rbrack \tau, \infty \Lbrack} \is S^{c} + c \beta^{(m,1)}I_{\Rbrack \tau, \infty \Lbrack} \is A .
\end{equation}
\end{lemma}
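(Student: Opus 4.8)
The plan is to identify the $\mathbb G$-semimartingale characteristics of $S-S^{\tau}$ by exploiting the martingale representation results of Theorem \ref{OptionalDecompoTheorem} applied to the pieces of $S$, and then to read off the jump measure, its compensator, and the continuous local martingale part from the canonical decomposition \eqref{CanonicalDecomposition4S}. First I would observe that $S-S^{\tau}=I_{\Rbrack\tau,+\infty\Lbrack}\is S$, so its jumps are those of $S$ restricted to the stochastic interval $\Rbrack\tau,+\infty\Lbrack$; this gives \eqref{Grandommeasure} essentially by inspection, since $\Delta(S-S^{\tau})=I_{\{t>\tau\}}\Delta S$ and the jump measure merely records the pairs $(s,\Delta S_s)$ on that set.

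For part (a), the substantive point is the compensator. The plan is to compute, for a $\widetilde{\cal P}(\mathbb F)$-measurable nonnegative functional $W$, the $\mathbb G$-dual predictable projection of $I_{\{t>\tau\}}W\star\mu$. I would use the general change-of-compensator formula under progressive enlargement after an honest time: restricted to $\Rbrack\tau,+\infty\Lbrack$, the $\mathbb G$-compensator of $\mu$ is obtained from $\nu$ by the multiplicative correction involving the Radon--Nikodym-type density of $m$ with respect to $\mu$, which is precisely the functional $f^{(m)}$ appearing in Jacod's decomposition \eqref{Decomposition4m}. Concretely, for $M\in{\cal M}_{loc}(\mathbb F)$ with $M=W\star(\mu-\nu)$, Theorem \ref{OptionalDecompoTheorem}(b) tells us that $I_{\Rbrack\tau,+\infty\Lbrack}\is M+(1-G_{-})^{-1}I_{\Rbrack\tau,+\infty\Lbrack}\is\langle m,M\rangle^{\mathbb F}$ is a $\mathbb G$-local martingale; since $\langle m,W\star(\mu-\nu)\rangle^{\mathbb F}=\widehat{Wf^{(m)}}\is A$ (using the $\mu$-component of $m$), this identifies the drift correction and, after matching it against $I_{\{t>\tau\}}W\star\mu$, yields $\nu^{\mathbb G}(dt,dx)=I_{\{t>\tau\}}(1-f^{(m,1)}(t,x))\nu(dt,dx)$ once the factor $(1-G_{-})^{-1}I_{\{G_{-}<1\}}$ absorbed in the definition \eqref{Proceses(m,1)} of $f^{(m,1)}$ is taken into account. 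I expect the bookkeeping of the $(1-G_{-})^{-1}$ normalization and the reconciliation of $\widetilde G$ versus $G_{-}$ on the exceptional set $\{\widetilde G=1>G_{-}\}$ (empty under \eqref{Assumptions4Tau}) to be the fiddly part, but not conceptually hard.

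For part (b), the plan is to apply Theorem \ref{OptionalDecompoTheorem}(b), or rather the operator ${\cal T}^{(a)}$ of \eqref{honestMhat}, to $M=S^c$. Since $S^c$ is continuous, ${\cal T}^{(a)}(S^c)=I_{\Rbrack\tau,+\infty\Lbrack}\is S^c+(1-G_{-})^{-1}I_{\Rbrack\tau,+\infty\Lbrack}\is\langle m,S^c\rangle^{\mathbb F}$, and the bracket $\langle m,S^c\rangle^{\mathbb F}=c\beta^{(m)}\is A$ by the $S^c$-component of Jacod's decomposition \eqref{Decomposition4m}. Substituting and folding the factor $(1-G_{-})^{-1}I_{\{G_{-}<1\}}$ into $\beta^{(m)}$ to produce $\beta^{(m,1)}$ gives \eqref{Scontinuousaftertau}. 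It then remains to argue that this $\mathbb G$-local martingale is indeed the continuous local martingale part of $S-S^{\tau}$: this follows because it is continuous, it is a $\mathbb G$-local martingale, and $S-S^{\tau}$ minus this process has only a jump-martingale part plus a predictable finite-variation part (coming from the compensation of $h\star\mu$ and the original drift $b\is A$ restricted to $\Rbrack\tau,+\infty\Lbrack$), so by uniqueness of the canonical decomposition of a $\mathbb G$-semimartingale the continuous martingale part must be exactly ${\cal T}^{(a)}(S^c)$.

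The main obstacle I anticipate is justifying the compensator formula \eqref{Gmeasurecompensator} rigorously rather than heuristically: one must verify that the candidate $I_{\{t>\tau\}}(1-f^{(m,1)})\nu$ is a genuine $\mathbb G$-predictable random measure (predictability of $f^{(m,1)}$ and nonnegativity of $1-f^{(m,1)}$ on $\Rbrack\tau,+\infty\Lbrack$, the latter requiring a small argument from $\widetilde G_{\tau}$ or from $a<1$ type estimates) and that $W\star(\mu^{\mathbb G}-\nu^{\mathbb G})$ is a $\mathbb G$-local martingale for a suitable class of test functionals $W$, which is where Theorem \ref{OptionalDecompoTheorem}(b) does the real work but must be applied to the purely discontinuous local martingales $W\star(\mu-\nu)$ and combined with a localization/monotone-class argument to pass to all relevant $W$.
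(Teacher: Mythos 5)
Your overall plan is sound and part (b) is essentially complete: ${\cal T}^{(a)}(S^c)$ is a continuous $\mathbb G$-local martingale, $\langle m,S^c\rangle^{\mathbb F}=c\beta^{(m)}\is A$ follows from the $S^c$-component of Jacod's decomposition \eqref{Decomposition4m}, and on $\Rbrack\tau,\infty\Lbrack$ one has $\beta^{(m)}(1-G_{-})^{-1}=\beta^{(m,1)}$; the uniqueness of the canonical decomposition then finishes (b). Note the proof of this lemma is not reproduced in the paper's appendix, so I am assessing your argument on its own terms.

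For (a) there are two concrete gaps. First, the parenthetical identity $\langle m,W\star(\mu-\nu)\rangle^{\mathbb F}=\widehat{Wf^{(m)}}\is A$ is wrong: in the paper's notation $\widehat{Wf^{(m)}}_t=\int W(t,x)f^{(m)}(t,x)\nu(\{t\},dx)$ records only the atom of $\nu$ at time $t$, whereas the bracket is, up to the usual atom corrections, $(W\,U^{(m)})\star\nu-\sum\widehat{W}\,\widehat{U^{(m)}}$, with $U^{(m)}$ rather than $f^{(m)}$ as integrand. When $a\equiv 0$ your formula gives zero while the true bracket is $(Wf^{(m)})\star\nu$. Second, and more important, you never account for the $g^{(m)}$-component of Jacod's decomposition. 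On $\{\Delta S\neq 0\}$ one has $1-\widetilde G=(1-G_{-})\bigl(1-f^{(m,1)}(\Delta S)-g^{(m,1)}(\Delta S)\bigr)$, so a naive change-of-compensator would appear to produce $1-f^{(m,1)}-g^{(m,1)}$ in $\nu^{\mathbb G}$, not $1-f^{(m,1)}$ as claimed. One must observe that, for $\widetilde{\cal P}(\mathbb F)$-measurable $W$, the process $(1-G_{-})Wg^{(m,1)}\star\mu$ has vanishing $\mathbb F$-compensator precisely because $g^{(m)}\star\mu\in{\cal M}_{loc}(\mathbb F)$, i.e. the $\widetilde{\cal P}(\mathbb F)$-conditional expectation of $g^{(m)}$ under $M^{\mathbb P}_{\mu}$ is zero; only then does the $g^{(m,1)}$-term drop out.

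A cleaner route than re-decomposing $I_{\Rbrack\tau,\infty\Lbrack}\is(W\star(\mu-\nu))$ via Theorem \ref{OptionalDecompoTheorem}(b) is to apply Lemma \ref{G2Fcompensator} directly to $V=W\star\mu$:
\begin{equation*}
\bigl(I_{\Rbrack\tau,\infty\Lbrack}W\star\mu\bigr)^{p,\mathbb G}
= I_{\Rbrack\tau,\infty\Lbrack}(1-G_{-})^{-1}\is\bigl((1-\widetilde G)W\star\mu\bigr)^{p,\mathbb F},
\end{equation*}
then use the jump identity for $1-\widetilde G$, the $\widetilde{\cal P}(\mathbb F)$-measurability of $(1-G_{-})(1-f^{(m,1)})W$, and the vanishing compensator of the $g^{(m,1)}$-term. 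Lemma \ref{GtoFpredictable} reduces an arbitrary $\widetilde{\cal P}(\mathbb G)$-measurable test functional on $\Rbrack\tau,\infty\Lbrack$ to a $\widetilde{\cal P}(\mathbb F)$-measurable one, so the monotone-class step you mention goes through without needing the bracket computation at all.
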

 The proof of the lemma is relegated to Appendix \ref{proof4lemmas}. Our second lemma connects some $\mathbb{G}$-integrability of $\mathbb{F}$-integrability using the random measures. 
\begin{lemma}\label{Existence4KFtilde} The following assertions hold.\\
{\rm{(a)}} The process $\widetilde{\Gamma}^{(1)}$, defined in (\ref{Gamma(1)andf(op)}) is positive, and both $\widetilde{\Gamma}^{(1)}$  and $1/\widetilde{\Gamma}^{(1)}$ are $\mathbb{F}$-locally bounded. \\
{\rm{(a)}} Let $f$ and $g$ be functionals that are $\widetilde{\cal{P}}(\mathbb{F})$-measurable  and $\widetilde{\cal{O}}(\mathbb{F})$-measurable respectively. If $f\star(\mu^{\mathbb{G}}-\nu^{\mathbb{G}})$ and $g\star\mu^{\mathbb{G}}$ are well defined $\mathbb{G}$-local martingales, then $f(1-f^{(m,1)})\star(\mu_1-\nu_1)$ and $g(1-f^{(m,1)})\star\mu_1$ are well defined $\mathbb{F}$-local martingale. 
\end{lemma}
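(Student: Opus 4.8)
The plan is to reduce both parts to the description of $(\mu^{\mathbb G},\nu^{\mathbb G},S^{c,\mathbb G})$ supplied by Lemma~\ref{PredictableCharateristoics4S(tau)}, together with two standard facts for an honest $\tau$: the inclusion $\Rbrack\tau,\infty\Lbrack\subseteq\{G_{-}<1\}$ (up to an evanescent set), which gives $\mu^{\mathbb G}=I_{\Rbrack\tau,\infty\Lbrack}\is\mu_1$ and $\nu^{\mathbb G}=I_{\Rbrack\tau,\infty\Lbrack}(1-f^{(m,1)})\is\nu_1$; and the fact that, on $\{G_{-}<1\}$, the $\mathbb G$-to-$\mathbb F$ passage of dual predictable projections of $\mathbb F$-predictable objects carried by $\Rbrack\tau,\infty\Lbrack$ introduces the factor $1-G_{-}$ (resp.\ $1-\widetilde G$ on the jump part), which is strictly positive there. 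Since $S^{(1)},A^{(1)},F_1$ all vanish off $\{G_{-}<1\}$ I would take $\widetilde\varphi$ to vanish there, so that $f^{(op)}\equiv1$ and $\widetilde\Gamma^{(1)}\equiv1$ on the complement, and carry out the analysis on $\{G_{-}<1\}$.

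For part (a), the first step is the pointwise identity $f^{(op)}+f^{(m,1)}-f^{(op)}f^{(m,1)}=1-(1-f^{(op)})(1-f^{(m,1)})$, which after integration against $\nu_1(\{t\},dx)$ rewrites the denominator defining $\widetilde\Gamma^{(1)}$ as
\[
(\widetilde\Gamma^{(1)}_t)^{-1}=1-\int\bigl(1-f^{(op)}(t,x)\bigr)\bigl(1-f^{(m,1)}(t,x)\bigr)\,\nu_1(\{t\},dx).
\]
Now $\widetilde\varphi\in{\cal L}(S^{(1)},\mathbb F)$ forces $1+\widetilde\varphi^{tr}x>0$ on the support of $\nu_1$, hence $f^{(op)}>0$ and $1-f^{(op)}<1$ there; moreover $1-f^{(m,1)}\ge0$ (because $\nu^{\mathbb G}\ge0$ and $\Rbrack\tau,\infty\Lbrack$ charges all of $\{G_{-}<1\}$ conditionally), and $\int(1-f^{(m,1)})\,\nu_1(\{t\},dx)\le1$ (this is $\nu^{\mathbb G}(\{t\},\mathbb R^d)$ on $\Rbrack\tau,\infty\Lbrack$). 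Combining these bounds, the displayed quantity is strictly positive, and it is finite once $\int f^{(op)}(t,x)(1-f^{(m,1)}(t,x))\,\nu_1(\{t\},dx)<+\infty$, which is encoded in the well-definedness of $\widetilde\varphi$ as a portfolio rate for $S-S^{\tau}$; hence $0<\widetilde\Gamma^{(1)}<+\infty$. For the $\mathbb F$-local boundedness of $\widetilde\Gamma^{(1)}$ and $1/\widetilde\Gamma^{(1)}$ I would combine the $\mathbb F$-predictability of $\widetilde\Gamma^{(1)}$, the normalisation $\widetilde\Gamma^{(1)}_0=1$, the two-sided bound implied above with the ``large-jump'' contribution controlled by the $(S-S^{\tau})$-integrability of $\widetilde\varphi$, and the standard localisation of finite-valued predictable processes. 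This last step is the delicate point of (a).

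For part (b), recall that a functional $w$ yields a well-defined $\mathbb H$-local martingale $w\star(\mu^{\mathbb H}-\nu^{\mathbb H})$ exactly when the predictable variation functional built from $w,\mu^{\mathbb H},\nu^{\mathbb H}$ is $\mathbb H$-locally integrable, while $w\star\mu^{\mathbb H}$ is an $\mathbb H$-local martingale exactly when, in addition, its $\mathbb H$-dual predictable projection $w\star\nu^{\mathbb H}$ vanishes. The plan is: (i) using $\mu^{\mathbb G}=I_{\Rbrack\tau,\infty\Lbrack}\is\mu_1$ and $\nu^{\mathbb G}=I_{\Rbrack\tau,\infty\Lbrack}(1-f^{(m,1)})\is\nu_1$, rewrite the $\mathbb G$-variation functional of $f$ (resp.\ of $g$) as $I_{\Rbrack\tau,\infty\Lbrack}$ times the $\mathbb F$-variation functional of $f(1-f^{(m,1)})$ (resp.\ of $g(1-f^{(m,1)})$) relative to $(\mu_1,\nu_1)$; (ii) apply the standard equivalence (see \cite{ChoulliAlharbi} and the references therein) that, for a nonnegative $\mathbb F$-predictable $V$, $I_{\Rbrack\tau,\infty\Lbrack}\is V$ is $\mathbb G$-locally integrable iff $(1-G_{-})\is V$ is $\mathbb F$-locally integrable, which together with $1-G_{-}>0$ on $\{\nu_1\neq0\}\subseteq\{G_{-}<1\}$ transfers the $\mathbb G$-local integrability of the functionals in (i) into $\mathbb F$-local integrability; and (iii) translate $g\star\nu^{\mathbb G}\equiv0$ into $\bigl(g(1-f^{(m,1)})I_{\Rbrack\tau,\infty\Lbrack}\bigr)\star\nu_1\equiv0$, whose $\mathbb F$-dual predictable projection equals $\bigl(g(1-f^{(m,1)})(1-G_{-})\bigr)\star\nu_1$, so that once more $1-G_{-}>0$ forces $g(1-f^{(m,1)})\star\nu_1\equiv0$ and hence $g(1-f^{(m,1)})\star\mu_1=g(1-f^{(m,1)})\star(\mu_1-\nu_1)\in{\cal M}_{loc}(\mathbb F)$, while $f(1-f^{(m,1)})\star(\mu_1-\nu_1)$ is a well-defined $\mathbb F$-local martingale by (ii). I expect the main obstacle to be making steps (ii)--(iii) fully rigorous: the careful bookkeeping that identifies the $\mathbb G$-predictable variation and dual-projection functionals on $\Rbrack\tau,\infty\Lbrack$ with their $(\mu_1,\nu_1)$-counterparts weighted by $1-G_{-}$, and invoking its strict positivity precisely where $\nu_1$ charges.
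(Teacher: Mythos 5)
The paper does not actually prove this lemma: it says the proof ``mimics the proof of [ChoulliYansori2, Lemma 5.12], and it will be omitted,'' so there is no in-paper proof to compare against, and I assess your proposal on its own terms.

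Part (a) is sound up to the local-boundedness step, which you flag yourself and which is indeed incomplete: a finite-valued $\mathbb F$-predictable process need not be $\mathbb F$-locally bounded, so invoking ``standard localisation of finite-valued predictable processes'' is not a valid step; you would need either to sandwich $\widetilde\Gamma^{(1)}$ and its reciprocal between concrete $\mathbb F$-locally bounded processes, or to argue directly via a sequence of predictable stopping times at explicit levels. The more serious gap is in part (b), in the $g$-case. Your criterion -- that $g\star\mu^{\mathbb G}$ is a $\mathbb G$-local martingale iff $g\star\nu^{\mathbb G}\equiv 0$ -- is not correct when $g$ is $\widetilde{\cal O}(\mathbb F)$-measurable but not $\widetilde{\cal P}$-measurable: the $\mathbb G$-compensator of $g\star\mu^{\mathbb G}$ is $M^P_{\mu^{\mathbb G}}\bigl(g\,\big|\,\widetilde{\cal P}(\mathbb G)\bigr)\star\nu^{\mathbb G}$, so the correct criterion is vanishing of the $\widetilde{\cal P}(\mathbb G)$-conditional $M^P_{\mu^{\mathbb G}}$-projection of $g$. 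More importantly, once you project onto $\mathbb F$, the relevant optional weight at a jump time is $1-\widetilde G$, and on $\{\Delta S\neq 0\}$ one has $1-\widetilde G=(1-G_{-})\bigl(1-f^{(m,1)}-g^{(m,1)}\bigr)$ -- an identity the paper uses explicitly, e.g.\ in the proof of Lemma~\ref{lemma6.5}. Taking the $\widetilde{\cal P}(\mathbb F)$-conditional $M^P_{\mu_1}$-projection therefore turns the hypothesis on $g$ into $(1-f^{(m,1)})\,M^P_{\mu_1}\bigl(g\,\big|\,\widetilde{\cal P}(\mathbb F)\bigr)=M^P_{\mu_1}\bigl(g\,g^{(m,1)}\,\big|\,\widetilde{\cal P}(\mathbb F)\bigr)$ $\nu_1$-a.e., which is not the desired $(1-f^{(m,1)})\,M^P_{\mu_1}\bigl(g\,\big|\,\widetilde{\cal P}(\mathbb F)\bigr)=0$ unless the cross term carrying $g^{(m,1)}$ is itself shown to vanish. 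Your sketch does not engage with this $g^{(m)}$-contribution at all, and controlling it is where the real work of the lemma lies.
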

The proof of the lemma mimics the proof of \cite[Lemma 5.12]{ChoulliYansori2}, and it will be omitted. Below, we elaborate our  third lemma, which connects positive $\mathbb{G}$-supermartingale to $\mathbb{F}$-supermartingales.
 \begin{lemma}\label{F-supermartingale2G-supermartingale}
 Let $X$ be a ${\mathbb F}$-semimartingale such that 
 $$\Delta X> -1\quad\mbox{and}\quad I_{\{G_{-}=1\}}\is X=0.$$
 Then ${\cal E}(X)$ is a nonnegative ${\mathbb F}$- supermartingale if and only if 
 \begin{equation}
     Y:= \dfrac{{\cal E}(I_{\Rbrack \tau , \infty \Lbrack} \is X)}{{\cal E} (-(1-G_{-})^{-1} I_{\Rbrack \tau , \infty \Lbrack} \is m )} \quad \quad \quad \mbox{is a} \quad  {\mathbb G}\mbox{-supermartingale}. 
 \end{equation}
 \end{lemma}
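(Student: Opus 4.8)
The plan is to exploit the structure of the process $Y$ as a ratio of two stochastic exponentials and to reduce the $\mathbb{G}$-supermartingale property to a statement about $\mathbb{F}$-projections. First I would observe that, using the standard identity ${\cal E}(X)/{\cal E}(Z) = {\cal E}\bigl((X-Z-[X-Z,Z])/(1+\Delta Z)\bigr)$ together with the explicit form of the Dol\'eans-Dade exponential in (\ref{S-exponential}), one can rewrite $Y$ as a single exponential ${\cal E}(\widehat{Y})$ on $\Rbrack\tau,+\infty\Lbrack$, where $\widehat{Y}$ is built from $I_{\Rbrack\tau,\infty\Lbrack}\is X$, from $I_{\Rbrack\tau,\infty\Lbrack}(1-G_-)^{-1}\is m$ and their mutual bracket. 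Since $Y$ is stopped at $\tau$ to be constant equal to one (the exponentials start at $\tau$), the supermartingale property of $Y$ in $\mathbb{G}$ is determined by its behaviour after $\tau$, where the decomposition results of Theorem \ref{OptionalDecompoTheorem} are available.

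The key step is to compare $\widehat{Y}$ with ${\cal T}^{(a)}(X)$. By Theorem \ref{OptionalDecompoTheorem}-(b) applied to a local martingale part of $X$, the process $I_{\Rbrack\tau,\infty\Lbrack}\is M + (1-G_-)^{-1} I_{\Rbrack\tau,\infty\Lbrack}\is\langle m,M\rangle^{\mathbb F}$ is a $\mathbb{G}$-local martingale; the extra bracket term coming from dividing by ${\cal E}(-(1-G_-)^{-1} I_{\Rbrack\tau,\infty\Lbrack}\is m)$ is exactly of this form and cancels against the drift correction that appears when one computes the canonical decomposition of $I_{\Rbrack\tau,\infty\Lbrack}\is X$ in $\mathbb{G}$. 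Concretely, write $X = X_0 + M^X + B^X$ as the $\mathbb{F}$-canonical decomposition with $B^X$ of finite variation (predictable, since ${\cal E}(X)\geq 0$ forces $\Delta B^X = \Delta X > -1$ and such a decomposition exists as a special semimartingale up to localization). Then ${\cal E}(X)$ is an $\mathbb{F}$-supermartingale if and only if $B^X$ is nonincreasing, i.e. $-B^X$ is nondecreasing. I would show that, after the change of measure-type manipulation above, $Y$ admits a multiplicative decomposition $Y = N^{\mathbb G} {\cal E}(-I_{\Rbrack\tau,\infty\Lbrack}\is B^X)$ where $N^{\mathbb G}$ is a $\mathbb{G}$-local martingale (this is essentially the content of the representation (\ref{repKG1a}) in Theorem \ref{GeneralDefaltorDescription4afterTau}, with $V^{\mathbb F} = -B^X$ and $K^{\mathbb F}=M^X$). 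Since $I_{\{G_-=1\}}\is X = 0$, the process $I_{\Rbrack\tau,\infty\Lbrack}\is B^X$ coincides with $I_{\Rbrack\tau,\infty\Lbrack}I_{\{G_-<1\}}\is B^X$ and its finite-variation behaviour on $\Rbrack\tau,\infty\Lbrack$ mirrors that of $B^X$, so $-B^X$ nondecreasing is equivalent to $Y$ being the product of a $\mathbb{G}$-local martingale with a nonincreasing process, hence a $\mathbb{G}$-supermartingale; conversely, a $\mathbb{G}$-supermartingale $Y$ of this multiplicative form forces $-B^X$ to be nondecreasing on $\Rbrack\tau,\infty\Lbrack$, and since $B^X$ vanishes on $\{G_-=1\}$ and $\tau$ is honest with $G_{\tau}<\infty$ (so that $\Rbrack\tau,\infty\Lbrack$ "sees" the full support of $B^X$ in the relevant sense), this propagates back to $-B^X$ nondecreasing on all of $[0,\infty)$, giving the $\mathbb{F}$-supermartingale property of ${\cal E}(X)$.

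I expect the main obstacle to be the careful bookkeeping of the finite-variation parts under the nonlinear change induced by dividing by ${\cal E}(-(1-G_-)^{-1}I_{\Rbrack\tau,\infty\Lbrack}\is m)$: one must verify that the jump term $\Delta(I_{\Rbrack\tau,\infty\Lbrack}\is m)/(1-G_-)$ does not destroy positivity of $Y_-$ and that the bracket $[I_{\Rbrack\tau,\infty\Lbrack}\is X,\ (1-G_-)^{-1}I_{\Rbrack\tau,\infty\Lbrack}\is m]$ is precisely absorbed by the ${\cal T}^{(a)}$-operator from Theorem \ref{OptionalDecompoTheorem}, so that no spurious drift survives. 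The second delicate point is the converse direction: recovering a pathwise monotonicity statement for $B^X$ on all of $[0,\infty)$ from one on $\Rbrack\tau,\infty\Lbrack$ only, which is where the hypothesis $I_{\{G_-=1\}}\is X=0$ together with $\{\widetilde G = 1 > G_-\}=\emptyset$ from (\ref{Assumptions4Tau}) and the honesty of $\tau$ must be used in an essential way — honesty guarantees that $\Rbrack\tau,\infty\Lbrack$ is an $\mathbb{F}$-optional set (up to the usual identifications), so predictable-projection arguments can transfer the monotonicity back to $\mathbb{F}$.
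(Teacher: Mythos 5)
Your proposal follows essentially the same strategy as the paper: decompose $X$ into a local-martingale plus finite-variation part, pull the two exponentials in $Y$ into a single Dol\'eans-Dade exponential, identify its $\mathbb{G}$-local-martingale and finite-variation components via the ${\cal T}^{(a)}$ operator, and read off the supermartingale property from monotonicity of the drift. That is exactly the skeleton of the paper's argument, which uses the exponential algebra from \cite[Lemma 4.3-(a)]{ChoulliAlharbi} to write
$Y={\cal E}\bigl({\cal T}^{(a)}(M)+(1-G_{-})^{-1}I_{\Rbrack\tau,\infty\Lbrack}\is{\cal T}^{(a)}(m)-(1-G_{-})(1-\widetilde G)^{-1}I_{\Rbrack\tau,\infty\Lbrack}\is B\bigr)$.

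There are two points you should tighten. First, for the converse direction ($Y$ a $\mathbb{G}$-supermartingale $\Rightarrow$ ${\cal E}(X)$ an $\mathbb{F}$-supermartingale) you cannot start from the $\mathbb{F}$-\emph{canonical} decomposition $X=X_0+M^X+B^X$ with $B^X$ predictable: the specialness of $X$ is precisely part of what must be proved, not assumed. The paper instead takes an arbitrary decomposition $X=X_0+M-B$ with $B$ merely adapted of finite variation (which always exists), deduces from the $\mathbb{G}$-supermartingale property that $(1-G_{-})(1-\widetilde G)^{-1}I_{\Rbrack\tau,\infty\Lbrack}\is B$ is a $\mathbb{G}$-local submartingale with locally integrable variation, and then uses Lemma~\ref{G2Fcompensator} to transfer this back to the statement that $B^{p,\mathbb{F}}$ exists and is nondecreasing. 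Your sketch of ``propagating monotonicity from $\Rbrack\tau,\infty\Lbrack$ back to $[0,\infty)$'' is vague where the paper is precise: it is the compensator identity of Lemma~\ref{G2Fcompensator}, not a pathwise extension argument, that does the work.

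Second, the finite-variation part that actually appears inside the single exponential is $(1-G_{-})(1-\widetilde G)^{-1}I_{\Rbrack\tau,\infty\Lbrack}\is B$, not $I_{\Rbrack\tau,\infty\Lbrack}\is B^X$; the factor $(1-G_{-})/(1-\widetilde G)$ is exactly the ${\cal T}^{(a)}$-rescaling of a finite-variation process and cannot be dropped. Your proposed multiplicative decomposition $Y=N^{\mathbb G}{\cal E}(-I_{\Rbrack\tau,\infty\Lbrack}\is B^X)$ can eventually be reached by a further split (compensating the $\mathbb{G}$-optional finite-variation term and applying Yoeurp's product formula), but as written it glosses over this step. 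Relatedly, citing (\ref{repKG1a}) of Theorem~\ref{GeneralDefaltorDescription4afterTau} here is misplaced: that representation is stated for deflators of $(S-S^\tau,\mathbb{G})$ and presupposes the supermartingale property you are trying to characterize; the correct ingredients are the ${\cal T}^{(a)}$ calculus of Theorem~\ref{OptionalDecompoTheorem} and Lemma~\ref{G2Fcompensator}.
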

The proof of this lemma is relegated to Appendix \ref{proof4lemmas}, while below we prove Theorem \ref{generalpredictable}.
 \begin{proof}[Proof of Theorem \ref{generalpredictable}] The proof of  this theorem requires the predictable characteristics of the model $(S-S^{\tau},\mathbb{G})$, and hence we divided this proof into two parts. The first part derives the predictable characteristics of $(S-S^{\tau},\mathbb{G})$, while the second part proves the statements of the theorem.\\
 {\bf Part 1.} This part discusses the predictable characteristics of $(S-S^{\tau},\mathbb{G})$. Thus, by combining Lemma \ref{PredictableCharateristoics4S(tau)} and  (\ref{CanonicalDecomposition4S}), we derive
\begin{align}
S -S^{\tau} & = I_{\Rbrack \tau, \infty \Lbrack} \is S=  I_{\Rbrack \tau, \infty \Lbrack} \is S^{c} + I_{\Rbrack \tau, \infty \Lbrack} h \star (\mu - \nu) +I_{\Rbrack \tau, \infty \Lbrack} b \is A + I_{\Rbrack \tau, \infty \Lbrack} (x-h) \star \mu  \nonumber\\
& = S^{c, \mathbb G} +h \star(\mu^{\mathbb G}-\nu^{\mathbb G})+  I_{\Rbrack \tau, \infty \Lbrack} b \is A - c\beta^{(m,1)} I_{\Rbrack \tau, \infty \Lbrack} \is A - \left ( \int h(x) f^{(m,1)}(x) F(dx) \right ) I_{\Rbrack \tau, \infty \Lbrack} \is A  \nonumber\\ 
 & \hskip 1cm +(x-h)\star\mu^{\mathbb G} \nonumber\\ 
 & =  S^{c, \mathbb G} +h \star(\mu^{\mathbb G}-\nu^{\mathbb G}) + \left( b- c\beta^{(m,1)} -\int h f^{(m,1)} F(dx)\right ) I_{\Rbrack\tau,+\infty\Lbrack} \is A +(x-h)\star\mu^{\mathbb G} \nonumber.   
\end{align}
Hence, this $\mathbb{G}$-canonical decompositiojn of $(S-S^{\tau},\mathbb{G})$ allows us to obtain the predictable characteristics $\left(b^{\mathbb G},c^{\mathbb G},F^{\mathbb G}, A^{\mathbb G}\right)$ of $(S-S^\tau,\mathbb G)$ as follows 
\begin{equation}\label{predictablechar}\begin{cases}
 b^{\mathbb G}: = \Bigl(b-  c\beta^{(m,1)}- \int h(x) f^{(m,1)}(x) F(dx)\Bigr) I_{\Rbrack\tau,+\infty\Lbrack}, \quad \quad c^{\mathbb G}:=  I_{\Rbrack\tau,+\infty\Lbrack}c, \\ 
F^{\mathbb G}(dx):=  I_{\Rbrack\tau,+\infty\Lbrack}\left( 1- f^{(m,1)}(x)\right)F(dx), \quad \quad A^{\mathbb G}:= I_{\Rbrack\tau,+\infty\Lbrack}\is{A}.\end{cases}
\end{equation}
 {\bf Part 2:} By applying \cite[Theorem 2.1]{ChoulliYansori3} directly to $(S-S^{\tau}, {\mathbb{G}})$, we deduce that assertion (a) holds if and only if there exists $\widetilde{\varphi}^{\mathbb G} \in {\cal L}(S-S^{\tau}, \mathbb G)$ such that for any $\varphi \in {\cal L}_b(S-S^{\tau}, \mathbb G)$ the following hold:
 \begin{equation}\label{Equation1}
  (\varphi-\widetilde{\varphi}^{\mathbb G})^{tr}(b^{\mathbb G}- c^{\mathbb G}\widetilde{\varphi}^{\mathbb G}) + \int \left ( \dfrac{(\varphi-\widetilde{\varphi}^{\mathbb G})^{tr}x}{1+(\widetilde{\varphi}^{\mathbb G})^{tr}x} - (\varphi-\widetilde{\varphi}^{\mathbb G})^{tr} h(x) \right ) F^{\mathbb G}(dx) \leq 0,
   \end{equation}
   and  
    \begin{equation}\label{Equation2}
    E \left [ \widetilde{V}_T^{\mathbb G} +\frac{1}{2} ((\widetilde{\varphi}^{\mathbb G})^{tr} c^{\mathbb G} \widetilde{\varphi}^{\mathbb G}\is A^{\mathbb G} )_T + ( {\mathcal K}_{log} ((\widetilde{\varphi}^{\mathbb G})^{tr} x) \star \nu^{\mathbb G} )_T \right ] < + \infty,  \end{equation}
    where 
       \begin{equation}\label{V(G)} 
       \widetilde{V}^{\mathbb G} := \left[ (\widetilde{\varphi}^{\mathbb G})^{tr} (b^{\mathbb G} - c^{\mathbb G} \widetilde{\varphi}^{\mathbb G}) + \int \left ( \dfrac{(\widetilde{\varphi}^{\mathbb G})^{tr}x}{1+(\widetilde{\varphi}^{\mathbb G})^{tr} x} -(\widetilde{\varphi}^{\mathbb G})^{tr} h(x) \right ) F^{\mathbb G}(dx) \right]\is A^{\mathbb G}. 
 \end{equation}
 Furthermore, the following properties hold
 \begin{align}
  & \widetilde{\theta}^{\mathbb G}  \left ( 1+ ( \widetilde{\theta}^{\mathbb G} \is (S-S^{\tau}))_{-}\right)^{-1} =\widetilde{\varphi}^{\mathbb G}  \quad P\otimes A\mbox{-a.e.}  \quad \mbox{on} \quad  \Rbrack \tau , \infty \Lbrack, \label{ThetaG2Phitilde}\\
   & {Z}^{\mathbb G} = {\cal E}({K}^{\mathbb G}){\cal E}(-I_{\Rbrack \tau, \infty \Lbrack} \is \widetilde{V}^{\mathbb{G}} )={1\over{{\cal{E}}( \widetilde{\varphi}^{\mathbb G}\is(S-S^{\tau}))}},\label{Equation4.49}\\
   &{K}^{\mathbb G}:= -\widetilde{\varphi}^{\mathbb G}\is {\cal T}^{(a)}(S^c) - \dfrac{\widetilde{\Gamma}^{\mathbb{G}}(\widetilde{\varphi}^{\mathbb G})^{tr}x}{1+ (\widetilde{\varphi}^{\mathbb G})^{tr}x} \star (\mu^{\mathbb{G}} - \nu^{\mathbb{G}} ), \label{KG2PhiTilde}\\
  & \widetilde{\Gamma }^{\mathbb{G}}_t := \left ( 1 + \int (f^{(op,\mathbb{G})}(t,x)-1)\nu^{\mathbb{G}}(\{t\},dx) \right )^{-1}, \quad \quad f^{(op,\mathbb{G})}(t,x):=(1+ (\widetilde{\varphi}^{\mathbb G}_t)^{tr}x)^{-1}. \label{GammaG}
\end{align}
 Then, in virtue of Lemma \ref{Theta(G)2Theta(F)}-(c),  we obtain the existence of $\widetilde{\varphi}\in   {\cal L}(S^{(1)}, \mathbb{F})$ such that  
  \begin{equation}\label{phi}
 \widetilde{\varphi}^{\mathbb G}{I}_{\Rbrack \tau, \infty \Lbrack} = \widetilde{\varphi} I_{\Rbrack \tau , \infty \Lbrack}.  
 \end{equation}
  Thus, by inserting this latter equality with (\ref{Gmeasurecompensator}) in (\ref{GammaG}), and using (\ref{Gamma(1)andf(op)}), we obtain
  \begin{equation}\label{Gammag2Gamm(1)}
   f^{(op,\mathbb{G})}= f^{(op)}\quad\mbox{and}\quad  \widetilde{\Gamma }^{\mathbb{G}}= \widetilde{\Gamma }^{(1)}\quad\mbox{on}\quad \Rbrack \tau, \infty \Lbrack,
  \end{equation}
 Similarly, by inserting (\ref{phi}) and \eqref{predictablechar}  in (\ref{V(G)})  and using (\ref{V(1)}) afterwards, we get 
  \begin{equation}\label{V(G)bis} 
       V^{\mathbb G} =I_{\Rbrack\tau,+\infty\Lbrack}\is{V}^{(1)}. 
 \end{equation}
  As a consequence, by combining this equality with (\ref{phi}), (\ref{Gammag2Gamm(1)})  and Lemma \ref{PredictableCharateristoics4S(tau)}-(a), we get
   \begin{equation}\label{KG2KtildeG}
   {K}^{\mathbb G}=\widetilde{K}^{\mathbb G}\quad\mbox{ and}\quad {Z}^{\mathbb G} = \widetilde{Z}^{\mathbb G}=1/{\cal{E}}(\widetilde{\varphi}\is(S-S^{\tau})). \end{equation}
Again, by plugging (\ref{phi}) and (\ref{predictablechar}) in (\ref{Equation1})-(\ref{Equation2}), we conclude that assertion (a) holds if and only if there exists  $\widetilde{\varphi}\in   {\cal L}(S^{(1)}, \mathbb{F})$ such that for any  ${\varphi}\in   {\cal L}_b(S^{(1)}, \mathbb{F})$, on $\Rbrack\tau,+\infty\Lbrack$ $P\otimes A$-a.e. 
 \begin{equation}\label{Equation1bis}
  (\theta - \widetilde{\varphi})^{tr}( b+ c\beta^{(m)}- c \widetilde{\varphi}) + \int (\theta- \widetilde{\varphi})^{tr}\left ( \dfrac{ (1-f^{(m,1)}(x)x}{1+ \widetilde{\varphi}^{tr}x} - h(x) \right )F(dx) \leq 0,
   \end{equation}
   and  (\ref{G2}) holds. Thus the proof of (a) $\Longleftrightarrow$ (b) follows immediately from combining these facts with Lemma \ref{VG2VF}-(c). \\
   {\bf Part 3.} This part focuses on proving (\ref{Equation4.12}), (\ref{Equation4.13}) and (\ref{Equation4.14}).\\
   Remark that (\ref{Equation4.12}) is a direct consequence of  (\ref{phi})  and (\ref{ThetaG2Phitilde}), while both (\ref{Equation4.14}) and the first equality in (\ref{Equation4.13}) follow combining (\ref{KG2KtildeG}), (\ref{KG2PhiTilde}), (\ref{phi}), (\ref{Gammag2Gamm(1)}), (\ref{Equation4.49}) and (\ref{KG2PhiTilde}). Thus, the rest of this part focuses on proving the second equality in (\ref{Equation4.13}), or equivalently in virtue of (\ref{V(G)bis})
   \begin{equation}\label{KGtilde2KFtilde} \widetilde{K}^{\mathbb{G}} = {\cal T}^{(a)}( \widetilde{K}^{\mathbb{F}})+ (1-G_{-})^{-1} I_{\Rbrack \tau , \infty \Lbrack }\is{\cal T}^{(a)} (m).\end{equation}
To this end we use the fact that two local martingales are equal if and only if their continuous local martingale parts coincides and their jump parts are equal also. On the one hand, we use the fact that 
$\Delta {\cal T}^{(a)}(X) = \dfrac{1-G_{-}}{1-\widetilde{G}} I_{\Rbrack \tau , \infty \Lbrack} \Delta X $, and derive 
\begin{align*}
\Delta {\cal T}^{(a)}(I_{\Rbrack \tau , \infty \Lbrack} (K^{\mathbb F} + (1-G_{-})^{-1} \is m )) & = \frac{1-G_{-}}{1-\widetilde{G}} I_{\Rbrack \tau, \infty \Lbrack} \left (\Delta K^{\mathbb F} - \frac{1-\widetilde{G}}{1-G_{-}}-1 \right )\\
& = \left ( \frac{1-G_{-}}{1-\widetilde{G}} \left(\Delta K^{\mathbb F} -1\right) +1 \right ) I_{\Rbrack \tau, \infty \Lbrack}.
\end{align*}
On the other hand, thanks to (\ref{Equation4.14}), we calculate  
\begin{align*}
    \Delta {\widetilde{K}^{\mathbb G}}     & =  I_{\Rbrack \tau , \infty \Lbrack} \left(- \frac{\widetilde{\Gamma}^{(1)} \widetilde{\varphi}^{tr} \Delta S}{1+\widetilde{\varphi}^{tr}\Delta S} +  \int \frac{\widetilde{\Gamma}^{(1)} \widetilde{\varphi}^{tr} x}{1+\widetilde{\varphi}^{tr}x} F^{\mathbb G} (dx)dA_t^{\mathbb G}\right)\\
    & = I_{\Rbrack \tau , \infty \Lbrack} \left(- \frac{\widetilde{\Gamma}^{(1)} \widetilde{\varphi}^{tr} \Delta S}{1+\widetilde{\varphi}^{tr}\Delta S} + \widetilde{\Gamma}^{(1)} \int (1- \frac{1}{1+\widetilde{\varphi}^{tr}x}) \nu^{\mathbb G}(\{t\},dx)\right)\\
    & = I_{\Rbrack \tau , \infty \Lbrack} \left(- \frac{\widetilde{\Gamma}^{(1)} \widetilde{\varphi}^{tr}\Delta S}{1+\widetilde{\varphi}^{tr}\Delta S} + \widetilde{\Gamma}^{(1)} \int (1-f^{(op)})(1-f^{(m)}) \nu(\{t\},dx)\right) \\
    & = I_{\Rbrack \tau , \infty \Lbrack} \left(- \frac{\widetilde{\Gamma}^{(1)}\widetilde{\varphi}^{tr} \Delta S}{1+\widetilde{\varphi}^{tr}\Delta S} + \widetilde{\Gamma}^{(1)} \left (a- \widehat{f^{(op)}}- \widehat{f^{(m)}}+\widehat{f^{(op)}f^{(m)}}\right)\right)\\
    \Delta \widetilde{K}^{\mathbb G} & = I_{\Rbrack \tau, \infty \Lbrack} \left ( \frac{\widetilde{\Gamma}^{(1)}}{1+\widetilde{\varphi}^{tr}\Delta S} -1 \right )
\end{align*}
Then, by comparing the jump parts and using $^{o, \mathbb F} \left( I_{\Rbrack \tau, \infty \Lbrack} \right) = (1- \widetilde{G}) I_{\Rbrack 0, +\infty \Lbrack}$, we get 
\begin{align}
     \Delta \widetilde{K}^{\mathbb F} & = \dfrac{(1-\widetilde{G}) \widetilde{\Gamma}^{(1)}}{(1-G_{-})(1+\widetilde{\varphi}^{tr}\Delta S)} -1 
\end{align}
Thus, by combining $(\Delta \widetilde{K}^{\mathbb F} = \Delta \widetilde{K}^{\mathbb F}I_{\{\Delta S \neq 0 \}}+\Delta \widetilde{K}^{\mathbb F}I_{\{\Delta S \neq 0 \}})$ with the fact that on $\{\Delta S \neq 0 \}, 1- \widetilde{G} = (1-G_{-}) \left( 1- f^{(m,1)}(\Delta S) - g^{(m,1)}(\Delta S)\right )$, we obtain on $(G_{-}<1)$,
\begin{align}
      \Delta \widetilde{K}^{\mathbb F} = & \left(\dfrac{( 1- \widetilde{G}) \widetilde{\Gamma}^{(1)}}{(1+\widetilde{\varphi}^{tr}\Delta S)} -1\right)I_{\{\Delta{S}\not=0\}}  + \left (\dfrac{(1-\widetilde{G}) \widetilde{\Gamma}^{(1)}}{1-G_{-}} -1\right)I_{\{\Delta S = 0 \}}\nonumber\\
      = & - \dfrac{( 1- \widetilde{G}) \widetilde{\Gamma}^{(1)}\widetilde{\varphi}^{tr}\Delta S}{(1+\widetilde{\varphi}^{tr}\Delta S)} + \dfrac{(1-\widetilde{G}) \widetilde{\Gamma}^{(1)}}{(1-G_{-})} -1\nonumber\\
      =  &  \dfrac{ \widetilde{\Gamma}^{(1)} (f^{(m,1)}(\Delta S)-1)\widetilde{\varphi}^{tr} \Delta S} {1+\widetilde{\varphi}^{tr}\Delta S} + \dfrac{\widetilde{\Gamma}^{(1)}g^{(m,1)}(\Delta S)\widetilde{\varphi}^{tr}\Delta S}{1+\widetilde{\varphi}^{tr}\Delta S} -\widetilde{\Gamma}^{(1)} \Delta{m}^{(1)}+ \widetilde{\Gamma}^{(1)} -1 \label{LastEquality}
\end{align}
Put
$$W^{(1)} (t,x):= \widetilde{\Gamma}^{(1)}_t (f^{(m,1)}(t,x)-1)\left( \frac{1}{1+\widetilde{\varphi}^{tr}_t  x} -1\right) =   \frac{\widetilde{\Gamma}^{(1)}_t \widetilde{\varphi}^{tr}_tx(f^{(m,1)}(t,x)-1)}{1+\widetilde{\varphi}^{tr}_t  x},$$ and thanks to Lemma \ref{Existence4KFtilde}, we deduce that $W^{(1)}\star(\mu_1-\nu_1)$ and $ \dfrac{\widetilde{\Gamma}^{(1)}g^{(m,1)}\widetilde{\varphi}^{tr}x}{1+\widetilde{\varphi}^{tr}x}\star\mu_1$  are well defined $\mathbb{F}$-local martingales, and furthermore 
$$\widehat{W^{(1)}} =  1 - \widetilde{\Gamma}^{(1)}\quad\mbox{and}\quad \Delta\left({W}^{(1)}\star(\mu_1-\nu_1)\right)=\dfrac{ \widetilde{\Gamma}^{(1)} (f^{(m,1)}(\Delta S)-1)\widetilde{\varphi}^{tr} \Delta S} {1+\widetilde{\varphi}^{tr}\Delta S} + \widetilde{\Gamma}^{(1)} -1 .$$
Thus, by combining these remarks with (\ref{LastEquality}) and Lemma \ref{Existence4KFtilde}, we deduce that 
$$
  \Delta \widetilde{K}^{\mathbb F} = \Delta\left({W}^{(1)}\star(\mu_1-\nu_1)\right)+\Delta\left( \dfrac{\widetilde{\Gamma}^{(1)}g^{(m,1)}\widetilde{\varphi}^{tr}x}{1+\widetilde{\varphi}^{tr}x}\star\mu_1\right)-\widetilde{\Gamma}^{(1)} \Delta{m}^{(1)}.$$
  Therefore, this latter equality combined with the continuous $\mathbb{G}$-local martingale of $\widetilde{K}^{\mathbb{G}}$ and the continuous $\mathbb{F}$-local martingale part of $ \widetilde{K}^{\mathbb F}$ are 
  $$( \widetilde{K}^{\mathbb F})^c:=-\widetilde\varphi\is S^c-(1-G_{-})^{-1}I_{\{G_{-}<1\}}\is m^c\quad\mbox{and}\quad  ( \widetilde{K}^{\mathbb{G}})^c:=-\widetilde\varphi\is {\cal{T}}^{(a)}(S^c).$$ Thus, these yield that (\ref{KGtilde2KFtilde}) holds with $\widetilde{K}^{\mathbb F} $ is given by (\ref{KF}). This proves the second equality in (\ref{Equation4.13}), and the proof of theorem will be complete as soon as we  show that $ {\cal E }( \widetilde{K}^{\mathbb F}) {\cal E }(-\widetilde{V}^{(1)}) \in {\cal D}({S}^{(1)}, \mathbb F)$, equivalently  $ {\cal E }( \widetilde{K}^{\mathbb F}) {\cal E }(- \widetilde{V}^{(1)}){\cal E}(\psi \is{S}^{(1)})$ is an $\mathbb F$-supermartinagale for any $\psi\in {\cal{L}}({S}^{(1)}, \mathbb F)\cap{L}({S}^{(1)}, \mathbb F)$. To this end, we consider $\varphi\in {\cal{L}}({S}^{(1)}, \mathbb F)\cap{L}({S}^{(1)}, \mathbb F)$ and remark that 
  $${\cal E }( \widetilde{K}^{\mathbb F}) {\cal E }(- \widetilde{V}^{(1)}){\cal E}(\psi \is{S}^{(1)})={\cal{E}}(X),$$ where $X$ is an $\mathbb{F}$-semimartingale, and due to (\ref{Equation4.13}) we have 
  $${{{\cal E}(I_{\Rbrack \tau , \infty \Lbrack} \is X)}\over{{\cal E} (-(1-G_{-})^{-1} I_{\Rbrack \tau , \infty \Lbrack} \is m)}}=\widetilde{Z}^{\mathbb{G}}{\cal E}(I_{\Rbrack \tau , \infty \Lbrack}\psi \is{S})$$ is a positive $\mathbb{G}$-supermartinagle. In virtue of Lemma \ref{F-supermartingale2G-supermartingale}, this latter fact is equivalent to ${\cal{E}}(X)$ being a positive $\mathbb{F}$-supermatingale. This proves that $\widetilde{Z}^{\mathbb{F}}\in{\cal D}({S}^{(1)}, \mathbb F)$, and the proof of theorem is complete.
\end{proof}
The rest of this subsection proves Theorem \ref{riskfactors}, which relies heavily on the following lemma.  
\begin{lemma}\label{lemma6.5}
Suppose that assumptions of Theorem \ref{riskfactors} hold, and consider its notations. Then the following assertions hold.\\
{\rm {(a)}} The $\mathbb{F}$-compensator of $(1-\widetilde{G}) \is \widetilde{\cal H}(\mathbb F)$ is given by 
\begin{align}
     \left ( (1-\widetilde{G}) \is \widetilde{\cal H}(\mathbb F) \right )^{p, \mathbb F} & = (1-G_{-}) \left (\widetilde{\lambda}^{tr} b - \frac{1}{2} \widetilde{\lambda}^{tr} c \widetilde{\lambda} \right) \is A \nonumber \\ 
     & + (1-G_{-}) \left ( \frac{f^{(m)} \widetilde{\lambda}^{tr}x}{(1-\Delta \widetilde{V}^{\mathbb F})(1+\widetilde{\lambda}^{tr}x)} + ( 1-f^{(m)}) \ln (1+\widetilde{\lambda}^{tr}x) - \widetilde{\lambda}^{tr}h \right) \star \nu.
\end{align}
{\rm{(b)}} We have that 
\begin{equation}
   \langle \widetilde{L}^{\mathbb F}, m \rangle^{\mathbb F} = - \widetilde{\lambda}^{tr} c\beta^{(m)} \is A -  \frac{f^{(m)} \widetilde{\lambda}^{tr}x}{(1-\Delta \widetilde{V}^{\mathbb F})(1+\widetilde{\lambda}^{tr}x)}\star\nu.
\end{equation}
{\rm{(c)}} We always have that $E \left [H_T ^{(0)}(\widetilde{K}^{\mathbb G}, \mathbb G) \right ] = E \left [ \left (H^{(0)}(\widetilde{K}^{\mathbb G}, \mathbb G) \right )^{p, \mathbb F}_T \right]$ and 
\begin{align}\label{Equation4.63}
\left (H ^{(0)}(\widetilde{K}^{\mathbb G}, \mathbb G) \right )^{p, \mathbb F} = & \frac{(1-G_{-})}{2} \widetilde{\varphi}^{tr} c \widetilde{\varphi} \is A + \sum_{0 < s\leq .} (1-G_{s-}) \left ( \widetilde{\Gamma}_s^{(1)} - 1 - \ln (\widetilde{\Gamma}_s^{(1)} \right ) \nonumber \\
&+ (1-G_{-}) \left ( \frac{-\widetilde{\Gamma}^{(1)} \widetilde{\varphi}^{tr} x}{1+ \widetilde{\varphi}^{tr} x} + \ln (1+ \widetilde{\varphi}^{tr} x) \right )(1-f^{(m,1)}) \star \nu. \end{align}
and 
\begin{equation}\label{Equation4.64}
    1- \Delta\widetilde{V}^{(1)} = (\widetilde{\Gamma}^{(1)})^{-1}, \quad  and \quad (1-\widetilde{\Gamma}^{(1)}) \frac{(\widetilde{\varphi}^{tr}x)(1-f^{(m,1)})}{1+\widetilde{\varphi}^{tr}x} \star \nu_1 = - \sum \frac{(\widetilde{\Gamma}^{(1)}-1)^2}{\widetilde{\Gamma}^{(1)}}
\end{equation}
\end{lemma}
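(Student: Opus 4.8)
\textbf{Proof proposal for Lemma \ref{lemma6.5}.}

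The plan is to prove the four identities by direct computation using the $\mathbb F$-predictable characteristics $(b,c,F,A)$ of $S$ and the Jacod decomposition (\ref{Decomposition4m}) of $m$, together with the structure formulas (\ref{KF}), (\ref{Equation4.20}), (\ref{Equation4.21}), (\ref{Equation4.22}), (\ref{V(1)}) and (\ref{Equation4.23}) that define $\widetilde K^{\mathbb F}$, $\widetilde L^{\mathbb F}$, $\widetilde{\cal H}(\mathbb G)$, $\widetilde{\cal H}(\mathbb F)$, $\widetilde V^{(1)}$, $\widetilde V^{\mathbb F}$. For part (a), I would first expand $\widetilde{\cal H}(\mathbb F) = \widetilde V^{\mathbb F} + \sum(-\Delta\widetilde V^{\mathbb F} - \ln(1-\Delta\widetilde V^{\mathbb F})) + H^{(0)}(\widetilde L^{\mathbb F},\mathbb F)$ into its continuous part and its pure-jump part. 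The continuous part of $\widetilde V^{\mathbb F}$ contributes $(\widetilde\lambda^{tr}b - \widetilde\lambda^{tr}c\widetilde\lambda)\is A$ plus the continuous piece of the $\star\nu$ term, while $H^{(0)}(\widetilde L^{\mathbb F},\mathbb F)$ contributes $\tfrac12\langle (\widetilde L^{\mathbb F})^c\rangle^{\mathbb F} = \tfrac12\widetilde\lambda^{tr}c\widetilde\lambda\is A$ plus a jump sum. Multiplying by $(1-\widetilde G)$ and taking the $\mathbb F$-predictable projection turns $(1-\widetilde G)$ into $(1-G_-)$ on the continuous part (since $\widetilde G$ and $G_-$ agree off the jumps of $D^{o,\mathbb F}$) and, on the jump terms, one uses $(1-\widetilde G)\Delta S = (1-G_-)(1-f^{(m)}(\Delta S) - g^{(m)}(\Delta S))\Delta S$ after integrating against $\nu$; the bookkeeping of the $\ln(1+\widetilde\lambda^{tr}x)$ versus $\tfrac{\widetilde\lambda^{tr}x}{1+\widetilde\lambda^{tr}x}$ terms, combined with $1-\Delta\widetilde V^{\mathbb F} = \bigl(1 - a + \widehat{f^{(op,\mathbb F)}}\bigr)$-type identities (the $\mathbb F$-analogue of (\ref{Equation4.64})), is what produces the factor $\tfrac{1}{(1-\Delta\widetilde V^{\mathbb F})(1+\widetilde\lambda^{tr}x)}$ in the stated compensator.

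For part (b), the bracket $\langle\widetilde L^{\mathbb F},m\rangle^{\mathbb F}$ splits according to (\ref{Decomposition4m}): the continuous parts give $\langle -\widetilde\lambda\is S^c, \beta^{(m)}\is S^c\rangle^{\mathbb F} = -\widetilde\lambda^{tr}c\beta^{(m)}\is A$, the purely discontinuous martingale $-\tfrac{\widetilde\Xi\widetilde\lambda^{tr}x}{1+\widetilde\lambda^{tr}x}\star(\mu-\nu)$ pairs with $U^{(m)}\star(\mu-\nu)$ and (after using the definition $U^{(m)} = f^{(m)} + \tfrac{\widehat{f^{(m)}}}{1-a}I_{\{a<1\}}$ and the identity $\widetilde\Xi^{-1} = 1-a+\int\tfrac{\nu(\{t\},dx)}{1+\widetilde\lambda^{tr}x}$) collapses to $-\tfrac{f^{(m)}\widetilde\lambda^{tr}x}{(1-\Delta\widetilde V^{\mathbb F})(1+\widetilde\lambda^{tr}x)}\star\nu$; the $g^{(m)}\star\mu$ and $m^\perp$ pieces contribute nothing to the predictable bracket with a $\star(\mu-\nu)$-integral of the stated form. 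For part (c), I would first invoke the general principle that $E[H^{(0)}_T(\widetilde K^{\mathbb G},\mathbb G)]$ equals the expectation of its $\mathbb F$-predictable projection — this follows because $H^{(0)}(\widetilde K^{\mathbb G},\mathbb G)$ is carried by $\Rbrack\tau,\infty\Lbrack$ and one reduces the $\mathbb G$-computation to $\mathbb F$ via $^{o,\mathbb F}(I_{\Rbrack\tau,\infty\Lbrack}) = (1-\widetilde G)$. Then, using (\ref{Equation4.14})--(\ref{Gamma(1)andf(op)}), the continuous part of $\widetilde K^{\mathbb G}$ is $-\widetilde\varphi\is S^{c,\mathbb G}$, whose $\mathbb G$-bracket projects to $(1-G_-)\widetilde\varphi^{tr}c\widetilde\varphi\is A$; the jump part is governed by $\Delta\widetilde K^{\mathbb G} = I_{\Rbrack\tau,\infty\Lbrack}\bigl(\tfrac{\widetilde\Gamma^{(1)}}{1+\widetilde\varphi^{tr}\Delta S} - 1\bigr)$ (already computed in the proof of Theorem \ref{generalpredictable}), so that $\Delta\widetilde K^{\mathbb G} - \ln(1+\Delta\widetilde K^{\mathbb G}) = \tfrac{\widetilde\Gamma^{(1)}}{1+\widetilde\varphi^{tr}\Delta S} - 1 - \ln\widetilde\Gamma^{(1)} + \ln(1+\widetilde\varphi^{tr}\Delta S)$; taking the $\mathbb F$-predictable projection (which produces $(1-G_-)$ on the $\Delta S = 0$ part and $(1-G_-)(1-f^{(m,1)})\star\nu$ on the $\Delta S\neq 0$ part) and separating the two yields exactly (\ref{Equation4.63}). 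Finally the two identities in (\ref{Equation4.64}) come from reading off $\Delta\widetilde V^{(1)}$ from (\ref{V(1)}), which gives $1-\Delta\widetilde V^{(1)} = 1 - a + \widehat{f^{(op)}} - \widehat{f^{(op)}f^{(m,1)}} + \widehat{f^{(m,1)}}$-style expansion equal to $(\widetilde\Gamma^{(1)})^{-1}$ by the very definition (\ref{Gamma(1)andf(op)}), and then from algebraically rewriting $(1-\widetilde\Gamma^{(1)})\tfrac{\widetilde\varphi^{tr}x(1-f^{(m,1)})}{1+\widetilde\varphi^{tr}x}\widehat{\phantom{x}}$ using $\widehat{W^{(1)}} = 1-\widetilde\Gamma^{(1)}$ from the proof of Theorem \ref{generalpredictable} and $\widetilde\Gamma^{(1)}-1 = \widetilde\Gamma^{(1)}(1-(\widetilde\Gamma^{(1)})^{-1}) = \widetilde\Gamma^{(1)}\Delta\widetilde V^{(1)}$.

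The main obstacle I anticipate is part (a): correctly tracking how the factor $(1-\widetilde G)$ interacts with the random-measure integrals when one passes to the $\mathbb F$-predictable projection, and in particular obtaining the precise denominator $(1-\Delta\widetilde V^{\mathbb F})(1+\widetilde\lambda^{tr}x)$ rather than some other combination. This requires the identity $1-\widetilde G = (1-G_-)(1-f^{(m)}(\Delta S)-g^{(m)}(\Delta S))$ on $\{\Delta S\neq 0\}$ together with a careful expansion of $\Delta\widetilde V^{\mathbb F}$ and of the jump sum $\sum(-\Delta\widetilde V^{\mathbb F}-\ln(1-\Delta\widetilde V^{\mathbb F}))$ appearing inside $\widetilde{\cal H}(\mathbb F)$, which must be combined with the $H^{(0)}(\widetilde L^{\mathbb F},\mathbb F)$ jump sum so that the $\ln$ terms telescope into the clean form stated. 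Parts (b) and (c) are more mechanical once the $\widetilde\Xi$- and $\widetilde\Gamma^{(1)}$-identities are in hand, and (\ref{Equation4.64}) is essentially a restatement of definitions already used in the proof of Theorem \ref{generalpredictable}.
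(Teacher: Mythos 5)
Your plan tracks the paper's own proof assertion-by-assertion: for (a) you split $\widetilde{\mathcal H}(\mathbb F)$ into the $\widetilde V^{\mathbb F}$ piece plus the compensated jump sum plus $H^{(0)}(\widetilde L^{\mathbb F},\mathbb F)$, multiply by $1-\widetilde G$, split on $\{\Delta S\neq 0\}$ versus $\{\Delta S=0\}$, and compensate; for (b) you compute the bracket from the Jacod decomposition and discard the $g^{(m)}\star\mu$ and $m^\perp$ pieces after compensation; for (c) you reuse $\Delta\widetilde K^{\mathbb G}=I_{\Rbrack\tau,\infty\Lbrack}\bigl(\widetilde\Gamma^{(1)}/(1+\widetilde\varphi^{tr}\Delta S)-1\bigr)$ from the proof of Theorem \ref{generalpredictable}, expand the jump sum, project, and read (\ref{Equation4.64}) off the definitions of $\widetilde V^{(1)}$ and $\widetilde\Gamma^{(1)}$. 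This is the same route the paper takes; there is no alternative decomposition or shortcut.

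Two small slips you should fix before executing the plan. First, the jump identity you invoke should read $1-\widetilde G=(1-G_-)\bigl(1-f^{(m,1)}(\Delta S)-g^{(m,1)}(\Delta S)\bigr)$ on $\{\Delta S\neq 0\}$, with the \emph{normalized} functionals $f^{(m,1)},g^{(m,1)}$ of (\ref{Proceses(m,1)}), not $f^{(m)},g^{(m)}$ as you wrote; since $(1-G_-)f^{(m,1)}=f^{(m)}I_{\{G_-<1\}}$ the two differ by a factor of $1-G_-$ inside the parenthesis, and using the un-normalized version would corrupt the $f$-coefficients in the final $\star\nu$ integral. (The lemma's display itself writes $f^{(m)}$ where the body of the paper's own computation produces $f^{(m,1)}$, so the statement appears to carry a notational typo; be guided by the identities used in the proof of Theorem \ref{generalpredictable}.) Second, the reason $(1-\widetilde G)$ turns into $(1-G_-)$ under the $\mathbb F$-predictable projection is not that ``$\widetilde G$ and $G_-$ agree off the jumps of $D^{o,\mathbb F}$'' but simply that $^{p,\mathbb F}(1-\widetilde G)={}^{p,\mathbb F}\bigl({}^{o,\mathbb F}(I_{\Rbrack\tau,\infty\Lbrack})\bigr)={}^{p,\mathbb F}(I_{\Rbrack\tau,\infty\Lbrack})=1-G_-$, and for a predictable integrator $A$ one has $\bigl((1-\widetilde G)\is A\bigr)^{p,\mathbb F}={}^{p,\mathbb F}(1-\widetilde G)\is A$. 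Neither slip affects the strategy, and the rest of the argument — including the $\widetilde\Xi^{-1}=1-\Delta\widetilde V^{\mathbb F}$ telescoping in (a)--(b) and the $\widehat{W^{(1)}}=1-\widetilde\Gamma^{(1)}$ bookkeeping in (c) — is exactly what the paper does.
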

For the sake of simple exposition, we relegate the proof of this lemma to Appendix \ref{proof4lemmas}.
\begin{proof}[Proof of Theorem \ref{riskfactors}]  The proof of the theorem is delivered in two parts, where we prove (\ref{Equation4.17}) and (\ref{Equation4.18}). Both equalities relies on  writing the quantity $u_T(S, \mathbb F)$ in two manners as follows.
\begin{align}
 u_T(S, \mathbb F) &= E \left [ \ln ({\cal E}_T(\widetilde{\lambda} \is S ))\right]  = E \left[ -\ln ({\cal E}_T(-\widetilde{V}^{\mathbb F})) \right ] + E \left[ -\ln ({\cal E}_T(\widetilde{L}^{\mathbb F})) \right ]  \nonumber \\
 & = E \left[\widetilde{V}_T^{\mathbb F} + \sum_{0 < s \leq T} (-\Delta \widetilde{V}_s^{\mathbb F} - \ln (1-\Delta \widetilde{V}_s^{\mathbb F})) + H^{(0)}_T (\widetilde{L}^{\mathbb F}, \mathbb F) \right ]\label{Equation4.65}\\
 & =  E \left [ (\widetilde{G} \is \widetilde{\cal H}(\mathbb F))_T \right] + E \left [ ((1-\widetilde{G}) \is \widetilde{\cal H} (\mathbb F))_T \right].\label{Equation4.66}
\end{align}
{\bf Part 1.} This part proves the equality (\ref{Equation4.18}). Thus, we use the duality $\widetilde{Z}^{\mathbb{G}}=1/{\cal{E}}(\widetilde{\varphi} \is (S-S^{\tau}) )$ and (\ref{Equation4.13}), and derive 
\begin{align}
    &u_T(S-S^{\tau}, \mathbb G) \nonumber \\
     & = E \left [ \ln ({\cal E}_T (\widetilde{\varphi} \is (S-S^{\tau}) ))\right]  \nonumber \\
    & = E \left[ -\ln ({\cal E}_T(- I_{\Rbrack \tau , \infty \Lbrack} \is \widetilde{V}^{(a)})) \right ] + E \left[ -\ln ({\cal E}_T(I_{\Rbrack \tau , \infty \Lbrack} \is \widetilde{K}^{\mathbb G})) \right ] \nonumber  \\
    & = E \left[I_{\Rbrack \tau , \infty \Lbrack} \is \widetilde{V}_T^{(1)} + \sum_{0 < s \leq T} I_{\Rbrack \tau , \infty \Lbrack} (-\Delta \widetilde{V}_s^{(1)} - \ln (1-\Delta \widetilde{V}_s^{(1)})) + H^{(0)}_T (\widetilde{K}^{\mathbb G}, \mathbb G) \right ] \label{H(0)4KtildeG} \\
    & = E \left[(1-G_{-}) \is \widetilde{V}_T^{(1)} + (1-G_{-}) \is \sum_{0 < s \leq T} \left (-\Delta \widetilde{V}_s^{(1)} - \ln (1-\Delta \widetilde{V}_s^{(1)}) \right) +\frac{1-G_{-}}{2}\widetilde{\varphi}^{tr} c \widetilde{\varphi} \is A_T  \right] \nonumber \\
    & + E\left[ \sum_{0 < s \leq T} (1-G_{s-}) \left (\widetilde{\Gamma}_s^{(1)} -1 -\ln (\widetilde{\Gamma}_s^{(1)}) \right ) + (1-G_{-})\left ( \ln (1+\widetilde{\varphi}^{tr} x)- \frac{\widetilde{\Gamma}^{(1)} \widetilde{\varphi}^{tr}x}{1+\widetilde{\varphi}^{tr}x}\right )(1-f^{(m,1)})\star \nu_T \right ] \nonumber \\
    & = E \left [ (1-G_{-}) \is \widetilde{V}_T^{(1)} +\frac{1-G_{-}}{2}\widetilde{\varphi}^{tr} c \widetilde{\varphi} \is A_T + (1-G_{-})\left ( \ln (1+\widetilde{\varphi}^{tr} x)- \frac{\widetilde{\varphi}^{tr}x}{1+\widetilde{\varphi}^{tr}x}\right )(1-f^{(m,1)})\star \nu_T \right] \nonumber  \\
    & = E \left [(1-G_{-}) \is \left\{\widetilde{\varphi}^{tr} (b-c(\beta^{(m,1)} + \frac{\widetilde{\varphi}}{2})) \is A_T +\left((1-f^{(m,1)})\ln(1+\widetilde{\varphi}^{tr}x) -\widetilde{\varphi}^{tr} h\right) \star \nu \right\}_T \right].\label{Equation4.67}
\end{align}
Thus, by combining (\ref{Equation4.66}), (\ref{Equation4.67}), Lemma \ref{lemma6.5}-(a) and (\ref{Equation4.19}), we obtain 
\begin{align*}
&u_T(S-S^{\tau}, \mathbb G)-u_T(S, \mathbb F)\\
 &=-E \left [ ((1-\widetilde{G}) \is \widetilde{\cal H} (\mathbb F))_T +\int_0^T{\cal{P}}^{(N,1)}_s dA^{(1)}_s- (1-G_{-})\widetilde{\lambda}^{tr}\beta^{(m,1)}\is A_T- \frac{(1-G_{-})f^{(m,1)} \widetilde{\lambda}^{tr}x}{(1-\Delta \widetilde{V}^{\mathbb F})(1+\widetilde{\lambda}^{tr}x)} \star\nu_T\right].
\end{align*}
Thus, the equality (\ref{Equation4.18}) follows immediately from the above equality combined with Lemma \ref{lemma6.5}-(b) and the facts that $(1-G_{-})\beta^{(m,1)} =\beta^{(m)}I_{\{G_{-}<1\}}$ and $(1-G_{-})f^{(m,1)} =f^{(m)}I_{\{G_{-}<1\}}$.\\
{\bf Part 2.} This parts proves the equality (\ref{Equation4.17}).  To this end, we remark that (\ref{Equation4.13}) implies that ${\cal{E}}( \widetilde{K}^{\mathbb{G}})={\cal{E}}( I_{\Rbrack \tau , \infty \Lbrack}\is\widetilde{K}^{\mathbb{F}})/{\cal{E}}( I_{\Rbrack \tau , \infty \Lbrack}\is{m}^{(1)})$. Hence, by taking logarithm in both sides of the equality and using \cite[Proposition B.2-(a) ]{ChoulliYansori2}, we deduce that 
\begin{equation*}\label{KG2KtildeF}
 H^{(0)} (\widetilde{K}^{\mathbb G}, \mathbb G)=\widetilde{K}^{\mathbb G}- I_{\Rbrack \tau , \infty \Lbrack}\is\widetilde{K}^{\mathbb F}+ I_{\Rbrack \tau , \infty \Lbrack}\is{m}^{(1)}+ I_{\Rbrack \tau , \infty \Lbrack}\is{H}^{(0)} (\widetilde{K}^{\mathbb F}, \mathbb F)-I_{\Rbrack \tau , \infty \Lbrack}\is{H}^{(0)} (m^{(1)}, \mathbb F).
\end{equation*}
Thus, by combining this equality with (\ref{H(0)4KtildeG}), (\ref{Equation4.21}), (\ref{Glocalmartingaleaftertau}) (see Theorem \ref{OptionalDecompoTheorem}-(b)) and (\ref{m(1)2Hellinger}) (see Lemma \ref{deflator4hellinger}-(a)), we derive 
\begin{align*}
     &u_T(S-S^{\tau}, \mathbb G) \\
     & = E \left [((1-\widetilde{G}) \is \widetilde{\cal H} (\mathbb G))_T - I_{\Rbrack \tau , \infty \Lbrack} \is \widetilde{K}_T^{\mathbb F} - \frac{- I_{\Rbrack \tau , \infty \Lbrack}}{1-G_{-}} \is m_T -  (I_{\Rbrack \tau , \infty \Lbrack} \is{H}^{(0)} ({m}^{(1)}, \mathbb{F}))_T \right] \\
     & = E \left [((1-\widetilde{G}) \is \widetilde{\cal H} (\mathbb G))_T + \frac{I_{\Rbrack \tau , \infty \Lbrack}}{1-G_{-}} \is \langle \widetilde{K}^{\mathbb F},m \rangle_T^{\mathbb F}+ \frac{- I_{\Rbrack \tau , \infty \Lbrack}}{(1-G_{-})^2} \is \langle m \rangle^{\mathbb F}_T - H^{(0)}_T (\frac{-I_{\Rbrack \tau , \infty \Lbrack}}{1-G_{-}} \is m, \mathbb F) \right ] \\
     & = E \left [((1-\widetilde{G}) \is \widetilde{\cal H} (\mathbb G))_T +  \langle \widetilde{K}^{\mathbb F}, I_{\{G_{-} <1 \}} \is m \rangle_T^{\mathbb F} \right ] + E \left [ (1-G_{-}) \is  h^{(E)}_T (m^{(1)}, \mathbb F) \right ].
\end{align*}
Therefore, thanks to this latter equality and (\ref{Equation4.66}), we obtain 
\begin{align*}
     &u_T(S-S^{\tau}, \mathbb G)-u_T(S, \mathbb{F})\\
     &=-E \left [ ((1-\widetilde{G}) \is \widetilde{\cal H} (\mathbb F))_T -((1-\widetilde{G}) \is \widetilde{\cal H} (\mathbb G))_T +  \langle \widetilde{L}^{\mathbb F}-\widetilde{K}^{\mathbb F}, I_{\{G_{-} <1 \}} \is m \rangle_T^{\mathbb F} \right ] \\
     &-E \left [ (\widetilde{G} \is \widetilde{\cal H}(\mathbb F))_T \right] + E \left [ \langle \widetilde{L}^{\mathbb F}, I_{\{G_{-} <1 \}} \is m \rangle_T^{\mathbb F}  \right ]+ E \left [ (1-G_{-}) \is  h^{(E)}_T (m^{(1)}, \mathbb F) \right ].
     \end{align*}
     This proves (\ref{Equation4.18}) and the proof of the theorem is complete.
\end{proof}


\appendix
\section{Some results}

 \begin{lemma}\label{G2Fcompensator}
  Suppose that $\tau$ is an honest time. Then for any $\mathbb F$-adapted process $V$ with locally integrable variation, one has 
  \begin{equation}
  I_{\Rbrack \tau, +\infty \Lbrack} \is V^{p, \mathbb G} = 
  I_{\Rbrack \tau, +\infty \Lbrack} (1-G_{-})^{-1} \is \left( (1-\widetilde{G}) \is V \right)^{p, \mathbb F} .
  \end{equation}
 \end{lemma}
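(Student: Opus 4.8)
The plan is to establish the identity by evaluating both members against an arbitrary bounded nonnegative $\mathbb{G}$-predictable process and then invoking the uniqueness of a predictable process of locally integrable variation from its behaviour under such tests; honesty of $\tau$ is used at exactly one point. First observe that both sides of the claimed equality are $\mathbb{G}$-predictable processes of locally integrable variation, null at $0$ and carried by $\Rbrack\tau,+\infty\Lbrack$: the factor $I_{\Rbrack\tau,+\infty\Lbrack}$ is left-continuous and $\mathbb{G}$-adapted, hence $\mathbb{G}$-predictable, while $(1-G_-)^{-1}$ and $\big((1-\widetilde G)\is V\big)^{p,\mathbb F}$ are $\mathbb{F}$-predictable, hence $\mathbb{G}$-predictable; indeed the left-hand side is nothing but the $\mathbb{G}$-dual predictable projection of the ``after-$\tau$'' part $I_{\Rbrack\tau,+\infty\Lbrack}\is V$ of $V$. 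By linearity we may assume $V$ nondecreasing, so that $B:=\big((1-\widetilde G)\is V\big)^{p,\mathbb F}$ is nondecreasing, and a routine localization along $\mathbb{F}$-stopping times makes the expectations below finite (and, as a by-product, shows that $I_{\Rbrack\tau,+\infty\Lbrack}(1-G_-)^{-1}\is B$ is well defined). It thus suffices to prove, for every bounded nonnegative $\mathbb{G}$-predictable $H$, that
\[
E\!\left[\big(H\,I_{\Rbrack\tau,+\infty\Lbrack}\is V^{p,\mathbb G}\big)_\infty\right]=E\!\left[\Big(H\,I_{\Rbrack\tau,+\infty\Lbrack}(1-G_-)^{-1}\is B\Big)_\infty\right].
\]

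The decisive step uses honesty of $\tau$: there exists an $\mathbb{F}$-predictable process $\widetilde H$, which we may take bounded by $\|H\|_\infty$, with $H\,I_{\Rbrack\tau,+\infty\Lbrack}=\widetilde H\,I_{\Rbrack\tau,+\infty\Lbrack}$ (see \cite[paragraph 74, Chapitre XX]{dellacheriemeyer92}). Since $\widetilde H\,I_{\Rbrack\tau,+\infty\Lbrack}$ is bounded and $\mathbb{G}$-predictable, the defining property of the $\mathbb{G}$-dual predictable projection $V^{p,\mathbb G}$ gives
\[
E\!\left[\big(H\,I_{\Rbrack\tau,+\infty\Lbrack}\is V^{p,\mathbb G}\big)_\infty\right]=E\!\left[\big(\widetilde H\,I_{\Rbrack\tau,+\infty\Lbrack}\is V\big)_\infty\right]=E\!\left[\int_0^\infty\widetilde H_s\,I_{\{s>\tau\}}\,dV_s\right].
\]
Because $V$ is $\mathbb{F}$-adapted and RCLL, hence $\mathbb{F}$-optional, I would replace the integrand by its $\mathbb{F}$-optional projection; using that $\widetilde H$ is $\mathbb{F}$-optional and that ${}^{o,\mathbb F}\big(I_{\Rbrack\tau,+\infty\Lbrack}\big)=1-\widetilde G$ by (\ref{GGtildem}), this gives
\[
E\!\left[\int_0^\infty\widetilde H_s\,I_{\{s>\tau\}}\,dV_s\right]=E\!\left[\int_0^\infty\widetilde H_s(1-\widetilde G_s)\,dV_s\right]=E\!\left[\big(\widetilde H\is B\big)_\infty\right],
\]
the last equality because $\widetilde H$ is $\mathbb{F}$-predictable and $B$ is, by definition, the $\mathbb{F}$-dual predictable projection of $(1-\widetilde G)\is V$.

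For the right-hand side, $\widetilde H\,(1-G_-)^{-1}\is B$ is again $\mathbb{F}$-predictable, so applying the $\mathbb{F}$-dual predictable projection to the remaining factor $I_{\{s>\tau\}}$ and using the classical identity ${}^{p,\mathbb F}\big(I_{\Rbrack 0,\tau\Rbrack}\big)=G_-$, equivalently ${}^{p,\mathbb F}\big(I_{\Rbrack\tau,+\infty\Lbrack}\big)=1-G_-$, we obtain
\[
E\!\left[\Big(H\,I_{\Rbrack\tau,+\infty\Lbrack}(1-G_-)^{-1}\is B\Big)_\infty\right]=E\!\left[\int_0^\infty\widetilde H_s\,I_{\{s>\tau\}}(1-G_{s-})^{-1}\,dB_s\right]=E\!\left[\big(\widetilde H\is B\big)_\infty\right].
\]
This equals the quantity found for the left-hand side, so the two members agree against every bounded $\mathbb{G}$-predictable $H$, and therefore they coincide, which proves the lemma.

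I expect the genuine difficulties to be two bookkeeping matters rather than anything structural. The first is the honest-time substitution $H\,I_{\Rbrack\tau,+\infty\Lbrack}=\widetilde H\,I_{\Rbrack\tau,+\infty\Lbrack}$ with $\widetilde H$ an $\mathbb{F}$-predictable (bounded) process: this is exactly the characterization of honest times, and it is the only place honesty enters. The second concerns the set $\{G_-=1\}$: the statement is to be read with $(1-G_-)^{-1}:=(1-G_-)^{-1}I_{\{G_-<1\}}$, so in the last display one needs $(1-G_{s-})(1-G_{s-})^{-1}=1$ on the relevant set, i.e. that $G_-<1$ on $\Rbrack\tau,+\infty\Lbrack$ up to an evanescent set (a further property of honest times); this also ensures that $B$ puts no mass on $\Rbrack\tau,+\infty\Lbrack\cap\{G_-=1\}$, so that the right-hand side is insensitive to the convention.
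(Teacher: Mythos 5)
The paper does not prove Lemma~\ref{G2Fcompensator} itself: it is quoted verbatim from \cite[Lemma 3.1]{aksamitetal18}, so there is no internal argument to compare against. Your direct proof --- test both measures against a bounded nonnegative $\mathbb{G}$-predictable $H$, use the honest-time representation $H\,I_{\Rbrack\tau,\infty\Lbrack}=\widetilde H\,I_{\Rbrack\tau,\infty\Lbrack}$ with $\widetilde H$ bounded $\mathbb{F}$-predictable (Jeulin, cf.\ Lemma~\ref{GtoFpredictable}), and then push $I_{\Rbrack\tau,\infty\Lbrack}$ through the $\mathbb{F}$-optional projection (giving $1-\widetilde G$) against the $\mathbb{F}$-optional increasing $V$ on the left, and through the $\mathbb{F}$-predictable projection (giving $1-G_-$) against the $\mathbb{F}$-predictable increasing $B$ on the right --- is the standard and, as far as I can tell, correct route, and it is almost certainly the one taken in the cited source. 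Two small remarks. First, you say ``dual predictable projection'' in one place where you mean the (ordinary) predictable projection of the integrand; the mathematics is right, only the word ``dual'' is misplaced. Second, and this is the one genuine loose end: after the predictable-projection step the right-hand side produces $E\bigl[(\widetilde H\,I_{\{G_-<1\}}\is B)_\infty\bigr]$ rather than $E\bigl[(\widetilde H\is B)_\infty\bigr]$, so the whole argument hinges on $I_{\{G_-=1\}}\is B\equiv 0$, equivalently on $\Rbrack\tau,\infty\Lbrack\cap\{G_-=1\}$ being evanescent (which, as you observe, also makes both sides insensitive to the convention $(1-G_-)^{-1}:=(1-G_-)^{-1}I_{\{G_-<1\}}$). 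You attribute this to ``a further property of honest times'' but do not justify it. It is true that for any fixed $t$ one has $P(\tau<t,\,G_{t-}=1)=0$, since $1-G_{t-}=P(\tau<t\mid\mathcal F_{t-})$, but upgrading this pointwise statement to evanescence of the whole set, or at least to $B$-nullity, is precisely the nontrivial point. It is indeed a known fact about honest times (and it is implicitly what makes the statement in \cite[Lemma 3.1]{aksamitetal18} well posed), but you should either cite it precisely or prove it; as written this is the one step where the argument is incomplete rather than merely routine.
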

 The proof of this lemma can be found in \cite[Lemma 3.1]{aksamitetal18}. Below, we recall the following lemma from \cite[Proposition 5.3]{jeulin80}.
 \begin{lemma}\label{GtoFpredictable}
 Suppose that $\tau$ is an honest time and let $H$ be a $\widetilde{\cal P}(\mathbb G)$-measurable functional. Then, there exists two $\widetilde{\cal P}(\mathbb F)$-measurable functional $J$ and $K$ such that 
 \begin{equation}
     H I_{\Rbrack 0, \infty \Lbrack} = J I_{\Rbrack 0, \tau \Rbrack} + K I_{\Rbrack \tau, \infty \Lbrack}.
 \end{equation}
 \end{lemma}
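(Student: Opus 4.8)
The plan is to derive the statement from the corresponding decomposition for $\mathbb G$-predictable \emph{processes}, which is the classical content of \cite[Proposition 5.3]{jeulin80}. Since $\mathbb F\subseteq\mathbb G$, the inclusion $\widetilde{\mathcal P}(\mathbb F)\subseteq\widetilde{\mathcal P}(\mathbb G)$ is trivial, so all that is at stake is that, on each of the complementary sets $\Rbrack 0,\tau\Rbrack$ and $\Rbrack\tau,+\infty\Lbrack$, a $\widetilde{\mathcal P}(\mathbb G)$-measurable functional coincides with a $\widetilde{\mathcal P}(\mathbb F)$-measurable one. Granting this, I would take $J$ (resp. $K$) to be a $\widetilde{\mathcal P}(\mathbb F)$-measurable functional agreeing with $H$ on $\Rbrack 0,\tau\Rbrack$ (resp. on $\Rbrack\tau,+\infty\Lbrack$); since $\Rbrack 0,+\infty\Lbrack=\Rbrack 0,\tau\Rbrack\cup\Rbrack\tau,+\infty\Lbrack$ is a disjoint union, this gives $J\,I_{\Rbrack 0,\tau\Rbrack}+K\,I_{\Rbrack\tau,+\infty\Lbrack}=H\,I_{\Rbrack 0,+\infty\Lbrack}$.

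The first reduction removes the spatial variable. Writing $\widetilde{\mathcal P}(\mathbb H)=\mathcal P(\mathbb H)\otimes\mathcal B(\mathbb R^d)$ and observing that, as subsets of $\Omega\times[0,+\infty)\times\mathbb R^d$, the sets $\Rbrack 0,\tau\Rbrack$ and $\Rbrack\tau,+\infty\Lbrack$ are cylinders $E\times\mathbb R^d$ with $E\subseteq\Omega\times[0,+\infty)$, the elementary fact that the trace of a product $\sigma$-field on such a cylinder equals $(\mathcal P(\mathbb H)|_{E})\otimes\mathcal B(\mathbb R^d)$ (both sides being generated by the rectangles $(A\cap E)\times B$, and the trace operation commuting with $\sigma$-generation) reduces the problem to showing that a $\mathbb G$-predictable process, restricted to $\Rbrack 0,\tau\Rbrack$ (resp. to $\Rbrack\tau,+\infty\Lbrack$), agrees with an $\mathbb F$-predictable process. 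Working with $\Rbrack 0,+\infty\Lbrack$ open at $0$ is convenient here, since $\mathcal G_0$ may be strictly larger than $\mathcal F_0$.

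For the piece $\Rbrack 0,\tau\Rbrack$ no honesty is needed. Since $\mathcal P(\mathbb G)$ (restricted to $\Rbrack 0,+\infty\Lbrack$) is generated by the sets $A\times\Rbrack s,+\infty\Lbrack$ with $s\ge 0$ and $A\in\mathcal G_s$, it suffices to check that each of them meets $\Rbrack 0,\tau\Rbrack$ in a set of $\mathcal P(\mathbb F)|_{\Rbrack 0,\tau\Rbrack}$. Now $\mathcal G_s=\mathcal F_s\vee\sigma(\tau\wedge s)$, and on $\{\tau>s\}$ one has $\tau\wedge s=s$, so $\sigma(\tau\wedge s)$ has trivial trace on $\{\tau>s\}$ and therefore $\mathcal G_s$ and $\mathcal F_s$ have the same trace there; as $(A\times\Rbrack s,+\infty\Lbrack)\cap\Rbrack 0,\tau\Rbrack$ is contained in $\{\tau>s\}\times[0,+\infty)$, one may replace $A$ by an $\mathcal F_s$-measurable set without changing this intersection, which yields the claim (a routine covering argument handles the $s=0$ boundary case).

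The remaining piece $\Rbrack\tau,+\infty\Lbrack$ is where honesty is essential, and this is the step I expect to be the main obstacle. Morally it should be easy: on $\{t>\tau\}$, Definition \ref{honesttime} makes $\tau$, hence $\tau\wedge t$, coincide with an $\mathcal F_t$-measurable variable, so the enlargement ``adds nothing'' beyond $\tau$ and a $\mathbb G$-predictable process ought to reduce to an $\mathbb F$-predictable one there. The difficulty is that a generator $A\times\Rbrack s,+\infty\Lbrack$ meets $\Rbrack\tau,+\infty\Lbrack$ also at points where $\tau\ge s$, at which this partial measurability of $\tau$ does not apply directly; one must instead invoke the representation of an honest time as the end of an $\mathbb F$-optional set, together with the covering $\Rbrack\tau,+\infty\Lbrack=\bigcup_{q\in\mathbb Q_{+}}\bigl(\Rbrack q,+\infty\Lbrack\cap\{\tau<q\}\bigr)$, on each member of which one genuinely sits on $\{\tau<q\}$ and can transfer the relevant sets to $\mathcal F_q$, then pass to left limits and reassemble. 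I would carry out this last step exactly as in \cite[Proposition 5.3]{jeulin80}.
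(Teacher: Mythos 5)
The paper gives no proof of this lemma; it is stated as a recollection and the citation of \cite[Proposition 5.3]{jeulin80} is the whole of the paper's ``argument.'' Your sketch is sound and is, at bottom, the same route, since the real work on $\Rbrack\tau,+\infty\Lbrack$ is delegated to Jeulin exactly as the paper does. What you add, and what is genuinely worth spelling out, is the bridge between Jeulin's statement, which is phrased for $\mathbb G$-predictable \emph{processes}, and the statement here, which is for $\widetilde{\cal P}(\mathbb G)=\mathcal P(\mathbb G)\otimes\mathcal B(\mathbb R^d)$-measurable \emph{functionals}: your cylinder observation that the trace of a product $\sigma$-field on $E\times\mathbb R^d$ is $\bigl(\mathcal P(\mathbb H)\vert_E\bigr)\otimes\mathcal B(\mathbb R^d)$ (the rectangles generate, and taking traces commutes with $\sigma$-generation) is precisely what lets one transfer the process-level decomposition to functionals, and the paper leaves that implicit. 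Your treatment of $\Rbrack 0,\tau\Rbrack$ via Dellacherie's observation that $\mathcal G_s$ and $\mathcal F_s$ have identical traces on $\{\tau>s\}$ is correct and correctly flagged as not needing honesty, while your identification of the covering $\Rbrack\tau,+\infty\Lbrack=\bigcup_{q\in\mathbb Q_+}\bigl(\Rbrack q,+\infty\Lbrack\cap\{\tau<q\}\bigr)$ as the place where honesty enters (to push $\mathcal G_q$-sets on $\{\tau<q\}$ into $\mathcal F_q$) is the right diagnosis; deferring the reassembly to Jeulin is acceptable here, since that is exactly what the paper itself does.
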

 
 \begin{lemma}\label{G2exponential} If (\ref{Assumptions4Tau}) holds, then 
 \begin{equation}\label{GTM}
    \dfrac{1-G}{1-G^{\tau}}= {\cal E}(-\frac{1}{1-G_{-}}I_{\Rbrack \tau, \infty \Lbrack} \is m)= {\cal E}(I_{\Rbrack \tau, \infty \Lbrack} \is{m}^{(1)}).
 \end{equation}
\end{lemma}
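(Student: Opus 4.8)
The plan is to prove the two equalities in (\ref{GTM}) separately: the second is a one-line remark, while for the first I would verify that $(1-G)/(1-G^{\tau})$ solves the linear equation that defines the claimed stochastic exponential. For the second equality, note that by (\ref{m(a)}) and associativity of the stochastic integral, $I_{\Rbrack \tau, \infty \Lbrack}\is m^{(1)}=-\bigl((1-G_{-})^{-1}I_{\{G_{-}<1\}}I_{\Rbrack \tau, \infty \Lbrack}\bigr)\is m$, so it suffices to show that $I_{\{G_{-}=1\}}I_{\Rbrack \tau, \infty \Lbrack}$ is evanescent. Using $G_{-}\le\widetilde G$, it is enough to kill $I_{\{\widetilde G=1\}}I_{\Rbrack \tau, \infty \Lbrack}$, and this follows from ${}^{o,\mathbb F}\bigl(I_{\Rbrack \tau, \infty \Lbrack}\bigr)=1-\widetilde G$ (recall $I_{\Rbrack \tau, \infty \Lbrack}=1-I_{\Lbrack 0,\tau\Rbrack}$): the nonnegative optional process $I_{\{\widetilde G=1\}}I_{\Rbrack \tau, \infty \Lbrack}$ has $\mathbb F$-optional projection $I_{\{\widetilde G=1\}}(1-\widetilde G)=0$, hence vanishes. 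Therefore $I_{\{G_{-}<1\}}I_{\Rbrack \tau, \infty \Lbrack}=I_{\Rbrack \tau, \infty \Lbrack}$ up to evanescence, which yields $I_{\Rbrack \tau, \infty \Lbrack}\is m^{(1)}=-(1-G_{-})^{-1}I_{\Rbrack \tau, \infty \Lbrack}\is m$ and hence the equality of the two exponentials.

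For the first equality, I would read $(1-G)/(1-G^{\tau})$ as the process $Y:=I_{\Lbrack 0,\tau\Rbrack}+(1-G)(1-G_{\tau})^{-1}I_{\Rbrack \tau, \infty \Lbrack}$, which is legitimate since by (\ref{Assumptions4Tau}) one has $1-G_{\tau}>0$ $P$-a.s.\ and, by the evanescence just obtained together with $G\le\widetilde G$, both $1-G$ and $1-G_{-}$ are strictly positive on $\Rbrack \tau, \infty \Lbrack$; thus $Y$ is a positive $\mathbb G$-semimartingale with $Y_{0}=1$. Put $K:=-(1-G_{-})^{-1}I_{\Rbrack \tau, \infty \Lbrack}\is m$, an integral that is well defined in the present setting, being the leading term of ${\cal T}^{(a)}(m)$ (cf.\ (\ref{honestMhat}) and Theorem \ref{OptionalDecompoTheorem}). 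The one genuinely probabilistic input I would invoke is the classical fact that for an honest time the dual optional projection $D^{o,\mathbb F}$ is carried by $\Lbrack 0,\tau\Rbrack$, i.e.\ $I_{\Rbrack \tau, \infty \Lbrack}\is D^{o,\mathbb F}=0$ (see e.g.\ \cite{jeulin80}); combined with $m=G+D^{o,\mathbb F}$ this gives, on $\Rbrack \tau, \infty \Lbrack$, $I_{\Rbrack \tau, \infty \Lbrack}\is m=I_{\Rbrack \tau, \infty \Lbrack}\is G=-I_{\Rbrack \tau, \infty \Lbrack}\is(1-G)$, whence $dK=(1-G_{-})^{-1}I_{\Rbrack \tau, \infty \Lbrack}\,d(1-G)$ and $1+\Delta K=(1-G)(1-G_{-})^{-1}>0$ on $\Rbrack \tau, \infty \Lbrack$. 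It then remains to check $dY=Y_{-}\,dK$, which I would do pathwise: on $\Lbrack 0,\tau\Rbrack$ both sides vanish ($Y$ is constant equal to $1$ there, and $\Delta K_{\tau}=0$ because $I_{\Rbrack \tau, \infty \Lbrack}$ vanishes at $\tau$), while on $\Rbrack \tau, \infty \Lbrack$, where $1-G^{\tau}\equiv 1-G_{\tau}$ is constant, $dY=(1-G_{\tau})^{-1}\,d(1-G)$ and $Y_{-}=(1-G_{-})(1-G_{\tau})^{-1}$, so $Y_{-}\,dK=(1-G_{\tau})^{-1}\,d(1-G)=dY$. Since by definition (\ref{S-exponential}) ${\cal E}(K)$ is the unique solution of $dX=X_{-}\,dK$, $X_{0}=1$, this forces $Y={\cal E}(K)$, i.e.\ the first equality in (\ref{GTM}); together with the previous paragraph the proof is complete.

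The only substantive obstacle is the honest-time fact $I_{\Rbrack \tau, \infty \Lbrack}\is D^{o,\mathbb F}=0$, which I would quote rather than reprove — it encodes that the compensating part of the default indicator does not charge the time interval after $\tau$, a characteristic feature of honest times. Everything else is bookkeeping: checking that $1-G$ and $1-G_{-}$ are positive on $\Rbrack \tau, \infty \Lbrack$ and $1-G_{\tau}>0$ $P$-a.s., so that the ratio, the integrand $(1-G_{-})^{-1}I_{\Rbrack \tau, \infty \Lbrack}$ and ${\cal E}(K)$ are all meaningful, and noting that the $m$-integrability of $(1-G_{-})^{-1}I_{\Rbrack \tau, \infty \Lbrack}$ is already part of the standing framework (Theorem \ref{OptionalDecompoTheorem}); the computation of $dY=Y_{-}\,dK$ itself is immediate once $D^{o,\mathbb F}$ is known to be stopped at $\tau$.
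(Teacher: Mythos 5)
The paper does not give a proof here (it simply cites \cite[Lemma 4.8]{ChoulliAlharbi}), so I can only judge your argument on its own terms. The first-equality part is correct and clean: reading $Y=(1-G)/(1-G^{\tau})$ as a process, verifying pathwise that $dY=Y_{-}\,dK$ with $K=-(1-G_{-})^{-1}I_{\Rbrack\tau,\infty\Lbrack}\is m$, and invoking uniqueness for the Dol\'eans equation is a valid route, and the single substantive ingredient you quote, $I_{\Rbrack\tau,\infty\Lbrack}\is D^{o,\mathbb F}=0$ for an honest time, is indeed a classical fact.

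The gap is the assertion $G_{-}\le\widetilde G$. Since $\widetilde G=G_{-}+\Delta m$, that inequality is equivalent to $\Delta m\ge 0$, i.e., to the Az\'ema martingale $m$ having no negative jumps, which is false for a general honest time. Concretely, take $\mathcal F_t$ trivial for $t<1$ and $\mathcal F_t=\sigma(\xi)$ for $t\ge 1$ with $\xi$ a fair coin, and $\tau=1/2$ on $\{\xi=1\}$, $\tau=3/2$ on $\{\xi=0\}$; this $\tau$ is a finite honest time, and at $t=1$ on $\{\xi=1\}$ one has $G_{1-}=1/2>0=\widetilde G_1$. Note also that assumption (\ref{Assumptions4Tau}) gives $\{\widetilde G=1\}\subset\{G_{-}=1\}$, which is the inclusion \emph{opposite} to the one you need. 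As a result, the reduction to killing $I_{\{\widetilde G=1\}}I_{\Rbrack\tau,\infty\Lbrack}$ is unjustified, and so is the ensuing claim that $1-G_{-}>0$ on $\Rbrack\tau,\infty\Lbrack$ — which your first-equality verification also relies on to make $K$ and the positivity of ${\cal E}(K)$ meaningful. The property you actually need, that $\{G_{-}=1\}\cap\Rbrack\tau,\infty\Lbrack$ is evanescent, is a genuine structural fact about honest times (it is already built into the form of Theorem \ref{OptionalDecompoTheorem}, which integrates $(1-G_{-})^{-1}$ after $\tau$, and goes back to \cite{jeulin80}); but it does not follow from $\widetilde G<1$ on $\Rbrack\tau,\infty\Lbrack$ by a pointwise comparison of $G_{-}$ with $\widetilde G$, since these differ by $\Delta m$, which can have either sign. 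You should cite this fact directly, or derive it from the honest-time description of $\tau$ as the end of the optional set carrying $dD^{o,\mathbb F}$. Once that is in hand, everything else in your proof closes.
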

The proof of this lemma can be found in \cite[Lemma 4.8]{ChoulliAlharbi}. We end this section by a lemma, which is very useful in some proofs of the main results.
\begin{lemma}\label{Theta(G)2Theta(F)}  The following assertions hold.\\
{\rm{(a)}} If $V^{\mathbb{G}}$ is RCLL, nondecreasing and $\mathbb{G}$-predictable process such that $(V^{\mathbb{G}})^{\tau}\equiv 0$, then there exists a unique RCLL, nondecreasing and $\mathbb{F}$-predictable process $V^{\mathbb{F}}$ such that 
 \begin{equation}\label{VG2VF}
 I_{\{G_{-}=1\}}\is V^{\mathbb{F}}\equiv 0\quad\mbox{ and}\quad V^{\mathbb{G}}=I_{\Rbrack\tau,+\infty\Lbrack}\is{V}^{\mathbb{F}}.\end{equation}
Furthermore $\Delta{V}^{\mathbb{G}}<1$ if and only $\Delta{V}^{\mathbb{F}}<1$.\\
 {\rm{(b)}} If  (\ref{Assumptions4Tau})  holds, then
 \begin{equation}\label{Equality4SetsTheta}
  {\cal{L}}_b(S-S^{\tau},\mathbb{G})= \Bigl\{\varphi{I}_{\Rbrack\tau,+\infty\Lbrack}\quad:\quad \varphi\in    {\cal{L}}_b({S}^{(1)},\mathbb{F})\Bigr\}.
\end{equation}
 {\rm{(c)}}Let $H$ be any $\mathbb{F}$-predictable process. Then $H\leq 0$ $P\otimes A$-a.e. on $\Rbrack\tau,+\infty\Lbrack$ if and only if  $H\leq 0$ $P\otimes A^{(1)}$-a.e..
 \end{lemma}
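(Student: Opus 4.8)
This lemma has three parts, each of which is a statement about transferring a structural property between the filtration $\mathbb{F}$ and the progressively enlarged filtration $\mathbb{G}$ on the stochastic interval $\Rbrack\tau,+\infty\Lbrack$. The common engine will be the decomposition of $\widetilde{\cal P}(\mathbb{G})$-measurable functionals (Lemma \ref{GtoFpredictable}) and its process analogue, namely that any $\mathbb{G}$-predictable process agrees on $\Rbrack\tau,+\infty\Lbrack$ with $I_{\Rbrack\tau,+\infty\Lbrack}$ times an $\mathbb{F}$-predictable process; this is the honesty of $\tau$ at work, see Jeulin \cite{jeulin80}.

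For part (a), given a nondecreasing, RCLL, $\mathbb{G}$-predictable $V^{\mathbb{G}}$ with $(V^{\mathbb{G}})^\tau\equiv 0$, I would first invoke the $\mathbb{G}$-to-$\mathbb{F}$ predictable reduction to obtain an $\mathbb{F}$-predictable process $H$ with $V^{\mathbb{G}}=I_{\Rbrack\tau,+\infty\Lbrack}\is H$, then set $V^{\mathbb{F}}:=I_{\{G_{-}<1\}}\is H$ and check: monotonicity of $V^{\mathbb{F}}$ is inherited because $\{G_{-}<1\}\supseteq\{\text{some part of }\Rbrack\tau,+\infty\Lbrack \text{ is visited}\}$ in the sense that $^{o,\mathbb F}(I_{\Rbrack\tau,+\infty\Lbrack})=1-\widetilde G$ and $\{1-\widetilde G>0\}$ sits inside $\{G_{-}<1\}$ up to an evanescent set under \eqref{Assumptions4Tau}; the identity $V^{\mathbb{G}}=I_{\Rbrack\tau,+\infty\Lbrack}\is V^{\mathbb{F}}$ then follows since $I_{\Rbrack\tau,+\infty\Lbrack}\is(I_{\{G_{-}=1\}}\is H)=0$ because $\{G_{-}=1\}\cap\Rbrack\tau,+\infty\Lbrack$ is evanescent. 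Uniqueness is immediate: if $I_{\Rbrack\tau,+\infty\Lbrack}\is(V_1^{\mathbb{F}}-V_2^{\mathbb{F}})=0$ with both supported on $\{G_{-}<1\}$, taking $\mathbb{F}$-dual predictable projections and using Lemma \ref{G2Fcompensator} gives $(1-\widetilde G)\is(V_1^{\mathbb{F}}-V_2^{\mathbb{F}})=0$, hence $V_1^{\mathbb{F}}=V_2^{\mathbb{F}}$. The jump equivalence $\Delta V^{\mathbb{G}}<1\iff\Delta V^{\mathbb{F}}<1$ follows from $\Delta V^{\mathbb{G}}=I_{\Rbrack\tau,+\infty\Lbrack}\Delta V^{\mathbb{F}}$ on $\Rbrack\tau,+\infty\Lbrack$ together with the support considerations just above.

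For part (b), the inclusion $\supseteq$ is the easy direction: for $\varphi\in{\cal L}_b(S^{(1)},\mathbb{F})$ one has $\varphi I_{\Rbrack\tau,+\infty\Lbrack}\Delta(S-S^\tau)=\varphi I_{\Rbrack\tau,+\infty\Lbrack}\Delta S=\varphi I_{\{G_{-}<1\}}I_{\Rbrack\tau,+\infty\Lbrack}\Delta S>-1$ using $S-S^\tau=I_{\Rbrack\tau,+\infty\Lbrack}\is S$ and the definition of ${\cal L}(S^{(1)},\mathbb{F})$ via $S^{(1)}=I_{\{G_{-}<1\}}\is S$; boundedness is inherited. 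For $\subseteq$, take $\psi\in{\cal L}_b(S-S^\tau,\mathbb{G})$, reduce it to $\psi=\varphi I_{\Rbrack\tau,+\infty\Lbrack}$ with $\varphi$ bounded $\mathbb{F}$-predictable, and check that $\varphi$ can be chosen supported on $\{G_{-}<1\}$ and satisfies $\varphi\Delta S^{(1)}>-1$; the constraint $\psi\Delta(S-S^\tau)>-1$ reads $\varphi\Delta S>-1$ on $\Rbrack\tau,+\infty\Lbrack$, which by the support/projection argument translates to $\varphi\Delta S>-1$ on $\{G_{-}<1\}$, i.e. $\varphi\in{\cal L}(S^{(1)},\mathbb{F})$. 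Part (c) is proved along exactly the same lines: $\{H\le 0\}$ holds $P\otimes A$-a.e. on $\Rbrack\tau,+\infty\Lbrack$ iff $I_{\Rbrack\tau,+\infty\Lbrack}(H)^{+}\is A$ is evanescent, and applying the $\mathbb{F}$-dual predictable projection (Lemma \ref{G2Fcompensator}) turns this into $(1-\widetilde G)(H)^{+}\is A$ being evanescent, which under \eqref{Assumptions4Tau} (so that $\{1-\widetilde G>0\}$ coincides up to evanescence with $\{G_{-}<1\}$) is equivalent to $(H)^{+}\is A^{(1)}$ being evanescent, i.e. $H\le 0$ $P\otimes A^{(1)}$-a.e.

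The main obstacle I anticipate is bookkeeping with the support sets: one must be careful that under \eqref{Assumptions4Tau} (in particular $\{\widetilde G=1>G_{-}\}=\emptyset$) the processes $1-\widetilde G$, $1-G_{-}$, and $I_{\{G_{-}<1\}}$ have matching supports up to evanescent sets on $\Rbrack\tau,+\infty\Lbrack$, and that the $\mathbb{F}$-dual predictable projection commutes with multiplication by the $\mathbb{F}$-predictable process $(1-G_{-})^{-1}I_{\{G_{-}<1\}}$. Once these set-theoretic facts are nailed down, all three parts reduce to one application of Lemma \ref{GtoFpredictable} (or its process form) followed by one application of Lemma \ref{G2Fcompensator}.
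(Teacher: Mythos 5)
The paper's proof delegates assertions (a) and (b) to \cite[Lemma 3.9]{ChoulliAlharbi} and only proves assertion (c) internally, so only (c) can be compared directly.

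Your plan for (c) shares the paper's skeleton: express the claim as nullity of an expectation, project to $\mathbb{F}$, compare measures. But the projection step is off. You claim that the ``$\mathbb{F}$-dual predictable projection'' produces $(1-\widetilde G)(H)^+\is A$; since $A$ and $(H)^+$ are $\mathbb{F}$-predictable, the dual predictable projection of $I_{\Rbrack\tau,+\infty\Lbrack}(H)^+\is A$ replaces $I_{\Rbrack\tau,+\infty\Lbrack}$ by its $\mathbb{F}$-\emph{predictable} projection $1-G_{-}$, not by the optional projection $1-\widetilde G$. The paper uses $1-G_{-}$, after which the passage to $P\otimes A^{(1)}$-a.e.\ is immediate because $\{1-G_{-}>0\}=\{G_{-}<1\}$. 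Your $1-\widetilde G$ route instead requires $\{\widetilde G<1\}$ and $\{G_{-}<1\}$ to coincide up to a $P\otimes A$-null set. Assumption (\ref{Assumptions4Tau}) gives $\{\widetilde G=1\}\subset\{G_{-}=1\}$, hence $\{G_{-}<1\}\subset\{\widetilde G<1\}$, but \emph{not} the reverse inclusion: the set $\{\widetilde G<1,\ G_{-}=1\}=\{G_{-}=1,\ \Delta m<0\}$ need not be $P\otimes A$-null, and you supply no argument for this. Using the predictable projection $1-G_{-}$ simply removes this obstacle.

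On (a), which the paper does not prove internally, a caution is still warranted: Lemma \ref{GtoFpredictable} reduces \emph{values}, yielding an $\mathbb{F}$-predictable $K$ with $V^{\mathbb{G}}I_{\Rbrack\tau,+\infty\Lbrack}=K I_{\Rbrack\tau,+\infty\Lbrack}$, whereas the lemma asserts a reduction of \emph{increments}, $V^{\mathbb{G}}=I_{\Rbrack\tau,+\infty\Lbrack}\is V^{\mathbb{F}}$, i.e.\ $V^{\mathbb{G}}_t=V^{\mathbb{F}}_t-V^{\mathbb{F}}_{t\wedge\tau}$. These are not the same statement, and the $K$ produced by the value reduction need not be nondecreasing, so your opening step ``obtain $H$ with $V^{\mathbb{G}}=I_{\Rbrack\tau,+\infty\Lbrack}\is H$'' is a real gap. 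Part (b) uses only the integrand-level reduction, which Lemma \ref{GtoFpredictable} does give directly, and your sketch there is essentially sound modulo the same support bookkeeping caveat as in (c).
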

 \begin{proof} Both assertion (a) and (b) can be found in  \cite[Lemma 3.9]{ChoulliAlharbi}. Thus the rest of this proof focuses on assertion (c). To this end, we remark that $H\leq 0$ $P\otimes A$-a.e. on $\Rbrack\tau,+\infty\Lbrack$ if and only if 
 $$E\left[\int_0^{\infty} I_{\Rbrack\tau,+\infty\Lbrack}(s) I_{\{H_s>0\}}dA_s\right]=0.$$
 Then, as $A$ and $H$ are $\mathbb{F}$-predictable processes, we take the $\mathbb{F}$-predictable projection inside the RHS term, and conclude that the above equality is equivalent to 
 $$E\left[\int_0^{\infty} (1-G_{s-}) I_{\{H_s>0\}}dA_s\right]=0.$$
 This means that $(1-G_{-}) I_{\{H>0\}}\equiv 0$ $P\otimes A$-a.e., or equivalently $I_{\{H>0\}}\equiv 0$ $P\otimes A^{(1)}$-a.e.. This proves assertion (c), and the proof of the lemma is complete. \end{proof}
\section{Proofs of Lemmas \ref{deflator4hellinger}, \ref{PredictableCharateristoics4S(tau)}, \ref{F-supermartingale2G-supermartingale} and \ref{lemma6.5}}\label{proof4lemmas}
\begin{proof}[Proof of Lemma \ref{deflator4hellinger}] The proof of the lemma is divided into three parts. The first part proves assertion (a). The second part addresses assertion (b) and (c), while the third part addresses assertion (d).\\
\textbf{Part 1}: This part proves assertion (a). Recall that due to \cite{jeulin80}, $m$ is a BMO $\mathbb{F}$-martingale and hence it is locally bounded, and thanks to \cite[lemma 2.6-(b)]{aksamitetal18} the process $(1-G_{-})^{-1} I_{\{G_{-} < 1\}}$ is locally bounded. Hence, both processes $(1-G_{-})^{-2} I_{\{G_{-}<1\}} \is \langle m \rangle^{\mathbb F}= \langle {m}^{(1)} \rangle^{\mathbb F}$ and $H^{(0)} (m^{(1)},\mathbb{F})$ are nondecreasing  and locally bounded. As a result, the first statement of assertion (a) follows immediately. Furthermore, using the facts that $(A^{p, \mathbb F} = (A^{o, \mathbb F})^{p, \mathbb F})$ for a process $A$ of finite variation,  $(H \is V)^{p,\mathbb F} = {^{p,\mathbb F}(H) \is V}$ for $V$ a predictable of finite variation, $^{p,\mathbb F}(I_{\Rbrack \tau, \infty \Lbrack}) = 1-G_{-}$, and $^{o,\mathbb F}(I_{\Rbrack \tau, \infty \Lbrack}) = 1-\widetilde{G}$,  we derive
\begin{eqnarray*}
&&\left (\frac{I_{\Rbrack \tau,\infty \Lbrack}}{(1-G_{-})^2} \is \langle m \rangle^{\mathbb F} -I_{\Rbrack \tau,\infty \Lbrack} \is H^{(0)} (m^{(1)}, {\mathbb F}) \right )^{p, \mathbb F} \\
&& = \left (\frac{I_{\Rbrack \tau,\infty \Lbrack}}{(1-G_{-})^2} \is \langle m \rangle^{\mathbb F} \right )^{p, \mathbb F} - \left (I_{\Rbrack \tau,\infty \Lbrack} \is H^{(0)} (m^{(1)}, {\mathbb F}) \right )^{p, \mathbb F} \\
&& = \frac{I_{\{G_{-} < 1\}}}{1-G_{-}}\is \langle m \rangle^{\mathbb F} - \left ((1- \widetilde{G}) \is H^{(0)}\left(m^{(1)}, {\mathbb F}\right ) \right )^{p, \mathbb F} \\
&& = \frac{I_{\{G_{-} < 1\}}}{2(1-G_{-})} \is \langle m^{c} \rangle^{\mathbb F} +\sum \frac{(\Delta m)^2}{1-G_{-}}I_{\{G_{-}<1\}} - \Bigl( \sum (1-\widetilde{G}) \left (\frac{-\Delta{m}I_{\{G_{-}<1\}} }{1-G_{-}}- \ln (1 - \frac{\Delta{m}I_{\{G_{-}<1\}} }{1-G_{-}})\right )\Bigr)^{p, \mathbb F} \\ 
&& = \frac{I_{\{G_{-} < 1\}}}{2(1-G_{-})} \is \langle m^{c} \rangle^{\mathbb F} +  (1-G_{-}) \is\left ( \sum\left ( (1+{m}^{(1)}) \ln (1+\Delta{m}^{(1)}) - {m}^{(1)}\right ) \right )^{p, \mathbb F} \\ 
&& =(1-G_{-}) \is h^{(E)}(m^{(1)} , \mathbb F).
\end{eqnarray*}
The last two equalities follow from combining the definition \ref{hellinger} with $\langle m \rangle ^{\mathbb F} = \left ( [m,m]\right )^{p, \mathbb F}$ and $\widetilde{G} =  G_{-} + \Delta m$. This ends the proof of assertion (a).\\
\textbf{Part 2:}  Here we prove assertions (b) and (c). To this end, we consider  $Z^{\mathbb G} \in {\cal D}(S-S^{\tau}, \mathbb G)$ and hence it is clear that $ Z^{\mathbb G}/( Z^{\mathbb G})^{\tau}\in {\cal D}(S-S^{\tau}, \mathbb G)$. Then in virtue of \cite[Theorem 4.4]{ChoulliAlharbi}, we obtain the existence of $Z^{\mathbb F} \in {\cal D}({S}^{(1)}, \mathbb F)$ satisfying
\begin{align*}
    {{Z^{\mathbb G}}\over{( Z^{\mathbb G})^{\tau}}}= \dfrac{Z^{\mathbb F}/(Z^{\mathbb F})^\tau}{ {\cal E}(-(1-G_{-})^{-1} I_{\Rbrack \tau, \infty \Lbrack} \is m)}.
\end{align*}
Then by combining this equality with the fact that $( Z^{\mathbb G})^{\tau}$ is a positive $\mathbb{G}$-supermartingale with $Z^{\mathbb G}_0=1$, we derive
\begin{eqnarray*}
E[ - \ln Z^{\mathbb G}_T] &&= E[ - \ln Z^{\mathbb G}_T - \ln Z^{\mathbb G}_{T\wedge\tau}]+E[ - \ln Z^{\mathbb G}_{T\wedge\tau}]\\
&&\geq  E\Bigl[ - \ln \left({{Z^{\mathbb G}_T}\over{Z^{\mathbb G}_{T\wedge\tau}}}\right)\Bigr]=E\Bigl[-\ln \dfrac{Z^{\mathbb F}_T/(Z^{\mathbb F}_{T\wedge\tau}}{{\cal E}_T(-(1-G_{-})^{-1} I_{\Rbrack \tau, \infty \Lbrack} \is m)}\Bigr].\end{eqnarray*}
This proves assertion (b). To prove assertion (c), on the one hand, we remark that assertion (b) implies clearly that
\begin{eqnarray*} \inf_{Z^{\mathbb G} \in {\cal D}(S-S^{\tau}, \mathbb G)} E \left [ - \ln Z^{\mathbb G} \right ] \geq \inf_{Z^{\mathbb F} \in {\cal D}({S}^{(1)}, \mathbb F)} E \left [-\ln \dfrac{Z^{\mathbb F}/(Z^{\mathbb F})^\tau}{ {\cal E}(I_{\Rbrack \tau, \infty \Lbrack} \is{m}^{(1)})}\right ].
\end{eqnarray*}
On the other hand, again \cite[Theorem 4.4]{ChoulliAlharbi} allows us to conclude that 
$$\Biggl\{\dfrac{Z^{\mathbb F}/(Z^{\mathbb F})^\tau}{ {\cal E}(I_{\Rbrack \tau, \infty \Lbrack} \is{m}^{(1)})}:\ Z^{\mathbb F}\in  {\cal D}({S}^{(1)}, \mathbb F)\Biggr\}\subset  {\cal D}({S}-S^{\tau}, \mathbb{G}),$$ and hence 
\begin{eqnarray*} \inf_{Z^{\mathbb G} \in {\cal D}(S-S^{\tau}, \mathbb G)} E \left [ - \ln Z^{\mathbb G} \right ] \leq \inf_{Z^{\mathbb F} \in {\cal D}({S}^{(1)}, \mathbb F)} E \left [-\ln \dfrac{Z^{\mathbb F}/(Z^{\mathbb F})^\tau}{ {\cal E}(I_{\Rbrack \tau, \infty \Lbrack} \is{m}^{(1)})}\right ].
\end{eqnarray*}
Therefore, assertion (c) follows from the last two inequalities. \\
\textbf{Part 3:}  To prove assertion (d), we apply \cite[Proposition B.2-(a)]{ChoulliYansori2} repeatedly whenever it is convenient, and get   derive 
\begin{eqnarray}
&&-\ln \left (\frac{Z^{\mathbb F}/(Z^{\mathbb F})^{ \tau }}{{\cal E} \left (- I_{\Rbrack \tau, \infty \Lbrack}(1-G_{-})^{-1}  \is m \right)}\right ) \nonumber\\
&& = -\ln {\cal E}(I_{\Rbrack \tau, \infty \Lbrack} \is K^{\mathbb F}) +\ln {\cal E}(-I_{\Rbrack \tau, \infty \Lbrack}(1-G_{-})^{-1} \is m) + {I_{\Rbrack \tau, \infty \Lbrack} \is V}\nonumber \\ 
&& = -I_{\Rbrack \tau, \infty \Lbrack} \is K^{\mathbb F} + I_{\Rbrack \tau, \infty \Lbrack} \is H^{(0)} (K^{\mathbb F}, \mathbb F) + I_{\Rbrack \tau, \infty \Lbrack} \is V - I_{\Rbrack \tau, \infty \Lbrack} (1-G_{-})^{-1} \is m - I_{\Rbrack \tau, \infty \Lbrack} \is H^{(0)}(m^{(1)}, \mathbb F) \nonumber\\ 
&& = {\mathbb G}\mbox{-local martingale} + I_{\Rbrack \tau, \infty \Lbrack} \is V+ \frac{I_{\Rbrack \tau, \infty \Lbrack} }{1-G_{-}} \is \langle K^{\mathbb F},  m \rangle^{\mathbb F} + I_{\Rbrack \tau, \infty \Lbrack} \is H^{(0)}(K^{\mathbb F}, \mathbb F)\nonumber \\
&& + \frac{ I_{\Rbrack \tau, \infty \Lbrack}}{(1-G_{-})^2} \is \langle m, m \rangle^{\mathbb F} - I_{\Rbrack \tau, \infty \Lbrack}\is H^{(0)}(m^{(1)} , \mathbb F)\nonumber\\
&& = {\mathbb G}-\mbox{local martingale} + I_{\Rbrack \tau, \infty \Lbrack} \is V +I_{\Rbrack \tau, \infty \Lbrack} \is \langle K^{\mathbb F}, m^{(1)} \rangle^{\mathbb F} + \frac{I_{\Rbrack \tau, \infty \Lbrack}}{(1-G_{-})^2}  \is \langle m \rangle^{\mathbb F}\nonumber \\
&& + \left ( I_{\Rbrack \tau, \infty \Lbrack} \is \left ( H^{(0)}(K^{\mathbb F}, \mathbb F) - H^{(0)}(m^{(1)}, \mathbb F)\right ) \right )^{p, \mathbb G}\nonumber\\
&&= {\mathbb G}\mbox{-local martingale} + I_{\Rbrack \tau, \infty \Lbrack} \is \left(V +\langle K^{\mathbb F}, m^{(1)} \rangle^{\mathbb F} +{h}^{(E)}(m^{(1)}, \mathbb F)+ \left ({H}^{(0)}(K^{\mathbb F}, \mathbb F)\right )^{p, \mathbb G}\right).\label{Equa100}
\end{eqnarray}
The third and fourth equalities are due to \eqref{Glocalmartingaleaftertau} applied to both $K^{\mathbb F}$ and $m$, and to the fact that $V- V^{p,\mathbb G}$ is $\mathbb G$-local martingale for any $V \in {\cal A}_{loc}(\mathbb G)$ respectively. The last equality follows from  assertion (a). Thus the decomposition (\ref{Equa100}) is the Doob-Meyer's decomposition of the positive $\mathbb{G}$-submartingale $-\ln \left (Z^{\mathbb F}/(Z^{\mathbb F})^{ \tau }{\cal E} \left (- I_{\Rbrack \tau, \infty \Lbrack}(1-G_{-})^{-1}  \is m \right)\right ) $. Then by applying \cite[Proposition B.2-(b)]{ChoulliYansori2} to it, the proof of assertion (d) follows immediately, and the proof of the lemma is complete.
\end{proof}
 \begin{proof}[Proof of Lemma \ref{F-supermartingale2G-supermartingale}] As $X$ is an $\mathbb{F}$-semimartingale, then there exist $M\in {\cal{M}}_{loc}(\mathbb{F})$ and $B$ a RCLL $\mathbb{F}$-adapted process with finite variation such that 
 $$X=X_0+M-B,\quad\mbox{and}\quad M_0=B_0=0.$$
 Remark that, due to the fact that $B$ has a finite variation, we have 
 $${\cal T}^{(a)}(B) = (1-G_{-})(1-\widetilde{G} )^{-1}I_{\Rbrack \tau, \infty \Lbrack} \is B.$$
 Then by combining this with \cite[Lemma 4.3-(a)]{ChoulliAlharbi}, we obtain 
  \begin{align}\label{Y2MandB}
     Y &:= {{{\cal E}(I_{\Rbrack \tau , \infty \Lbrack} \is X)}\over{{\cal E} (-(1-G_{-})^{-1} I_{\Rbrack \tau , \infty \Lbrack} \is m )} }=  {\cal E}\left ({\cal T}^{(a)} (X)+(1-G_{-})^{-1}I_{\Rbrack \tau , \infty \Lbrack}  \is {\cal T}^{(a)} (m) \right)\nonumber\\
   &=   {\cal E}\left ({\cal T}^{(a)} (M)+(1-G_{-})^{-1}I_{\Rbrack \tau , \infty \Lbrack}  \is {\cal T}^{(a)} (m) -(1-G_{-})(1-\widetilde{G})^{-1}I_{\Rbrack \tau , \infty \Lbrack} \is{B}\right).
 \end{align}
1) If ${\cal E}(X)$ is a positive $\mathbb F$-supermartingale, or equivalently $X$ is an $\mathbb F$- local supermartingale,  then $B$ has an $\mathbb{F}$-locally integrable variation. Furthermore, $B^{p,\mathbb{F}}$ is nondecreasing and 
$$ {\cal T}^{(a)} (B)= {\cal T}^{(a)} (B-B^{p,\mathbb{F}})+{\cal T}^{(a)} (B^{p,\mathbb{F}})={\cal T}^{(a)} (B-B^{p,\mathbb{F}})+{{1-G_{-}}\over{1-\widetilde{G}}}I_{\Rbrack \tau , \infty \Lbrack} \is {B}^{p,\mathbb{F}}.$$
Thus, by inserting this in (\ref{Y2MandB}), we get 
$$
Y= {\cal E}\left ({\cal T}^{(a)} (M+B-B^{p,\mathbb{F}})+{{I_{\Rbrack \tau , \infty \Lbrack} }\over{1-G_{-}}} \is {\cal T}^{(a)} (m) -{{1-G_{-}}\over{1-\widetilde{G}}}I_{\Rbrack \tau , \infty \Lbrack} \is{B}^{p,\mathbb{F}}\right).$$
This proves that $Y$ is a nonnegative $\mathbb{G}$-local supermartingale, and hence it is a $\mathbb{G}$-supermartingale.\\
To prove the reverse, we assume that $Y$ is a a $\mathbb{G}$-supermartingale, or equiavelntly 
$$K:={\cal T}^{(a)} (M)+(1-G_{-})^{-1}I_{\Rbrack \tau , \infty \Lbrack}  \is {\cal T}^{(a)} (m) -(1-G_{-})(1-\widetilde{G})^{-1}I_{\Rbrack \tau , \infty \Lbrack} \is{B},$$ is a $\mathbb{G}$-local supermartingale. This is equivalent to $ (1-G_{-})(1-\widetilde{G})^{-1}I_{\Rbrack \tau , \infty \Lbrack} \is{B}$ being a $\mathbb{G}$-local submartingale with finite variation. Hence it has a $\mathbb{G}$-local integarble variation and its compensator is nondecreasing, i.e.
$$\left((1-G_{-})(1-\widetilde{G})^{-1}I_{\Rbrack \tau , \infty \Lbrack} \is{B}\right)^{p,\mathbb{G}}=I_{\Rbrack \tau , \infty \Lbrack} \is\left(I_{\{\widetilde{G}<1\}}\is{B}\right)^{p,\mathbb{F}}=I_{\Rbrack \tau , \infty \Lbrack} \is{B}^{p,\mathbb{F}},$$ is nondecreasing. This implies $B\in {\cal A}_{loc}(\mathbb{F})$ and $B^{p,\mathbb{F}}$ is nondecreasing, and hence $X$ is an $\mathbb{F}$-local supermartingale. This completes the proof of the lemma. 
 \end{proof}
\begin{proof}[Proof of Lemma \ref{lemma6.5}] The proof of the lemma is divided into three parts, where we prove the three assertions respectively. \\
{\bf Part 1.} Here we prove assertion (a). To this end, we remark the following
$$
(1-\widetilde{G})I_{\{\Delta{S}\not=0\}}=(1-G_{-})\left(1-f^{(m,1)}(\Delta{S})-g^{(m,1)}(\Delta{S})\right)I_{\{\Delta{S}\not=0\}},$$
$$ (1-\widetilde{G})I_{\{\Delta{S}=0\}}=(1-G_{-})\left(1-{{\widehat{f^{(m,1)}}}\over{1-a}}-m^{(\perp,1)}\right)I_{\{\Delta{S}=0\}},$$
 and 
 $$
 \Delta\widetilde{L}^{\mathbb{F}}={{\widetilde{\Xi}}\over{1+\widetilde{\lambda}\Delta{S}}}-1.$$
 Thus, by combining these remarks, we calculate the following 
\begin{align*}
 & (1- \widetilde{G}) \is H^{(0)} (\widetilde{L}^{\mathbb F}, \mathbb F)\\
   & = (1-\widetilde{G}) \is\langle (\widetilde{L}^{\mathbb F})^c \rangle+ \sum (1-\widetilde{G})\left (\Delta \widetilde{L}^{\mathbb F} - \ln (1+\Delta \widetilde{L}^{\mathbb F}) \right) \\
  & = (1-\widetilde{G}) \is\langle (\widetilde{L}^{\mathbb F})^c \rangle+ \sum (1-\widetilde{G})\left (\Delta \widetilde{L}^{\mathbb F} - \ln (1+\Delta \widetilde{L}^{\mathbb F}) \right)I_{\{\Delta{S}\not=0\}}\\
  &+ \sum (1-\widetilde{G})\left (\Delta \widetilde{L}^{\mathbb F} - \ln (1+\Delta \widetilde{L}^{\mathbb F}) \right)I_{\{\Delta{S}=0\}}\\
  & = \frac{(1-{G}_{-})}{2} \widetilde{\lambda}^{tr} c \widetilde{\lambda} \is A + (1-G_{-})\left(1-f^{(m,1)}-g^{(m,1)}\right)\left({{\widetilde{\Xi}}\over{1+\widetilde{\lambda}^{tr}x}}-1-\ln({{\widetilde{\Xi}}\over{1+\widetilde{\lambda}^{tr}x}})\right)\star\mu\\
  &+ \sum (1-G_{-})\left(1+{{\widehat{f^{(m,1)}}}\over{1-a}}-m^{(\perp,1)}\right)(\widetilde{\Xi}-1-\ln(\widetilde{\Xi}))I_{\{\Delta{S}=0\}}\\
  & = \frac{(1-{G}_{-})}{2} \widetilde{\lambda}^{tr} c \widetilde{\lambda} \is A + (1-G_{-})\left(1-f^{(m,1)}\right)\left(-{{\widetilde{\Xi}(\widetilde{\lambda}^{tr}x)}\over{1+\widetilde{\lambda}^{tr}x}}+\ln(1+\widetilde{\lambda}^{tr}x)\right)\star\mu\\
  &\ +\sum (1-G_{-})(\widetilde{\Xi}-1-\ln(\widetilde{\Xi}))+\sum (1-G_{-}){{\widehat{f^{(m,1)}}}\over{1-a}}(\widetilde{\Xi}-1-\ln(\widetilde{\Xi}))I_{\{\Delta{S}=0\}}\\
  &\ - (1-G_{-})(\widetilde{\Xi}-1-\ln(\widetilde{\Xi}))f^{(m,1)}\star\mu+\mbox{local martingale}
  \end{align*}
  
Then by compensating on both sides and using $\Delta{V}^{\mathbb{F}}=1-(\widetilde{\Xi})^{-1}$, we get 
  \begin{align*}
  &\left ((1- \widetilde{G}) \is H^{(0)} (\widetilde{L}^{\mathbb F}, \mathbb F) \right)^{p, \mathbb F} \\
  & = \frac{(1-G_{-})}{2} \widetilde{\lambda}^{tr} c \widetilde{\lambda} \is A +(1-G_{-})(1-f^{(m,1)})\left (\frac{- \widetilde{\lambda}^{tr}x}{1+\widetilde{\lambda}^{tr}x} +\ln (1+\widetilde{\lambda}^{tr}x)\right ) \star \nu\\
  & + \sum (1-G_{-}) \left (\frac{\Delta \widetilde{V}^{\mathbb F}}{1-\Delta \widetilde{V}^{\mathbb F}} + \ln (1-\Delta \widetilde{V}^{\mathbb F}) \right) + (1-G_{-})\frac {(f-1)(1-f^{(m,1)}) \Delta \widetilde{V}^{\mathbb F}}{1-\Delta \widetilde{V}^{\mathbb F}} \star \nu.
\end{align*}
Then, we can calculate 
\begin{align*}
   \left ((1- \widetilde{G}) \is \widetilde{\cal H}(\mathbb F) \right)^{p, \mathbb F} & = (1-G_{-}) \is \widetilde{V}^{\mathbb F} + \sum (1-G_{-}) \left (-\Delta \widetilde{V}^{\mathbb F} -\ln (1- \Delta \widetilde{V}^{\mathbb F}) \right ) + \left ((1-\widetilde{G}) \is H^{(0)} (\widetilde{L}^{\mathbb F}, \mathbb F)\right)^{p, \mathbb F}\\
   & = (1-G_{-})\is \widetilde{V}^{\mathbb F} + \sum (1-G_{-}) \left (-\Delta \widetilde{V}^{\mathbb F} -\ln (1- \Delta \widetilde{V}^{\mathbb F}) +\frac{\Delta \widetilde{V}^{\mathbb F}}{1-\Delta \widetilde{V}^{\mathbb F}} + \ln (1-\Delta \widetilde{V}^{\mathbb F} )\right )\\
   & + \frac{(1-G_{-})}{2} \widetilde{\lambda}^{tr} c \widetilde{\lambda} \is A + (1-G_{-})(1-f_m^{(a)})\left (\frac{-\widetilde{\lambda}^{tr}x}{1+\widetilde{\lambda}^{tr}x} +\ln (1+\widetilde{\lambda}^{tr}x)\right ) \star \nu \\
   & + (1-G_{-})\frac {(f-1)(1-f^{(1)}) \Delta \widetilde{V}^{\mathbb F}}{1-\Delta \widetilde{V}^{\mathbb F}} \star \nu \\
   & = (1-G_{-}) \left(\widetilde{\lambda}^{tr}b-\frac{1}{2}\widetilde{\lambda}^{tr}c\widetilde{\lambda} \right ) \is A \\
   & + (1-G_{-})\left(\frac{f^{(m,1)}\widetilde{\lambda}^{tr}x}{1+\widetilde{\lambda}x} + (1-f^{(m,1)})\ln(1+\widetilde{\lambda}^{tr}x)-\widetilde{\lambda}^{tr}h \right )\star \nu \\
   & + (1-G_{-}) \left (\frac{(1-f)(a-\hat{f})}{1-\Delta \widetilde{V}^{\mathbb F}} \right)\star \nu + (1-G_{-})(1-f_m^{(a)}) \left(\frac{(a-\hat{f})(f-1)}{1-\Delta \widetilde{V}^{\mathbb F}} \right) \star \nu \\
   & = (1-G_{-}) \left(\widetilde{\lambda}^{tr}b-\frac{1}{2}\widetilde{\lambda}^{tr}c\widetilde{\lambda} \right ) \is A \\
   & + (1-G_{-}) \left( \frac{f^{(m,1)}\widetilde{\lambda}^{tr}x}{(1-\Delta \widetilde{V}^{\mathbb F})(1+\widetilde{\lambda}^{tr}x)} + (1-f^{(m,1)})\ln(1+\widetilde{\lambda}^{tr}x)-\widetilde{\lambda}^{tr} h \right) \star \nu. 
\end{align*}
This proves assertion (a) an completes part 1.\\
{\bf Part 2.} This part proves assertion (b). We remark that due to (\ref{Decomposition4m})  and (\ref{Equation4.20}) we have 
$$\Delta\widetilde{L}^{\mathbb F}=\widetilde{\Xi}(1+\widetilde{\lambda}^{tr}\Delta{S})^{-1}-1\quad\mbox{and}\quad \Delta{m}=\left(f^{(m)}(\Delta{S})+g^{(m)}(\Delta{S})\right)I_{\{\Delta{S}\not=0\}}+{{\widehat{f^{(m)}}}\over{1-a}}I_{\{\Delta{S}=0\}} +\Delta{m}^{\perp}.$$
Thus, by using these equalities, we derive  
\begin{align}
  [\widetilde{L}^{\mathbb F}, m]&= -\widetilde{\lambda}^{tr}c\beta^{(m)} \is A + \sum \Delta{m}\Delta\widetilde{L}^{\mathbb F}I_{\{\Delta{S}\not=0\}}+ \sum \Delta{m}\Delta\widetilde{L}^{\mathbb F}I_{\{\Delta{S}=0\}}\nonumber\\
  & =  -\widetilde{\lambda}^{tr}c\beta^{(m)} \is A + \left (f^{(m)} + g^{(m)}\right)({{\widetilde{\Xi}}\over{1+\widetilde{\lambda}^{tr}x}}-1)\star\mu+ (\widetilde{\Xi}-1)\is{m}^{\perp}\label{Equation(E100)} \\
  &\hskip 1cm- \sum \frac{\widehat{f_m}^{(a)}}{1-a}(\widetilde{\Xi}-1)I_{\{a<1\}}I_{\{\Delta S= 0\}}. \nonumber\end{align}
  Then remark that it is easy to prove that both processes $ f^{(m)}({\widetilde{\Xi}}(1+\widetilde{\lambda}^{tr}x)^{-1}-1)\star(\mu-\nu)$  and $ g^{(m)}({\widetilde{\Xi}}(1+\widetilde{\lambda}^{tr}x)^{-1}-1)\star\mu$ are $\mathbb{F}$-local martingales. Hence, by compensating in both sides of (\ref{Equation(E100)}) and using the latter remark, we get 
  \begin{align*}
    \langle\widetilde{L}^{\mathbb F}, m\rangle^{\mathbb{F}}&= -\widetilde{\lambda}^{tr}c\beta^{(m)} \is A + f^{(m)}({{\widetilde{\Xi}}\over{1+\widetilde{\lambda}^{tr}x}}-1)\star\nu- \underbrace{\sum \widehat{f^{(m)}}(\widetilde{\Xi}-1)}_{=(\widetilde{\Xi}-1)f^{(m)}\star\nu}\\
    &=-\widetilde{\lambda}^{tr}c\beta^{(m)} \is A - f^{(m)}\left({{\widetilde{\Xi}\widetilde{\lambda}^{tr}x}\over{1+\widetilde{\lambda}^{tr}x}}-1\right)\star\nu\end{align*}
  This ends the proof of assertion (b).\\
{\bf Part 3.} Here we prove assertion (c). Thus, we recall 
$$\Delta \widetilde{K}^{\mathbb G} = I_{\Rbrack \tau, \infty \Lbrack} \left ( \dfrac{\widetilde{\Gamma}^{(a)}}{1+\widetilde{\varphi}^{tr} \Delta S} -1\right),$$  and derive 
\begin{align*}
    &H^{(0)} (\widetilde{K}, \mathbb G) \\
    & = \frac{1}{2} \langle(\widetilde{K}^{\mathbb G})^c \rangle + \sum \left ( \Delta \widetilde{K}^{\mathbb G} -\ln (1+ \Delta \widetilde{K}^{\mathbb G}) \right ) \\
    & = \frac{1}{2}\widetilde{\varphi}^{tr} c \widetilde{\varphi} I_{\Rbrack \tau, \infty \Lbrack} \is A + \sum I_{\Rbrack \tau, \infty \Lbrack} \left (  \left (\dfrac{\widetilde{\Gamma}^{(a)}}{1+\widetilde{\varphi}^{tr} \Delta S} -1\right ) -\ln \left (\frac{\widetilde{\Gamma}^{(1)}}{1+\widetilde{\varphi}^{tr} \Delta S}\right )\right )\\
    & = \frac{1}{2}\widetilde{\varphi}^{tr} c \widetilde{\varphi} I_{\Rbrack \tau, \infty \Lbrack} \is A +\sum I_{\Rbrack \tau, \infty \Lbrack} \left (\widetilde{\Gamma}^{(1)} -1 -\ln (\widetilde{\Gamma}^{(1)}) \right )+ \sum I_{\Rbrack \tau, \infty \Lbrack} \left ( \ln (1+\widetilde{\varphi}^{tr} \Delta S)-  \frac{\widetilde{\Gamma}^{(1)} \widetilde{\varphi}^{tr} \Delta S}{1+\widetilde{\varphi}^{tr} \Delta S}\right )\\
    & = \frac{1}{2}\widetilde{\varphi}^{tr} c \widetilde{\varphi} I_{\Rbrack \tau, \infty \Lbrack} \is A +\sum I_{\Rbrack \tau, \infty \Lbrack} \left (\widetilde{\Gamma}^{(1)} -1 -\ln (\widetilde{\Gamma}^{(1)}) \right )+ \left ( \ln (1+\widetilde{\varphi}^{tr} x)-  \frac{\widetilde{\Gamma}^{(1)} \widetilde{\varphi}^{tr}x}{1+\widetilde{\varphi}^{tr}x}\right )I_{\Rbrack \tau, \infty \Lbrack} \star \mu. 
\end{align*}
Then (\ref{Equation4.63}) follows immediately from taking the $\mathbb F$-dual predictable projection in both sides of the above equality, and using afterwards the two facts $(I_{\Rbrack \tau, \infty \Lbrack} \is U)^{p,\mathbb{F}}=(1-G_{-})\is V$ for any $\mathbb{F}$-predictable and nondecreasing process and $(I_{\Rbrack \tau, \infty \Lbrack}W\star\mu)^{p,\mathbb{F}}=(1-G_{-})(1-f^{(m,1)})W\star\nu$ for any nonnegative and $\widetilde{\cal{P}}(\mathbb{F})$-measurable $W$. The equalities in (\ref{Equation4.64}) follow from direct calculations using (\ref{V(1)}) and (\ref{Gamma(1)andf(op)}) . This proves assertion (c), and the proof of the lemma is complete.
\end{proof}


\end{document}